\newtheorem{assumption}{Assumption}
\newtheoremstyle{boldremark}
    {\dimexpr\topsep/2\relax} 
    {\dimexpr\topsep/2\relax} 
    {}          
    {}          
    {\bfseries} 
    {.}         
    {.5em}      
    {}          
\DeclareMathOperator{\E}{\text{E}}
\DeclareMathOperator{\var}{\text{var}}
\DeclareMathOperator{\cov}{\text{cov}}
\DeclareMathOperator*{\argmin}{arg\,min}
\newcommand\norm[1]{\left\lVert#1\right\rVert}
\newcommand\dd{\mathop{}\!\mathrm{d}}
\newtheorem{theorem}{Theorem}
\newtheorem{lemma}{Lemma}
\newtheorem{corollary}{Corollary}
\theoremstyle{boldremark}
\newtheorem{remark}{Remark}
\newtheorem{alemma}{Lemma}[section]
\title{Variable importance measures for heterogeneous treatment effects with survival outcome}
\author{Simon Christoffer Ziersen $\&$ Torben Martinussen \\ \\ \small \textit{Section of Biostatistics, University of Copenhagen, Denmark}}
\date{}
\begin{document}

\maketitle

\begin{abstract}
    Treatment effect heterogeneity plays an important role in many areas of causal inference and within recent years, estimation of the conditional average treatment effect (CATE) has received much attention in the statistical community. While accurate estimation of the CATE-function through flexible machine learning procedures provides a tool for prediction of the individual treatment effect, it does not provide further insight into the driving features of potential treatment effect heterogeneity. Recent papers have addressed this problem by providing variable importance measures for treatment effect heterogeneity. Most of the suggestions have been developed for continuous or binary outcome, while little attention has been given to censored time-to-event outcome.
    In this paper, we extend the treatment effect variable importance measure (TE-VIM) proposed in \textcite{hines} to the survival setting with censored outcome. We derive an estimator for the TE-VIM for two different CATE functions based on the survival function and RMST, respectively. Along with the TE-VIM, we propose a new measure of treatment effect heterogeneity based on the best partially linear projection of the CATE and
    suggest accompanying 
    estimators for that projection. All estimators are based on semiparametric efficiency theory, and we give conditions under which they are asymptotically linear. The finite sample performance of the derived estimators are investigated in a simulation study. Finally, the estimators are applied and contrasted in two real data examples.
\end{abstract}

\textbf{Keywords:} \textit{CATE, debiased learning, heterogeneity, nonparametric inference, survival data, variable importance measure}

\section{Introduction}
Understanding treatment effect heterogeneity is important for personalizing treatment plans as well as informing further pharmacological/medical research. The former point has received much attention in the causal inference community within the past decade, see for example \textcite{kennedy} and \textcite{wagerAthey}. The work has focused on the Conditional Average Treatment Effect (CATE) given by the difference $\tau(x) = \E(Y^1 - Y^0\mid X=x)$, where $Y^1$ and $Y^0$ are the counterfactual outcomes under treatment and no treatment, respectively, and $X$ denotes a set of covariates. Under standard assumptions from the causal inference literature, including the assumption of no unmeasured confounding, the CATE can be identified from the observed data $\mathcal{O} = (Y_i, A_i, X_i)_{i=1}^n$, where $Y_i, A_i, X_i$ correspond to the outcome, treatment and covariates of individual $i$, as $\tau(x) = \E(Y\mid A=1, X=x) - \E(Y\mid A=0, X=x)$. 

 Considering counterfactual survival times $T^1$ and $T^0$, and letting $Y^a(t) = \mathbb{1}(T^a\geq t)$, $a = 0,1$, the CATE in the survival setting may be defined as $\tau(x;t) = \E\{Y^1(t) - Y^0(t)\mid X=x\}$, which, under the same causal assumptions, is identified by the observed data as $\tau(x;t) = S(t\mid A=1, X=x) - S(t\mid A=0, X=x)$ for a specific time horizon $[0,t]$, and where $S(t\mid A=a, X=x)$  denotes the conditional survival function.
 Survival analysis is often complicated by the fact that one does not observe the full data, but only a censored version given by $\mathcal{O} = (\tilde{T}_i, \Delta_i, A_i, X_i)_{i=1}^n$ where $\tilde{T}_i = T_i \wedge C_i$ for a given censoring time $C_i$ and $\Delta_i = \mathbb{1}(T_i \leq C_i)$. Under the additional assumption of conditional independent censoring given  $A$ and $X$, the CATE is still identified from the observed data as the above difference in conditional survival functions. Estimation of the CATE in the survival context has received some attention in the recent years: \textcite{cui} extend the work of \textcite{wagerAthey} to a survival setting, \textcite{hu} compares different machine learning methods for estimating the CATE in a survival setting and \textcite{xu} discuss the use of different meta-learners in combination with arbitrary machine learning methods.  

CATE estimation provides a tool for prediction of the individual treatment effect, but as the methods of obtaining such estimates are often based on machine learning, it provides little information as to which features are driving the observed heterogeneity (if any at all). As such, \textcite{levy} derives a measure of overall treatment effect heterogeneity as the variance of the treatment effect (VTE), given by $\var\{\tau(X)\}$, \textcite{wei} derives an estimator for sub-group treatment effects, and \textcite{boileau} constructs a general framework for identification of treatment effect modifiers, as a weighted covariance of individual covariates and the CATE, which they also extend to a survival setting. Their approach can be viewed in terms of the \textit{best linear projection} of the CATE-function, an approach also discussed in \textcite{vanLaan} and \textcite{semenova}, but where the projection is used to approximate a target function (such as the CATE-function) rather than summary statistics of the CATE itself. Finally, \textcite{hines} develop a treatment effect variable importance measure (TE-VIM), which measures the amount of the VTE explained by a given subset of covariates. Their derived estimand has the interpretation of a non-parametric ANOVA and can employ arbitrary machine learning methods for nuisance parameter estimation. 

In this paper, we extend firstly the TE-VIM of \textcite{hines} for two different CATE functions for survival data. The derived estimator is based on semiparametric efficiency theory, and the efficient influence function (EIF) corresponding to the TE-VIM with censored data is seen to share some structure to the one proposed by \textcite{hines}. This connection is found to hold for essentially all $\tau(x)$, when the EIF corresponding to the ATE, $\E\{\tau(X)\}$, is linear in the ATE. Secondly, we derive a new measure of treatment effect heterogeneity inspired by the assumption lean inference approach (\cite{stijn}) and derive an estimator based on its corresponding efficient influence function. The new measure is derived as the \textit{best partially linear projection} of the CATE and it can be interpreted as a regression coefficient, expressing the association between the CATE and a single covariate of interest. Other authors have suggested a similar approach (\cite{boileau}, \cite{cui}) for treatment effect variable importance, using the best linear projection of the CATE as a measure of heterogeneity. However, as we discuss in the Appendix, the error made by the projecting the CATE onto the linear model is larger compared to the projection onto the partially linear model, thus showing that our approach captures more of the heterogeneity through a single covariate compared to the best linear projection. Furthermore, the derived parameter is seen to provide a natural interpretation of the association between the CATE and a given covariate when the partially linear model does not hold for the CATE function, as it is given as weighted average of the conditional covariance of the CATE and the covariate in question.

We give assumptions under which the proposed estimators are asymptotically normal and locally efficient and investigate their finite sample performance in a simulation study, using random survival forests (\cite{ishwaran}) for nuisance parameter estimation. Finally, we illustrate and contrast the two approaches in two data examples. The first example is also studied in \textcite{cui} and \textcite{hines}. The second example considers data from the LEADER study, \textcite{marso}. 

In Section 2 we state the notation and setup used in the paper and in Section 3 we define two target parameters, each being a measure of treatment effect heterogeneity. In Section 4, we derive the efficient influence functions for the two target parameters and utilize these to construct cross-fitted one-step estimators. The asymptotic distributions of the estimators are then proved under high level assumptions on the nuisance parameter estimates. In Section 5, the finite sample performance of the derived estimators are investigated in a simulation study and in Section 6 we apply the estimators to two data examples. Section 7 concludes the paper with some final remarks.

\section{Notation and Setup}
Let $T$ and $C$ denote the survival and censoring time, respectively. Due to censoring, we do not observe $T$, but rather $\tilde{T} = T \wedge C$ together with the event indicator $\Delta = \mathbb{1}\{T \leq C\}$. Let $A \in \{0,1\}$ denote the  treatment indicator  at baseline and let $X = (X_1, ..., X_d)$ denote baseline covariates. With $O_i=(\tilde{T}_i, \Delta_i, A_i, W_i)$,
the observed data  $\mathcal{O}$  consist of $O_1, ..., O_n$ that are assumed  i.i.d. with distribution $P_0 \in \mathcal{M}$, where $\mathcal{M}$ is the set of all probability measures corresponding to a non-parametric model. 
Let $N(t) = \mathbb{1}\{\tilde{T} \leq t, \Delta = 1\}$ be the observed counting process for the event of interest and let $\lambda(t|a,x)$, $\lambda_c(t|a,x)$ denote the conditional hazard for the survival and censoring distribution, respectively, and let $\Lambda(t|a,x)$, $\Lambda_c(t|a,x)$ denote the corresponding cumulative hazard functions. Furthermore we denote the conditional survival function by $S(t|a,x)$, and let $S_c(t|a,x) $ denote the conditional survival function of the censoring distribution. We let $\pi(a|x) = P(A=a|X=x)$ denote the propensity score and $\mu$ is the distribution of $X$. Throughout, $dM(t\mid A, X) = dN(t) - \mathbb{1}(\tilde{T} \geq t) d\Lambda(t|A, X)$ is the counting process martingale increment given $A$ and $X$.

\noindent
To define causal parameters, we introduce the variable $Y(t) = \mathbb{1}\{T \geq t \} $ and define $Y^a(t)$ as the counterfactual outcome, that is, the outcome of a person if he or she, possibly contrary to the fact, had received treatment $a$. Let 
$$
\tau(x;t) = \E\{Y^1(t) - Y^0(t) | X = x\}
$$ 
be the CATE function, i.e. the average treatment effect conditional on  $X=x$ for some fixed time-horizon $t$, which is left out from the notation throughout the paper, so we write $\tau(x) = \tau(x;t)$. Under standard causal assumptions we can identify $\tau$ through the observed data as 
\begin{equation}\label{cateSurv}
\tau(x) = S(t|A=1, X=x) - S(t|A=0, X=x).    
\end{equation}
An alternative $\tau(x)$ is
\begin{equation}\label{cateRMST}
\tau(x) = E\left(T^{1}\wedge t-T^{0}\wedge t \mid X=x\right)=\int_0^{t} S(u\mid 1,x)\, du-\int_0^{t} S(u\mid 0,x)\, du.
\end{equation}
We will consider both in what follows, where we will refer to the first as the \textit{survival function setting} and to the second as the \textit{restricted mean survival time setting (RMST)}. 
Furthermore we define 
\begin{align*}
\tau_l(x) = \E\{\tau(X)|X_{-l} = x_{-l}\}
\end{align*}
as the conditional expectation of the CATE-function, where we fix all variables except $X_l$, $l \subseteq \{1, ..., d\}$, as we define $X_{-l}$ to be the covariates with an index not contained in $l$. We will use the notation $\tau_d = \E\{\tau(X)\}$ to denote the average treatment effect. 
Furthermore, we introduce the nuisance parameter $\nu = (\Lambda, \Lambda_c, \tau_l, \mu)$. Finally, we let $\norm{\cdot}$ denote the $L_2(P)$-norm, unless otherwise specified, such that $\norm{f} = \left(\int f^2 \dd P\right)^{\frac{1}{2}}$.

\section{Target parameter}\label{sec:targetParameter}
\subsection{Treatment effect variable importance measure}
As in \textcite{hines} we define
\begin{align*}
\Theta_l \equiv \E[\var\{\tau(X) \ | \ X_{-l} = x_{-l}\}] = \var\{\tau(X)\} - \var\{\tau_l(X)\} \geq 0.
\end{align*}
With a slight abuse of notation we denote the VTE: $\Theta_d = \var\{\tau(X)\}$. We note that $\Theta_l$ can be interpreted as the amount of heterogeneity not already explained by $X_{-l}$, as $\Theta_l$ is large when a large amount of the VTE is explained by $X_l$. The proposed treatment effect variable importance measure (TE-VIM) is defined by re-scaling $\Theta_l$ by the VTE:
\begin{equation}\label{psi}
\Psi_l \equiv \frac{\Theta_l}{\Theta_d} = 1 - \frac{\var\{\tau_l(X)\}}{\var\{\tau(X)\}} \nonumber
\end{equation}
with values in $[0,1)$. We can interpret $\Psi_l$ as a nonparametric analog of an ANOVA statistic, which is close to one when a large amount of the VTE is explained by $X_l$ and close to zero when a small amount of the VTE is explained by $X_l$. 

\subsection{Best partially linear projection}
Along with the TE-VIM, we consider an alternative target parameter inspired by \textcite{vansteelandt}, which is given by
\begin{equation}\label{gamma}
    \Omega_j \equiv \frac{\Gamma_j}{\chi_j} = \frac{\E\left [\cov\{X_j, \tau(X) \mid X_{-j}\} \right]}{\E\{\var (X_j\mid X_{-j})\}} \nonumber
\end{equation}
for a single covariate $X_j$.
It is seen that the parameter may depend on the scale of the covariate of interest, $X_j$, and as such, we propose that the variable importance of $X_j$ is based on the corresponding test-statistic   concerning  the hypothesis that $\Omega_j=0$. 
In contrast to $\Psi_l$, the parameter $\Omega_j$ measures the heterogeneity explained by a single covariate $X_j, \ j \in \{1,\cdots, d \}$, whereas $\Psi_l$ may be used to determine the heterogeneity explained by, possibly non-singular, sets of covariates. This makes $\Psi_l$ potentially better suited for incorporating subject matter knowledge, where naturally correlated covariates can be grouped together, where $\Omega_j$ serves as a variable importance measure to be used for single covariates.

The estimand $\Omega_j$ can be expressed by the linear term in the projection of $\tau$ onto the space of partially linear functions. To elaborate, let $\beta \in \mathbb{R}$ and let $w$ be some measurable function of $X_{-j}$ with finite variance. Without loss of generality, define
$$
\tau(x) = \beta x_{j} + w(x_{-j}) + R(x_j, x_{-j})
$$
for some function $R$ and let
$$
(\beta^*, w^*) = \argmin_{\beta, w}\E\{ R(X_j, X_{-j})^2 \} = \argmin_{\beta, w}\E\{ [\tau(X) - \beta X_j - w(X_{-j})]^2 \}
$$
be the least squares projection of $\tau$ onto the partially linear model. Then $\Omega_j = \beta^*$. 
If $R = 0$, then for a given level of $x_{-j}$, the parameter $\beta$ denotes the treatment effect modification given by $x_j$. When $R \neq 0$, $\Omega_j$ is the treatment effect modification parameter that minimizes the error made by summarizing the effect of $x_j$ in a single value.
We note that other authors have looked at the \textit{best linear projection} as a treatment effect variable importance measure (\cite{semenova}, \cite{boileau}, \cite{cui}, \cite{vanLaan}). In our setting, this corresponds to the least squares projection of $\tau$ onto the space of linear models. In Appendix \ref{App_A}, we give a discussion of partially linear versus linear projections of $\tau$ in terms of the size of the error given by the remainder term $R$. In a given application, the partially linear model may fail to hold for $\tau$, but $\Omega_j$ still provides an interpretable measure of heterogeneity as a weighted average of the conditional covariance of $\tau$ and $X_j$.

\section{Estimation}\label{sec:estimation}
The estimation of $\Psi_l$ goes through estimation of $\Theta_l$ and $\Theta_d$, separately. Likewise, an estimator of $\Omega_j$ is obtained from estimators of $\Gamma_j$ and $\chi_j$. The estimation of individual parameters is based on semiparametric efficiency theory. For an introduction to this methodology see for instance (\cite{kennedydouble}, \cite{hinesdemystifying}, \cite{robinslaan}, \cite{vaart}, Ch. 25). The theory revolves around the so-called efficient influence function (EIF), which characterizes the lower bound on the asymptotic variance of any regular estimator of a pathwise differentiable parameter in a non-parametric setting. The EIF is related to the target parameter and the model $\mathcal{M}$, and it can be calculated without reference to any estimator. Once it is known, it can be leveraged to construct an estimator that is asymptotically linear with the EIF as its influence function. Several techniques exist for constructing such estimators, and they all share the convenient property that it is possible to use data-adaptive nuisance parameter estimators (under some conditions), while still obtaining parametric-like inference on the target parameter.
Hence, estimation of $\Psi_l$ and $\Omega_j$ will follow the same pattern, where the corresponding EIF is calculated initially and then used to construct  an estimator for the specific target parameter in question.

\subsection{Estimation of $\Psi_l$}
\subsubsection{Efficient influence function}
The two target parameters $\Theta_l$ and $\Psi_l$ are functions of $\var\{\tau(X)\}$ and $\var\{\tau_l(X)\}$ so their efficient influence functions can be derived from the EIFs of $\var\{\tau(X)\}$ and $\var\{\tau_l(X)\}$ using the chain rule (cf. \cite{vaart}, Ch. 25.7.). Define 
$$H(u,t\mid a, x) = \int_u^t S(v\mid a, x)\dd v$$
and 
$$g(A,X) = \left(\frac{\mathbb{1}(A=1)}{\pi(1\mid X)} - \frac{\mathbb{1}(A=0)}{\pi(0\mid X)} \right).$$
We have the following result.
\begin{theorem}\label{EIFs}
Let $\tau(x)$ be given by \eqref{cateSurv}. The efficient influence functions of $\var\{\tau(X)\}$ and $\var(\tau_l(X))$ are given by $\tilde{\psi}_{\var\{\tau(X)\}}$ and $\tilde{\psi}_{\var\{\tau_l(X)\}}$, respectively, where
\begin{align}
\tilde{\psi}_{\var\{\tau(X)\}} =& [\tau(X) - \E\{\tau(X)\}]^2 - \var\{\tau(X)\} - 2[\tau(X) - \E\{\tau(X)\}] \nonumber  \\
& \phantom{\tau(X) - \E\{\tau(X)\}]^2 } \times g(A,X)\int_0^t \frac{S(t\mid A, X)}{S(u\mid A, x)S_c(u\mid A, X)} \dd M(u\mid A, X), \nonumber \\
\tilde{\psi}_{\var\{\tau_l(X)\}} =& [\tau_l(X) - \E\{\tau_l(X)\}]^2 - \var\{\tau(X)\} - 2[\tau_l(X) - \E\{\tau_l(X)\}] \nonumber \\
& \times \left(\tau_l(X) - \tau(X) + g(A,X)\int_0^t \frac{S(t\mid A, X)}{S(u\mid A, X)S_c(u\mid A, X)} \dd M(u\mid A, X) \right).  \nonumber
\end{align}
For $\tau(x)$ given by \eqref{cateRMST} we have
\begin{align}
\tilde{\psi}_{\var\{\tau(X)\}} =& [\tau(X) - \E\{\tau(X)\}]^2 - \var\{\tau(X)\} - 2[\tau(X) - \E\{\tau(X)\}] \nonumber  \\
& \times g(A,X)\int_0^t \frac{H(u,t, A, X)}{S(u\mid A, X)S_c(u\mid A, X)} \dd M(u\mid A, X), \nonumber \\
\tilde{\psi}_{\var\{\tau_l(X)\}} =& [\tau_l(X) - \E\{\tau_l(X)\}]^2 - \var\{\tau(X)\} - 2[\tau_l(X) - \E\{\tau_l(X)\}] \nonumber \\
&\times \left(\tau_l(X) - \tau(X) + g(A,X)\int_0^t \frac{H(u,t, A, X)}{S(u\mid A, X)S_c(u\mid A, X)} \dd M(u\mid A, X) \right).  \nonumber
\end{align}
In both the survival function and RMST setting the EIF's corresponding to $\Theta_l$ and $\Psi_l$ are given by $\tilde{\psi}_{\Theta_l}$ and $\tilde{\psi}_{\Psi_l}$, respectively, where
\begin{align*}
\tilde{\psi}_{\Theta_l} &= \tilde{\psi}_{\var(\tau(X))} - \tilde{\psi}_{\var(\tau_l(X))}, \\
\tilde{\psi}_{\Psi_l} &= \frac{1}{\var(\tau(X))}\left(\tilde{\psi}_{\Theta_l}(O) - \Psi_l \tilde{\psi}_{\var(\tau(X))}(O) \right). \\
\end{align*}
\end{theorem}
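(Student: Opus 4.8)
The plan is to obtain every influence function in the statement from just two ingredients and then propagate them with the chain rule for pathwise-differentiable parameters (cf.~\cite{vaart}, Ch.~25.7). The first ingredient is the EIF of the relevant counterfactual survival functional under right censoring; the second is a general formula for the EIF of the variance of a conditional expectation. Since $\Theta_l$, $\Psi_l$, $\var\{\tau(X)\}$ and $\var\{\tau_l(X)\}$ are all smooth functionals of these building blocks, no separate submodel computation is needed for each.

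First I would construct a single \emph{pseudo-outcome} for the CATE. In the survival function setting the building block is the counterfactual survival probability $\E\{S(t\mid a,X)\}$, whose efficient influence function under conditionally independent censoring is the augmented inverse-probability-of-censoring-weighted expression
\[
S(t\mid a,X) - \E\{S(t\mid a,X)\} - \frac{\mathbb{1}(A=a)}{\pi(a\mid X)}\, S(t\mid a,X)\int_0^t \frac{\dd M(u\mid a,X)}{S(u\mid a,X)\,S_c(u\mid a,X)}.
\]
Differencing over $a=1,0$ and setting $\varphi_\tau(O) = \tau(X) - g(A,X)\int_0^t \frac{S(t\mid A,X)}{S(u\mid A,X)\,S_c(u\mid A,X)}\,\dd M(u\mid A,X)$, one checks that $\varphi_\tau(O)-\E\{\tau(X)\}$ is the EIF of the ATE and, by the tower property, that $\E\{\varphi_\tau(O)\mid X\}=\tau(X)$ and $\E\{\varphi_\tau(O)\mid X_{-l}\}=\tau_l(X)$. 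Hence the same $\varphi_\tau$ reproduces both $\tau$ and $\tau_l$ as conditional means. For the RMST setting the identical derivation applies to the building block $\E\{\int_0^t S(u\mid a,X)\,\dd u\}$, which only replaces the weight $S(t\mid A,X)$ inside the martingale integral by $H(u,t\mid A,X)$.

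Next I would use (or re-derive by differentiating $\var\{m_\epsilon(X)\}=\E_\epsilon\{m_\epsilon(X)^2\}-(\E_\epsilon\{m_\epsilon(X)\})^2$ along a score $s$) the identity that, whenever $m(X)=\E\{\varphi(O)\mid X\}$ for a pseudo-outcome $\varphi$, the EIF of $\var\{m(X)\}$ is
\[
[m(X)-\E\{m(X)\}]^2 - \var\{m(X)\} + 2[m(X)-\E\{m(X)\}]\bigl(\varphi(O)-m(X)\bigr).
\]
Taking $m=\tau$, $\varphi=\varphi_\tau$ yields $\tilde{\psi}_{\var\{\tau(X)\}}$ at once, since $\varphi_\tau(O)-\tau(X)$ is the negative of the IPCW martingale term. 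Taking $m=\tau_l$ with the \emph{same} $\varphi_\tau$ (legitimate because $\E\{\varphi_\tau\mid X_{-l}\}=\tau_l$) gives $\tilde{\psi}_{\var\{\tau_l(X)\}}$ after writing the residual as $\varphi_\tau(O)-\tau_l(X)=\bigl(\tau(X)-\tau_l(X)\bigr)+\bigl(\varphi_\tau(O)-\tau(X)\bigr)$, which is exactly the bracketed factor $\tau_l(X)-\tau(X)+g(A,X)\int_0^t\cdots\,\dd M$. Finally, since $\Theta_l=\var\{\tau(X)\}-\var\{\tau_l(X)\}$ is linear and $\Psi_l=\Theta_l/\var\{\tau(X)\}$ is a smooth ratio, the chain/quotient rule delivers $\tilde{\psi}_{\Theta_l}$ and $\tilde{\psi}_{\Psi_l}=\var\{\tau(X)\}^{-1}\bigl(\tilde{\psi}_{\Theta_l}-\Psi_l\,\tilde{\psi}_{\var\{\tau(X)\}}\bigr)$ immediately. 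The RMST statements follow by repeating the two variance computations with $H$ in place of $S(t\mid A,X)$.

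I expect the main obstacle to lie in the first step: rigorously establishing the EIF of the censored counterfactual survival (and RMST) functional, i.e.\ projecting the full-data influence function onto the observed-data tangent space and verifying that the censoring score is orthogonal so that the $S_c$-weighted martingale integral has conditional mean zero. Once this pseudo-outcome is secured, the variance identity and the ratio calculation are comparatively mechanical. A secondary point needing care is justifying that $\varphi_\tau$ serves simultaneously as the pseudo-outcome for the coarser conditional mean $\tau_l$, so the variance identity can be reused without deriving a separate influence function for $\tau_l$; this is where the decomposition $\varphi_\tau-\tau_l=(\tau-\tau_l)+(\varphi_\tau-\tau)$ earns its keep.
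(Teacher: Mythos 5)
Your proposal is correct, and it reaches the theorem by a genuinely different organization than the paper's. The paper proves Theorem \ref{EIFs} from the inside out: it computes the Gateaux derivative of $\Lambda(t\mid a,x)$ under a Dirac contamination of $P$, obtains from it the pointwise derivative of $\tau$ (Lemma \ref{gat:tau}, via Duhamel/chain rule) and of $\tau_l$ (Lemma \ref{gat:taus}, using the conditional-expectation rule of Lemma \ref{hines}), and then differentiates each variance functional directly, once for $\var\{\tau(X)\}$ and once for $\var\{\tau_l(X)\}$; the compact pseudo-outcome form $(\varphi-\tau_l)^2-(\varphi-\tau)^2-\Theta_l$ is only derived afterwards, by algebra, in Corollary \ref{hinesEIF}. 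You invert this order: you first establish the reusable identity that the EIF of $\var\{m(X)\}$ is $[m(X)-\E\{m(X)\}]^2-\var\{m(X)\}+2[m(X)-\E\{m(X)\}]\{\varphi(O)-m(X)\}$ for an ATE-type pseudo-outcome $\varphi$, and then specialize twice, with the decomposition $\varphi-\tau_l=(\tau-\tau_l)+(\varphi-\tau)$ reproducing the bracketed factor in the statement; the RMST case is the same computation with $H(u,t\mid A,X)$ replacing $S(t\mid A,X)$. This buys economy (one variance computation instead of two, and, as the paper's own Remark 1 observes, immediate portability to any data setting where the ATE EIF is linear in the ATE), and your tangent-space-projection route to the pseudo-outcome is a legitimate alternative to the paper's Dirac-perturbation computation. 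The one place your sketch leans on an unproven step is the reuse for $\tau_l$: the tower property $\E\{\varphi\mid X_{-l}\}=\tau_l(X)$ is, by itself, \emph{not} the hypothesis your variance identity consumes — what is needed is the pointwise Gateaux derivative $\partial_\epsilon\tau_{l,P_\epsilon}(x)=\mathbb{1}(X_{-l}=x_{-l})f(x_{-l})^{-1}\{\varphi(O)-\tau_l(x)\}$ at the coarser conditioning level, which mixes the perturbation of the outer conditional expectation with that of the inner $\tau$; establishing it is exactly the content of Lemmas \ref{hines} and \ref{gat:taus}, which you would need to import or reprove (you flag this correctly as the point "needing care"). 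Finally, a byproduct of your identity worth recording: it returns the centering constant $-\var\{\tau_l(X)\}$ in $\tilde{\psi}_{\var\{\tau_l(X)\}}$, whereas the theorem as displayed subtracts $\var\{\tau(X)\}$; since $\E\{\tau_l(X)\}=\E\{\tau(X)\}$ the quadratic centering agrees, but mean-zero-ness of the EIF forces the subtracted constant to be $\var\{\tau_l(X)\}$ (consistent with Corollary \ref{hinesEIF}), so your version is the correct one and the constant displayed in the paper's statement (and carried through its appendix computation) is a typo.
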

\begin{proof}
    See Appendix B.
\end{proof}
Before moving to estimation of $\Theta_l$ (and $\Theta_d$) we state some results from ATE-estimation. 
The ATE  $\tau_d$ has an EIF known from the literature in the survival function setting (e.g. \cite{heleneFrank} and \cite{westling}), and we may write it in terms of the uncentered EIF, $\varphi$, defined as:
\begin{equation}\label{pseudo}
\varphi(O) - \tau_d = \varphi_{1}(O) - \varphi_{0}(O) - \tau_d
\end{equation}
with 
\begin{equation}\label{eq:varphi_surv_a}
\varphi_{a}(O) = S(t \mid A=a,X) - \frac{\mathbb{1}(A=a)}{\pi(a \mid X)}  \int_0^t\frac{S(t \mid A, X)}{S(u- \mid A, X)S_C(u- \mid A, X)} dM(u \mid A, X).    
\end{equation}
The Gateaux derivative of $\tau(x)$ in the RMST setting is
$$
\frac{\mathbb{1}(X=x)}{f(x)}\frac{\mathbb{1}(A=a)}{\pi(a \mid X)}  \int_0^t\frac{-H(u,t\mid A, X)}{S(u- \mid A, X)S_C(u- \mid A, X)} dM(u \mid A, X),  
$$
see Lemma \ref{gat:tau} in Appendix B, 
from which it follows that $\tau_d$ has efficient influence function given by 
\begin{equation}
\varphi(O) - \tau_d = \varphi_{1}(O) - \varphi_{0}(O) - \tau_d \nonumber
\end{equation}
with 
\begin{equation}\label{eq:varphi_RMST_a}
\varphi_{a}(O) = \int_0^tS(u\mid a, X) \dd u - \frac{\mathbb{1}(A=a)}{\pi(a \mid X)}  \int_0^t\frac{H(u,t\mid A, X))}{S(u- \mid A, X)S_C(u- \mid A, X)} dM(u \mid A, X), 
\end{equation}
analogous to the survival function setting. The uncentered EIF, $\varphi$, can be parameterized by parts of the nuisance parameter, $(\pi, \Lambda, \Lambda_c)$, and we write $\varphi(\pi, \Lambda, \Lambda_c)$ when we want to be explicit about the nuisance parameters considered, which will be the case when we consider estimators for $\varphi$, where $\hat{\varphi} = \varphi(\hat{\pi}, \hat{\Lambda}, \hat{\Lambda}_c)$ is an obvious candidate.
We can now restate the EIF's given in Theorem \ref{EIFs} so that the structure is similar to that given in \textcite{hines}, but with the $\varphi$, $\tau$ and $\tau_l$ having different expressions depending on the specific estimand of interest. 
\begin{corollary}\label{hinesEIF}
The EIF of $\Theta_l$ and $\Theta_d$ is given by $\tilde{\psi}_{\Theta_l}$ and $\tilde{\psi}_{\Theta_d}$, respectively, where
\begin{align}
 & \quad \tilde{\psi}_{\Theta_l} = \{\varphi(O)-\tau_l(X)\}^2 - \{\varphi(O)-\tau(X)\}^2 - \Theta_l \label{eq:eifThetas} \\
 & \quad \tilde{\psi}_{\Theta_d} = (\varphi(O)-\tau_d)^2 - \{\varphi(O)-\tau(X)\}^2 - \Theta_d   \label{eq:eifThetad} \\
 & \quad \tilde{\psi}_{\Psi_l} = \frac{1}{\Theta_d}\left(\tilde{\psi}_{\Theta_l} - \Psi_l \tilde{\psi}_{\Theta_d} \right) \label{eq:eifPsis}
\end{align}
\end{corollary}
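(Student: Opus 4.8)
The plan is to obtain Corollary \ref{hinesEIF} as a purely algebraic rewriting of Theorem \ref{EIFs}, the crux being to recognise the uncentered ATE influence function $\varphi$ inside the martingale correction terms. Throughout write $\tau_d = \E\{\tau(X)\}$ and $\Theta_d = \var\{\tau(X)\}$, and let
$$
D(O) = g(A,X)\int_0^t \frac{S(t\mid A,X)}{S(u\mid A,X)S_c(u\mid A,X)}\dd M(u\mid A,X)
$$
denote the correction term appearing in the survival-function EIFs of Theorem \ref{EIFs} (with $H(u,t,A,X)$ in place of $S(t\mid A,X)$ in the RMST setting).

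First I would establish the single identity that unifies the two settings, namely
$$
\varphi(O) = \tau(X) - D(O), \qquad \text{equivalently} \qquad D(O) = \tau(X) - \varphi(O).
$$
This follows by comparing the definition of $D$ with \eqref{pseudo}--\eqref{eq:varphi_surv_a} (respectively \eqref{eq:varphi_RMST_a}): since $g(A,X) = \mathbb{1}(A=1)/\pi(1\mid X) - \mathbb{1}(A=0)/\pi(0\mid X)$ and $\varphi = \varphi_1 - \varphi_0$, the two point-mass terms assemble into $S(t\mid 1,X) - S(t\mid 0,X) = \tau(X)$ (respectively the difference of integrated survival functions in the RMST case), while the two weighted martingale integrals assemble into exactly $D(O)$. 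Once this identity is in hand, every remaining step is setting-independent, which is precisely why the corollary can be stated uniformly for both estimands.

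Next I would substitute $D(O) = \tau(X) - \varphi(O)$ into the two variance EIFs from Theorem \ref{EIFs} and complete the square. For $\var\{\tau(X)\}$, writing $a = \tau(X) - \tau_d$, the correction becomes $-2a\{\tau(X) - \varphi(O)\}$, and the elementary identity $a^2 - 2a\{\tau(X) - \varphi(O)\} = \{\varphi(O)-\tau_d\}^2 - \{\varphi(O)-\tau(X)\}^2$ delivers \eqref{eq:eifThetad}, after using $\tilde{\psi}_{\Theta_d} = \tilde{\psi}_{\var\{\tau(X)\}}$. For $\var\{\tau_l(X)\}$ the substitution telescopes, $\tau_l(X) - \tau(X) + D(O) = \tau_l(X) - \varphi(O)$, so the correction is $-2\{\tau_l(X)-\tau_d\}\{\tau_l(X)-\varphi(O)\}$ and the same square-completion (now with $\tau_l(X)$ in the role of $\tau(X)$, and using $\E\{\tau_l(X)\} = \tau_d$) gives $\{\varphi(O)-\tau_d\}^2 - \{\varphi(O)-\tau_l(X)\}^2 - \var\{\tau_l(X)\}$. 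Taking the difference $\tilde{\psi}_{\Theta_l} = \tilde{\psi}_{\var\{\tau(X)\}} - \tilde{\psi}_{\var\{\tau_l(X)\}}$, the common $\{\varphi(O)-\tau_d\}^2$ cancels and the constants combine via $\Theta_l = \Theta_d - \var\{\tau_l(X)\}$ to produce \eqref{eq:eifThetas}. Finally \eqref{eq:eifPsis} is immediate, being the last display of Theorem \ref{EIFs} rewritten with $\var\{\tau(X)\} = \Theta_d$ and $\tilde{\psi}_{\var\{\tau(X)\}} = \tilde{\psi}_{\Theta_d}$.

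The only non-mechanical step is the identity $\varphi = \tau - D$; after that the argument is square-completion bookkeeping. The one place demanding care is keeping the centering constants straight: the mean-zero requirement for an EIF forces the constant subtracted in $\tilde{\psi}_{\var\{\tau_l(X)\}}$ to be $\var\{\tau_l(X)\}$, which one confirms from $\E[\{\tau_l(X)-\tau_d\}\{\tau_l(X)-\tau(X)\}] = 0$ (valid because $\tau_l(X) = \E\{\tau(X)\mid X_{-l}\}$, so conditioning on $X_{-l}$ annihilates the second factor), and it is precisely this constant that yields the $-\Theta_l$ in \eqref{eq:eifThetas}.
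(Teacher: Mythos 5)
Your proof is correct and takes essentially the same route as the paper's: both recognise the identity $\varphi(O) = \tau(X) - D(O)$ (so the martingale correction equals $\varphi(O) - \tau(X)$, respectively $\tau_l(X) - \varphi(O)$ after telescoping), complete the square to obtain $\{\varphi(O)-\tau_d\}^2 - \{\varphi(O)-\tau(X)\}^2 - \var\{\tau(X)\}$ and its $\tau_l$-analogue, and then subtract and apply the chain rule. Your explicit verification that the centering constant in $\tilde{\psi}_{\var\{\tau_l(X)\}}$ must be $\var\{\tau_l(X)\}$, confirmed via $\E[\{\tau_l(X)-\tau_d\}\{\tau_l(X)-\tau(X)\}]=0$, is a welcome point of care, since the displayed statement of Theorem \ref{EIFs} writes $\var\{\tau(X)\}$ in that position, evidently a typo.
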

\begin{proof}
    See Appendix B.
\end{proof}
Note that the above EIFs have the same structure whether or not we are in the survival function setting or in the RMST setting, but with $\varphi$ having a different expression. For the rest of the paper we will use the form of the EIFs given in Corollary \ref{hinesEIF}. 
\begin{remark}
The fact that the structure of the EIFs is identical to the one in \textcite{hines}, stems from the definition of EIFs as derivatives for which the chain-rule apply. From the derivations of the EIFs in the Appendix, it is seen that for any function $\tau(x)$ with Gateaux derivative given by $\mathbb{1}(X=x)g(z)/f(x)$, for some function $g$ and some variable $z$, the EIFs will have the same structure as in \eqref{eq:eifThetas} and \eqref{eq:eifThetad} with $\varphi$ being the uncentered EIF of $\E\{\tau(X)\}$ where $\varphi = \tau + g$. Thus, the framework of \textcite{hines} can readily be extended to any data setting by calculating the EIF of the ATE in that setting and denoting the uncentered version $\varphi$. The properties of estimators derived by this approach will have to be studied case by case, though, as will be apparent in the following.   
\end{remark}

\subsubsection{Cross-fitted one-step estimators}

The EIFs of $\Theta_l$ and $\Theta_d$ are used to construct estimators that are asymptotically linear with influence function given by the EIFs above, from which  they are seen to be locally asymptotically efficient and asymptotically normal distributed. Given two such estimators, $\hat{\Theta}_l$ and $\hat{\Theta}_d$, an application of the delta method gives that $\hat{\Psi}_l = \hat{\Theta}_l/\hat{\Theta}_d$ is asymptotically linear with influence function given by \eqref{eq:eifPsis} (see \cite{vaart}, Ch. 25.7). For readability and ease of notation we only consider construction of an estimator for $\Theta_l$ in the following, but since the EIFs of $\Theta_l$ and $\Theta_d$ have a similar structure, the derived estimators will be the same with $l$ replaced by $d$.

There are different ways of constructing such estimators; one-step estimators, estimating equation based, and targeted minimum loss-based estimators (TMLE). All of them require that the nuisance parameters are estimated fast enough such that the resulting remainder term and empirical process term (see Section \ref{sec:appAN} in the Appendix) converge at rate $n^{-1/2}$. We will focus on the estimating equation based estimator, which is given as the solution to $\mathbb{P}_n \tilde{\psi}_{n,\Theta_l} = 0$ in $\Theta_l$, where $\tilde{\psi}_{n,\Theta_l}$ denotes the EIF with estimated nuisance parameters. Because the EIF is linear in $\Theta_l$, this also corresponds also to the one-step estimator, where $\mathbb{P}_n\tilde{\psi}_{n,\Theta_l}$ is added to a plug-in estimate of $\Theta_l$:
\begin{equation}\label{one-step}
\hat{\Theta}_l = \mathbb{P}_n (\hat{\varphi}-\hat{\tau}_l)^2 - (\hat{\varphi}-\hat{\tau})^2, \nonumber
\end{equation}
and analogously
$$
\hat{\Theta}_d = \mathbb{P}_n (\hat{\varphi}-\hat{\tau}_d)^2 - (\hat{\varphi}-\hat{\tau})^2,
$$
where $\hat{\varphi} = \varphi(\hat{\pi}, \hat{\Lambda}, \hat{\Lambda}_c)$. Note that the estimation of $\tau_l$ can be obtained as a regression of $\hat{\tau}(X)$ onto $X_{-l}$, while estimation of $\tau_d = \E \{\tau(X)\}$ can be obtained by the mean of $\hat{\tau}(X)$, i.e., the marginal distribution, $\mu$ is estimated with the empirical measure $\mathbb{P}_n$. Or, as $\tau_d$ is itself a differentiable parameter, more sophisticated methods can be used in constructing estimators $\hat{\tau}_d$ (see Section \ref{sec:nuis}). The $n^{-1/2}$-convergence of the empirical process term related to the one-step estimator depends on the flexibility of the nuisance estimators, in the sense that, e.g., working parametric models ensure $n^{-1/2}$-convergence, which is not the case for some data-adaptive estimators. More specifically, if the nuisance estimators falls in a Donsker class which also contains the true nuisance parameter, then $n^{-1/2}$-convergence of the empirical process term is obtained. To alleviate the Donsker class condition, we employ a type of sample splitting (coined cross-fitting, \cite{cher}) that  ensures the desired convergence as long as the nuisance estimators are $L_2(P)$-consistent. In Appendix \ref{App_Sample_splitting}, we detail the sample splitting, but note that this is a general construction of cross-fitted one-step estimators (\cite{kennedydouble}). 
We denote the cross-fitted one-step estimator  by $\hat{\Theta}_l^{CF}$. The following result gives the asymptotic behavior of $\hat{\Theta}_l^{CF}$ with the needed assumptions specified in  Appendix \ref{App_Sample_splitting}

\begin{theorem}\label{asTheta}
    Assume that assumption \ref{assA} hold for the nuisance parameter estimators in each data split 
    and assume $\Theta_l > 0$. Then $\hat{\Theta}^{CF}_l$ is asymptotically linear with influence function given by $\tilde{\psi}_{\Theta_l}$, \eqref{eq:eifThetas}, and 
    \begin{equation}\label{eq:anTheta}
        n^{1/2}(\hat{\Theta}^{CF}_l - \Theta_l) \xrightarrow[]{D} \mathcal{N}(0, P\tilde{\psi}_{\Theta_l}^2). \nonumber
    \end{equation}
\end{theorem}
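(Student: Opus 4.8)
The plan is to follow the standard analysis of cross-fitted one-step (equivalently, estimating-equation) estimators through a von Mises-type expansion, isolating a leading term that is a sample average of the true EIF together with two asymptotically negligible terms. Write $\psi(O;\nu) = \{\varphi(\nu)-\hat\tau_l\}^2 - \{\varphi(\nu)-\tau\}^2$ for the uncentered influence function, with $\nu$ denoting the true nuisance (comprising $\pi$, $\Lambda$, $\Lambda_c$, $\tau$, $\tau_l$ and $\mu$) and $\hat\nu$ its out-of-fold estimate, so that $\Theta_l = P\psi(\cdot;\nu)$. Because $\tilde\psi_{\Theta_l}$ in \eqref{eq:eifThetas} is linear in $\Theta_l$, solving the estimating equation $\mathbb{P}_n\tilde\psi_{\Theta_l}=0$ returns exactly the cross-fitted one-step estimator. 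Adding and subtracting $P\psi(\hat\nu)$ and $\mathbb{P}_n\psi(\nu)$, I would decompose, fold by fold,
$$\hat\Theta_l^{CF} - \Theta_l = \mathbb{P}_n\tilde\psi_{\Theta_l}(\nu) + (\mathbb{P}_n-P)\{\psi(\hat\nu)-\psi(\nu)\} + P\{\psi(\hat\nu)-\psi(\nu)\},$$
where the first term is the average of the true (centered) EIF, the second is the empirical process term, and the third is the second-order remainder. The theorem then reduces to showing that the last two terms are $o_P(n^{-1/2})$.

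For the empirical process term I would exploit cross-fitting to avoid any Donsker condition. On each fold $\hat\nu$ is computed from the complementary sample, so conditionally on that sample $\psi(\hat\nu)-\psi(\nu)$ is a fixed function with conditional mean zero and conditional variance bounded by $n^{-1}\norm{\psi(\hat\nu)-\psi(\nu)}^2$. The $L_2(P)$-consistency required by Assumption \ref{assA}, after expanding the square and using boundedness of $\varphi$, $\tau$ and $\tau_l$ over $[0,t]$, gives $\norm{\psi(\hat\nu)-\psi(\nu)}=o_P(1)$, so a conditional Chebyshev inequality yields $o_P(n^{-1/2})$ per fold and hence overall.

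The crux is the remainder $P\{\psi(\hat\nu)-\psi(\nu)\}$. Using the factorization $\psi = (\tau-\tau_l)(2\varphi-\tau-\tau_l)$, I would expand this difference and invoke the Neyman orthogonality built into the EIF, so that the first-order contributions in each nuisance direction vanish at the truth and only second-order cross terms survive. The expansion of $\varphi(\hat\pi,\hat\Lambda,\hat\Lambda_c)-\varphi$ produces the familiar mixed-bias structure in which propensity-score errors multiply outcome/censoring-hazard errors, while the errors in $\hat\tau$ and $\hat\tau_l$ enter quadratically, giving a bound schematically of the form $\norm{\hat\pi-\pi}(\norm{\hat\Lambda-\Lambda}+\norm{\hat\Lambda_c-\Lambda_c}) + \norm{\hat\tau-\tau}^2 + \norm{\hat\tau_l-\tau_l}^2$, each factor being $o_P(n^{-1/2})$ under the product-rate conditions of Assumption \ref{assA}. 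This is the main obstacle: the martingale-integral form of $\varphi$ in \eqref{eq:varphi_surv_a} and \eqref{eq:varphi_RMST_a} must be handled carefully, rewriting $dM$ under $P$ via the true compensator, recompensating the integral with the estimated hazards, and tracking how the resulting bias factorizes; in addition one must account for the non-standard nuisance $\tau_l$, which is itself an estimated regression of $\hat\tau$ on $X_{-l}$ rather than a primitive of the likelihood, and show its estimation error also enters the remainder only at second order.

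Finally, with both remainder and empirical process terms negligible, asymptotic linearity $\hat\Theta_l^{CF}-\Theta_l = \mathbb{P}_n\tilde\psi_{\Theta_l}+o_P(n^{-1/2})$ follows directly from the decomposition. Since $\tilde\psi_{\Theta_l}$ has mean zero and, using $\Theta_l>0$ to rule out the degenerate boundary case $\Theta_l=0$ together with boundedness of all ingredients, finite variance $P\tilde\psi_{\Theta_l}^2\in(0,\infty)$, the central limit theorem for i.i.d.\ variables applied to the leading term delivers $n^{1/2}(\hat\Theta_l^{CF}-\Theta_l)\xrightarrow[]{D}\mathcal N(0,P\tilde\psi_{\Theta_l}^2)$, as claimed. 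The same argument with $l$ replaced by $d$ covers $\hat\Theta_d^{CF}$, and the joint asymptotic linearity of $(\hat\Theta_l^{CF},\hat\Theta_d^{CF})$ then feeds the delta-method statement for $\hat\Psi_l$ via \eqref{eq:eifPsis}.
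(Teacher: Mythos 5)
Your skeleton is exactly the paper's: the three-term decomposition into the average of the true EIF, a cross-fitted empirical-process term, and a second-order remainder, with the empirical-process term killed by a conditional Chebyshev argument given the out-of-fold nuisance estimates (the paper invokes Lemma 2 of Kennedy 2020 for this), and the CLT applied to the leading term. The empirical-process step is also in the right spirit, though note that $\norm{\phi_{\Theta_l}(\hat\nu)-\phi_{\Theta_l}(\nu)}=o_p(1)$ is not immediate: the paper proves $\norm{\varphi(\hat\nu)-\varphi(\nu)}=o_p(1)$ as a separate lemma (Lemma \ref{emp:ate}) from \ref{ass1}, \ref{ass2} and \ref{ass5} via the backward equation and integration by parts, and then bounds the five terms of the difference using almost-sure bounds such as $(\varphi-\tau_l)^2\leq 2K^2+8\eta^{-2}$, which rest on \ref{ass1} and \ref{ass6} rather than generic boundedness.

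There is, however, a genuine gap at what you yourself identify as the crux. You bound the remainder by the mixed-bias product $\norm{\hat\pi-\pi}\left(\norm{\hat\Lambda-\Lambda}+\norm{\hat\Lambda_c-\Lambda_c}\right)+\norm{\hat\tau-\tau}^2+\norm{\hat\tau_l-\tau_l}^2$ and assert each piece is $o_p(n^{-1/2})$ ``under the product-rate conditions of Assumption \ref{assA}.'' Assumption \ref{assA} contains no such conditions: no individual rates are imposed on $\hat\pi$ or $\hat\Lambda_c$ at all (only uniform consistency, \ref{ass5}), and the paper explicitly remarks that obtaining a product-structure bound in the censored setting --- where $\hat\Lambda$ is typically a step function and the bias is an integral against $\dd\bigl[\Lambda-\hat\Lambda\bigr]$ --- is beyond its scope. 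What is assumed instead is the exact bias functional itself, \ref{ass4}, and the substantive work of the proof (Lemma \ref{rem:ate}) is to show, by replacing $\dd M$ with its compensator under iterated expectation and then applying Duhamel's equation to $\hat S - S$ (and the backward equation for $\hat H$ in the RMST case), that $P\{\varphi(\hat\nu)-\tau\}$ equals precisely the quantity postulated in \ref{ass4}. The paper then closes the remainder for $\Theta_l$ via the algebraic identity $P\phi_{\Theta_l}(\hat\nu)-\Theta_l=\norm{\hat\tau_l-\tau_l}^2-\norm{\hat\tau-\tau}^2+2P\{(\hat\tau-\hat\tau_l)(\varphi(\hat\nu)-\tau)\}$, controlling the cross term by the almost-sure bound \ref{ass6} multiplied by the ATE remainder --- not by a Neyman-orthogonality factorization into $L_2$-norm products. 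Your quadratic terms in $\hat\tau$ and $\hat\tau_l$ are correct and match \ref{ass2}--\ref{ass3}, but without the Duhamel reduction to \ref{ass4} your remainder bound invokes rate conditions that are neither assumed nor proved, so the step as written would fail under the theorem's actual hypotheses.
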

\begin{proof}
    See Appendix \ref{sec:appAN}.
\end{proof}

Finally, we state a distribution result for an estimate of $\Psi_l$ based on the cross-fitted estimators $\hat{\Theta}_l^{CF}$ and $\hat{\Theta}_d^{CF}$
\begin{corollary} \label{cor:psiCF}
Let $\hat{\Psi}_l^{CF} = \frac{\hat{\Theta}_l^{CF}}{\hat{\Theta}_d^{CF}}$. Under assumption \ref{assA} and $\norm{\hat{\tau}_d - \tau_d}_{L_2(P)} = o_p(n^{-\frac{1}{4}})$, $\hat{\Psi}_l^{CF}$ is asymptotically efficient with influence function given by \eqref{eq:eifPsis} and 
$$
n^{1/2}(\hat{\Psi}^{CF}_l - \Theta) \xrightarrow[]{D} \mathcal{N}(0, P\tilde{\psi}_{\Psi_l}^2).
$$
\end{corollary}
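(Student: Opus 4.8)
The plan is to read off the result as a two-dimensional delta method applied to the ratio map $\phi(a,b) = a/b$ at the point $(\hat{\Theta}_l^{CF}, \hat{\Theta}_d^{CF})$. The first task is to secure asymptotic linearity of each coordinate. For $\hat{\Theta}_l^{CF}$ this is precisely Theorem \ref{asTheta}. For $\hat{\Theta}_d^{CF}$ I would invoke the same theorem with $l$ replaced by $d$, which is legitimate because the excerpt already records that the $\Theta_d$-estimator and its EIF are obtained by this substitution. The one genuine point requiring care — and the step I expect to be the main obstacle — is that the $d$-analogue of Assumption \ref{assA} demands a rate on the scalar plug-in $\hat{\tau}_d$, which is \emph{not} folded into Assumption \ref{assA} the way the rate on the function $\hat{\tau}_l$ is. The second-order remainder of the one-step estimator for $\Theta_d$ carries a cross term of the form $\norm{\hat{\tau}_d - \tau_d}\cdot\norm{\hat{\varphi} - \varphi}$ together with a term of order $\norm{\hat{\tau}_d - \tau_d}^2$; the extra hypothesis $\norm{\hat{\tau}_d - \tau_d}_{L_2(P)} = o_p(n^{-1/4})$ is exactly what drives this remainder to $o_p(n^{-1/2})$ and hence delivers asymptotic linearity of $\hat{\Theta}_d^{CF}$ with influence function $\tilde{\psi}_{\Theta_d}$.

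Once both coordinates are asymptotically linear, joint asymptotic linearity is automatic. Writing
\begin{align*}
\hat{\Theta}_l^{CF} - \Theta_l &= \mathbb{P}_n \tilde{\psi}_{\Theta_l} + o_p(n^{-1/2}), \\
\hat{\Theta}_d^{CF} - \Theta_d &= \mathbb{P}_n \tilde{\psi}_{\Theta_d} + o_p(n^{-1/2}),
\end{align*}
the vector of estimation errors equals $\mathbb{P}_n(\tilde{\psi}_{\Theta_l}, \tilde{\psi}_{\Theta_d})^\top + o_p(n^{-1/2})$, so the multivariate central limit theorem applied to this single i.i.d.\ sum gives $n^{1/2}\mathbb{P}_n(\tilde{\psi}_{\Theta_l}, \tilde{\psi}_{\Theta_d})^\top \xrightarrow{D} \mathcal{N}(0, \Sigma)$, where $\Sigma$ is the covariance matrix of the pair of influence functions.

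Next I would check differentiability of $\phi$ at the limit point. Since $\Theta_l = \var\{\tau(X)\} - \var\{\tau_l(X)\}$ and $\Theta_d = \var\{\tau(X)\}$, nonnegativity of $\var\{\tau_l(X)\}$ forces $\Theta_d \geq \Theta_l > 0$ under the standing assumption $\Theta_l > 0$; hence $\phi(a,b)=a/b$ is continuously differentiable in a neighbourhood of $(\Theta_l,\Theta_d)$ with gradient $\nabla\phi(\Theta_l,\Theta_d) = (1/\Theta_d,\, -\Theta_l/\Theta_d^2) = (1/\Theta_d,\, -\Psi_l/\Theta_d)$, using $\Psi_l = \Theta_l/\Theta_d$. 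The functional delta method (\cite{vaart}, Ch.\ 25.7) then yields that $\hat{\Psi}_l^{CF} = \phi(\hat{\Theta}_l^{CF}, \hat{\Theta}_d^{CF})$ is asymptotically linear with influence function
$$
\frac{1}{\Theta_d}\tilde{\psi}_{\Theta_l} - \frac{\Psi_l}{\Theta_d}\tilde{\psi}_{\Theta_d} = \frac{1}{\Theta_d}\left(\tilde{\psi}_{\Theta_l} - \Psi_l \tilde{\psi}_{\Theta_d}\right),
$$
which is exactly $\tilde{\psi}_{\Psi_l}$ of \eqref{eq:eifPsis}.

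Finally, because $\tilde{\psi}_{\Psi_l}$ is a mean-zero influence function, asymptotic normality follows with variance $\nabla\phi^\top \Sigma\, \nabla\phi = P\tilde{\psi}_{\Psi_l}^2$, giving $n^{1/2}(\hat{\Psi}_l^{CF} - \Psi_l) \xrightarrow{D} \mathcal{N}(0, P\tilde{\psi}_{\Psi_l}^2)$ (the symbol $\Theta$ in the displayed conclusion should read $\Psi_l$). Since the resulting influence function coincides with the efficient influence function of $\Psi_l$ identified in Corollary \ref{hinesEIF}, the estimator attains the semiparametric efficiency bound, establishing asymptotic efficiency.
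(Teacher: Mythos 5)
Your route is the paper's own (the corollary is proved implicitly in the text): asymptotic linearity of $\hat{\Theta}_l^{CF}$ from Theorem \ref{asTheta}, the same statement with $l$ replaced by $d$ — the extra hypothesis $\norm{\hat{\tau}_d-\tau_d}_{L_2(P)}=o_p(n^{-1/4})$ standing in for the $d$-analogue of \ref{ass3}, which Assumption \ref{assA} does not supply — and then the delta method for the ratio as in van der Vaart, Ch.~25.7 (compare also Lemma \ref{lem:SECF}, whose first claim is exactly this step). One correction to your account of where the extra rate is spent: in the paper's decomposition the remainder for $\hat{\Theta}_d^{CF}$ equals
$$
\norm{\hat{\tau}_d-\tau_d}^2-\norm{\hat{\tau}-\tau}^2+2P\{(\hat{\tau}-\hat{\tau}_d)(\varphi(\hat{\nu})-\tau)\},
$$
so the hypothesis on $\hat{\tau}_d$ is consumed entirely by the quadratic term $\norm{\hat{\tau}_d-\tau_d}^2$. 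The cross term is \emph{not} of the form $\norm{\hat{\tau}_d-\tau_d}\cdot\norm{\hat{\varphi}-\varphi}$ as you describe — and a Cauchy--Schwarz bound of that shape would in fact be insufficient, since Assumption \ref{assA} only yields $\norm{\varphi(\hat{\nu})-\varphi(\nu)}=o_p(1)$ (Lemma \ref{emp:ate}), and $o_p(n^{-1/4})\cdot o_p(1)$ need not be $o_p(n^{-1/2})$. Instead the cross term is handled independently of the rate of $\hat{\tau}_d$: by boundedness of $\hat{\tau}-\hat{\tau}_d$ (the analogue of \ref{ass6}) it is bounded by a constant times $P\{\varphi(\hat{\nu})-\tau\}$, which is $o_p(n^{-1/2})$ by Lemma \ref{rem:ate} under \ref{ass4}. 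With that repair, the rest of your argument coincides with the paper's: joint asymptotic linearity via the multivariate CLT, differentiability of $(a,b)\mapsto a/b$ at $(\Theta_l,\Theta_d)$ since $\Theta_d\ge\Theta_l>0$, gradient $(1/\Theta_d,-\Psi_l/\Theta_d)$ recovering \eqref{eq:eifPsis}, and the observed typo ($\Theta$ should read $\Psi_l$ in the display). You might also note, as the paper does in Appendix \ref{App_Sample_splitting}, that the nested cross-fitted $\hat{\tau}_d^{CF}$ satisfies the rate hypothesis automatically (via Prohorov's theorem), so in that implementation the corollary applies with no additional condition.
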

To estimate the variance of $\hat{\Psi}^{CF}_l$, we define the cross-fitted plug-in estimator $\hat{\sigma}^{2,CF}_{\Psi_l}$ of $\sigma^2_{\Psi_l} = P\tilde{\psi}_{\Psi_l}^2$. 
This is detailed in Appendix \ref{App_Sample_splitting}.

\begin{remark}
    The target parameter $\Psi_l$ is restricted to $[0,1)$ while  the estimator $\hat{\Psi}_l^{CF}$ is unrestricted. This  can result in parameter estimates that are outside the range $[0,1)$, or confidence intervals that contain either 0 or 1. One way to obtain estimates strictly in $[0,1)$ is to construct a cross-fitted one-step estimator of the transformed parameter $\text{logit}(\Psi_l)$ and then use an expit-transformation to obtain an estimate and confidence interval of $\Psi_l$ in the range $[0,1)$. This approach is detailed in Appendix \ref{App-logit}.
\end{remark}

\subsection{Estimation of $\Omega_j$}
As $\Omega_j$ is a ratio of two $\Gamma_j$ and $\chi_j$, the construction of an estimator will follow the same procedure as with $\Psi_l$.
\subsubsection{Efficient influence function}
We derive the EIF's of $\Gamma_j$ and $\chi_j$ separately from which the EIF of $\Omega_j$ is obtained. This is summarized in the following theorem.
\begin{theorem}\label{EIFomega}
    Let $\varphi$ be given as in \eqref{eq:varphi_surv_a} for the survival function setting, and as in \eqref{eq:varphi_RMST_a} for the RMST setting. Then the EIF's of $\Gamma_j$ and $\chi_j$ are given by $\tilde{\psi}_{\Gamma_j}$ and $\tilde{\psi}_{\chi_j}$, respectively, where
    \begin{align}\label{eq:EIFGamma}
        \tilde{\psi}_{\Gamma_j} = \{ \varphi(O) - \tau_j(X) \}\{ X_j - \E(X_j\mid X_{-j}) \} - \Gamma_j
    \end{align}
    and
    \begin{align}\label{eq:EIFchi}
        \tilde{\psi}_{\chi_j} = \{X_j - \E(X_j\mid X_{-j})\}^2 - \chi_j.
    \end{align}
    The EIF of $\Omega_j$ is given by
    \begin{align}\label{eq:EIFomega}
        \tilde{\psi}_{\Omega_j} = \frac{1}{\chi_j}\left(\tilde{\psi}_{\Gamma_j} - \Omega_j \tilde{\psi}_{\chi_j} \right).
    \end{align}
\end{theorem}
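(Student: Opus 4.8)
The plan is to derive the efficient influence functions of $\Gamma_j$ and $\chi_j$ separately by pathwise differentiation along a regular one-dimensional submodel $\{P_\epsilon\}$ with score $s(O)$, and then assemble the EIF of the ratio $\Omega_j = \Gamma_j/\chi_j$ by the quotient rule for pathwise-differentiable parameters (cf. \cite{vaart}, Ch. 25.7). Throughout I would write $m(X_{-j}) = \E(X_j\mid X_{-j})$ and use the residual representations $\chi_j = \E[\{X_j - m(X_{-j})\}^2]$ and $\Gamma_j = \E[\{X_j - m(X_{-j})\}\tau(X)]$, the latter being valid because $\E[\{X_j - m(X_{-j})\}\tau_j(X)] = 0$.

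For $\chi_j$, which depends on $P$ only through the law of $X$, I would differentiate $\E_\epsilon[\{X_j - m_\epsilon(X_{-j})\}^2]$. This produces an outer term $\E[\{X_j - m(X_{-j})\}^2 s(O)]$ and an inner term $-2\E[\{X_j - m(X_{-j})\}\dot m(X_{-j})]$ coming from $\dot m = \frac{d}{d\epsilon}m_\epsilon$. The inner term vanishes because the $X_{-j}$-measurable factor $\dot m$ can be pulled out against $\E[X_j - m(X_{-j})\mid X_{-j}] = 0$. Centering the surviving term gives \eqref{eq:EIFchi}.

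For $\Gamma_j$ the derivative splits into three contributions, corresponding to the three ways $P$ enters $\E[\{X_j - m(X_{-j})\}\tau(X)]$: the outer expectation, the conditional mean $m$, and the CATE $\tau$. The outer term gives $\{X_j - m(X_{-j})\}\tau(X)$; using $\dot m(X_{-j}) = \E[\{X_j - m(X_{-j})\}s(O)\mid X_{-j}]$ and iterated expectations, the $m$-term contributes $-\{X_j - m(X_{-j})\}\tau_j(X)$, so together these two give $\{X_j - m(X_{-j})\}\{\tau(X) - \tau_j(X)\}$. The decisive third contribution comes from differentiating $\tau$, where I would invoke the Gateaux derivative of $\tau(x)$ from Lemma \ref{gat:tau} (equivalently, the uncentered-EIF representation $\varphi = \tau + g$ of the ATE with $\E[g(O)\mid X] = 0$), which yields $\dot\tau(x) = \E[\{\varphi(O) - \tau(X)\}s(O)\mid X = x]$. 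Pulling the $X$-measurable factor $\{X_j - m(X_{-j})\}$ inside and applying iterated expectation, this term contributes $\{X_j - m(X_{-j})\}\{\varphi(O) - \tau(X)\}$. Summing the three pieces cancels the intermediate $\tau(X)$ terms, and centering by $\Gamma_j$ leaves exactly $\{\varphi(O) - \tau_j(X)\}\{X_j - \E(X_j\mid X_{-j})\} - \Gamma_j$, i.e.\ \eqref{eq:EIFGamma}; the argument is identical in the survival-function and RMST settings, since only the expression for $\varphi$ (from \eqref{eq:varphi_surv_a} or \eqref{eq:varphi_RMST_a}) changes.

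The main obstacle is the $\tau$-differentiation step: unlike $\chi_j$, the parameter $\Gamma_j$ depends on $P$ through the censored survival mechanism, so $\dot\tau$ carries all the nuisance structure ($\pi$, $\Lambda$, $\Lambda_c$, $S_c$). This is precisely where the pseudo-outcome property $\E[\varphi(O)\mid X] = \tau(X)$ and the martingale-integral form of $\varphi$ do the work, and the point I would check most carefully is that the derivative of the conditional mean $x\mapsto \E_\epsilon[\varphi_\epsilon(O)\mid X = x]$ collapses to $\E[\{\varphi - \tau\}s\mid X]$ rather than generating extra nuisance-perturbation terms — this is the Neyman orthogonality of $\varphi$, and the explicit verification I would defer to the Gateaux computation of Lemma \ref{gat:tau}. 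Once \eqref{eq:EIFGamma} and \eqref{eq:EIFchi} are established, \eqref{eq:EIFomega} is immediate: the quotient rule gives $\tilde\psi_{\Omega_j} = \chi_j^{-1}\tilde\psi_{\Gamma_j} - \Gamma_j\chi_j^{-2}\tilde\psi_{\chi_j} = \chi_j^{-1}(\tilde\psi_{\Gamma_j} - \Omega_j\tilde\psi_{\chi_j})$.
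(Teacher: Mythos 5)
Your proposal is correct and follows essentially the same route as the paper: both hinge on the Gateaux derivative of $\tau$ from Lemma \ref{gat:tau} (equivalently, the pseudo-outcome property of $\varphi$) to handle the censored-data nuisance structure, both observe that the $m$-perturbation terms reduce via iterated expectations, and both obtain \eqref{eq:EIFomega} by the quotient rule. The only differences are cosmetic: you differentiate along a score-indexed submodel and use the residual representation $\Gamma_j = \E[\{X_j - \E(X_j\mid X_{-j})\}\tau(X)]$, whereas the paper uses the Dirac-contamination submodel and first expands $\Gamma_j = \E\{\tau(X)X_j\} - \E\{\E(\tau(X)\mid X_{-j})\E(X_j\mid X_{-j})\}$ before differentiating term by term.
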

\begin{proof}
    See Appendix  \ref{AppB}.
\end{proof}

\begin{remark}
    The EIF of $\Gamma_j$ is stated in terms the $\varphi$, which is the uncentered EIF of the ATE. Thus, the above EIF's are readily  extended to other data settings with different CATE functions, $\tau$. 
\end{remark}

\subsubsection{Cross-fitted one-step estimators}
As with $\Psi_l$, we construct cross-fitted one-step estimators for $\Gamma_j$ and $\chi_j$ based on the EIF's in Theorem \ref{EIFomega}. The needed assumptions (Assumptions A and B) are given in  Appendix 
\ref{App_Sample_splitting}.

\begin{theorem}\label{asGammaAndChi}
    Under assumption \ref{assA} with \ref{ass2} replaced by assumption \ref{assB}, $\hat{\Gamma}_j^{CF}$ is asymptotically linear with influence function given by the EIF \eqref{eq:EIFGamma}, and hence locally asymptotically efficient. Thus
    \begin{align}
        n^{1/2}(\hat{\Gamma}_j^{CF} - \Gamma_j) \xrightarrow[]{D} \mathcal{N}(0, P \tilde{\psi}_{\Gamma_j}^2). \nonumber 
    \end{align}
    Furthermore, under assumption \ref{assB}, $\hat{\chi}_j^{CF}$ is asymptotically linear with influence function given by the EIF \eqref{eq:EIFchi}, and hence locally asymptotically efficient. Furthermore:
    \begin{align}
        n^{1/2}(\hat{\chi}_j^{CF} - \chi_j) \xrightarrow[]{D} \mathcal{N}(0, P \tilde{\psi}_{\chi_j}^2).  \nonumber
    \end{align}
\end{theorem}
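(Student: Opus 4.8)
The plan is to prove both limits by the same one-step (von~Mises) decomposition used for $\hat{\Theta}_l^{CF}$ in Section~\ref{sec:appAN}, and to verify the two ingredients that make a cross-fitted one-step estimator asymptotically linear: negligibility of the empirical process term and of the second-order remainder. Writing $\phi_{\Gamma_j}(\nu) = \{\varphi - \tau_j\}\{X_j - \E(X_j\mid X_{-j})\}$ with $m_j = \E(X_j\mid X_{-j})$, so that $\tilde{\psi}_{\Gamma_j} = \phi_{\Gamma_j}(\nu_0) - \Gamma_j$ and the estimating-equation one-step estimator is $\hat{\Gamma}_j^{CF} = \mathbb{P}_n \phi_{\Gamma_j}(\hat{\nu})$, I would write
\begin{align*}
\hat{\Gamma}_j^{CF} - \Gamma_j &= \mathbb{P}_n\tilde{\psi}_{\Gamma_j} + (\mathbb{P}_n - P)\{\phi_{\Gamma_j}(\hat{\nu}) - \phi_{\Gamma_j}(\nu_0)\} \\
&\quad + P\{\phi_{\Gamma_j}(\hat{\nu}) - \phi_{\Gamma_j}(\nu_0)\},
\end{align*}
the three terms being the influence-function term, the empirical process term, and the remainder $R_n$. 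The same decomposition with $\phi_{\chi_j}(\nu) = \{X_j - \E(X_j\mid X_{-j})\}^2$ handles $\hat{\chi}_j^{CF} = \mathbb{P}_n\phi_{\chi_j}(\hat{\nu})$.

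For the empirical process term I would argue exactly as for $\hat{\Theta}_l^{CF}$: by cross-fitting, on each fold the increment $\phi_{\Gamma_j}(\hat{\nu}) - \phi_{\Gamma_j}(\nu_0)$ is evaluated at a nuisance estimate trained on the complementary sample, so conditionally on that sample it is a centred average whose conditional variance is bounded by $\norm{\phi_{\Gamma_j}(\hat{\nu}) - \phi_{\Gamma_j}(\nu_0)}^2$. Under the $L_2(P)$-consistency in Assumption~\ref{assB} this vanishes, and a conditional Chebyshev argument gives that the term is $o_p(n^{-1/2})$ with no Donsker restriction. This step is routine and identical in structure to the $\Theta_l$ case.

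The crux is the remainder $R_n = P\{\phi_{\Gamma_j}(\hat{\nu}) - \phi_{\Gamma_j}(\nu_0)\}$, where I would exploit the two orthogonality properties built into $\tilde{\psi}_{\Gamma_j}$: since $\varphi$ is the uncentered EIF of the ATE, $\E(\varphi\mid X) = \tau(X)$ and hence $\E(\varphi - \tau_j\mid X_{-j}) = 0$, while trivially $\E\{X_j - m_j\mid X_{-j}\} = 0$. Expanding the product in the errors $\hat{\varphi} - \varphi$, $\hat{\tau}_j - \tau_j$, and $\hat{m}_j - m_j$ and taking expectations, these two mean-zero identities annihilate every term linear in a single nuisance error, leaving $R_n = P\{b(X)(X_j - \hat{m}_j)\} + P\{(\hat{\tau}_j - \tau_j)(\hat{m}_j - m_j)\}$, where $b(X) = \E(\hat{\varphi}\mid X) - \tau(X)$ is the conditional bias of the uncentered EIF. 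The second summand is bounded by $\norm{\hat{\tau}_j - \tau_j}\,\norm{\hat{m}_j - m_j}$. The first is the genuinely survival-specific piece and the main obstacle: one must show, using the explicit martingale form of $\varphi$ in \eqref{eq:varphi_surv_a}/\eqref{eq:varphi_RMST_a}, that $b(X)$ is itself a product of errors in $(\pi,\Lambda,\Lambda_c)$ (the mixed-bias/double-robustness property of the survival and RMST EIFs), so that $|P\{b(X)(X_j - \hat{m}_j)\}| \le \norm{b}\,\norm{X_j - \hat{m}_j}$ is again second order. Collecting, $R_n$ is a sum of products of nuisance errors, each $o_p(n^{-1/4})$ under Assumption~\ref{assB}, whence $R_n = o_p(n^{-1/2})$.

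For $\hat{\chi}_j^{CF}$ the argument is simpler, since $m_j$ is the only nuisance: the empirical process term is controlled as above, and the remainder collapses, using $\E\{X_j - m_j\mid X_{-j}\} = 0$ to kill the cross term, to $P\{(X_j - \hat{m}_j)^2 - (X_j - m_j)^2\} = \norm{\hat{m}_j - m_j}^2 = o_p(n^{-1/2})$. In both cases the decomposition then gives $n^{1/2}(\hat{\Gamma}_j^{CF} - \Gamma_j) = n^{1/2}\mathbb{P}_n\tilde{\psi}_{\Gamma_j} + o_p(1)$ and $n^{1/2}(\hat{\chi}_j^{CF} - \chi_j) = n^{1/2}\mathbb{P}_n\tilde{\psi}_{\chi_j} + o_p(1)$; the central limit theorem applied to the leading averages yields the stated normal limits with variances $P\tilde{\psi}_{\Gamma_j}^2$ and $P\tilde{\psi}_{\chi_j}^2$, and since each influence function coincides with the corresponding EIF the estimators are locally asymptotically efficient.
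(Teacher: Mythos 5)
Your overall architecture is exactly the paper's: the same one-step expansion into influence-function, empirical-process and remainder terms, the same conditional-Chebyshev cross-fitting argument for the empirical-process term, and---after the two iterated-expectation cancellations $\E\{\varphi-\tau_j\mid X_{-j}\}=0$ and $\E\{X_j-m_j\mid X_{-j}\}=0$---precisely the paper's remainder decomposition $R_n = P\{b(X)(X_j-\hat m_j)\} + P\{(\hat\tau_j-\tau_j)(\hat m_j - m_j)\}$. The second summand (Cauchy--Schwarz with \ref{ass3} and \ref{assb1}) and the entire $\hat\chi_j^{CF}$ argument (remainder collapsing to $\norm{\hat m_j - m_j}^2 = o_p(n^{-1/2})$) coincide with the paper's proof and are correct.

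The gap is in your treatment of $P\{b(X)(X_j-\hat m_j)\}$, which you yourself flag as the crux. You propose to show that the conditional bias $b(X)=\E(\hat\varphi\mid X)-\tau(X)$ is a product of errors in $(\pi,\Lambda,\Lambda_c)$ and then conclude that $R_n$ is a sum of products of nuisance errors, ``each $o_p(n^{-1/4})$ under Assumption \ref{assB}.'' But the theorem's hypotheses contain no $n^{-1/4}$ rates for $\pi$, $\Lambda$ or $\Lambda_c$: Assumption \ref{assB} only rates $\hat E_n^j$ (the rate on $\hat\tau_j$ comes from \ref{ass3}, which is retained in Assumption \ref{assA}), and for the survival nuisances Assumption \ref{assA} supplies only uniform consistency (\ref{ass5}) together with \ref{ass4}, which \emph{directly postulates} that the mixed-bias integral is $o_p(n^{-1/2})$. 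Accordingly, the paper never derives a product-rate structure; it bounds the term by $\sqrt{\delta}\,\bigl|P\{\varphi(\hat\nu)-\tau\}\bigr|$ using \ref{assb2} and then invokes Lemma \ref{rem:ate}, whose Duhamel-equation computation identifies $P\{\varphi(\hat\nu)-\tau\}$ with exactly the integral appearing in \ref{ass4}. The paper even remarks, in the discussion following Assumption \ref{assA}, that extracting a product structure from \ref{ass4} analogous to the uncensored case ``is beyond the scope of this paper''---so the step you assert is precisely what the assumptions are designed to avoid proving. Two further repairs: (i) your Cauchy--Schwarz route $\bigl|P\{b(X)(X_j-\hat m_j)\}\bigr|\le\norm{b}\,\norm{X_j-\hat m_j}$ is more demanding than what is assumed, since $\norm{X_j-\hat m_j}=O_p(1)$ forces the second-moment requirement $\norm{b}=o_p(n^{-1/2})$, strictly stronger than the first-moment condition \ref{ass4} that the paper's weighted-mean bound uses; and (ii) your empirical-process step for $\hat\Gamma_j^{CF}$ needs $\norm{\varphi(\hat\nu)-\varphi(\nu)}=o_p(1)$, which is Lemma \ref{emp:ate} under \ref{ass1} and \ref{ass5} (together with the bound on $P\{\varphi-\tau_j\}^2$ from the proof of Theorem \ref{asTheta}), not a consequence of the $L_2(P)$-consistency in Assumption \ref{assB}.
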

\begin{proof}
    See Appendix  \ref{sec:appAN}.
\end{proof}
And finally, a simple application of the delta method gives the main result for our estimator $\hat{\Omega}_j^{CF}=\hat{\Gamma}_j^{CF}/\hat{\chi}_j^{CF}$.
\begin{corollary}\label{cor:omega}
    Under assumption \ref{assA} with \ref{ass2} replaced by assumption \ref{assB}, $\hat{\Omega}_j^{CF}$ is asymptotically linear with influence function given by the EIF \eqref{eq:EIFomega}, and hence locally asymptotically efficient. Furthermore:
    \begin{align}
        n^{1/2}(\hat{\Omega}_j^{CF} - \Omega) \xrightarrow[]{D} \mathcal{N}(0, P \tilde{\psi}_{\Omega_j}^2).
    \end{align}
\end{corollary}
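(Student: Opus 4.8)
The plan is to obtain this corollary as a direct application of the delta method to the ratio map $\phi(a,b)=a/b$, using the joint asymptotic linearity of $(\hat{\Gamma}_j^{CF},\hat{\chi}_j^{CF})$ already supplied by Theorem \ref{asGammaAndChi}. First I would upgrade the two marginal expansions of that theorem to a single joint expansion. Theorem \ref{asGammaAndChi} gives asymptotic linearity of each estimator, i.e. $n^{1/2}(\hat{\Gamma}_j^{CF}-\Gamma_j)=n^{-1/2}\sum_{i=1}^n\tilde{\psi}_{\Gamma_j}(O_i)+o_p(1)$ and $n^{1/2}(\hat{\chi}_j^{CF}-\chi_j)=n^{-1/2}\sum_{i=1}^n\tilde{\psi}_{\chi_j}(O_i)+o_p(1)$, where both $\tilde{\psi}_{\Gamma_j}$ and $\tilde{\psi}_{\chi_j}$ are centered ($P\tilde{\psi}_{\Gamma_j}=P\tilde{\psi}_{\chi_j}=0$). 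Since both sums run over the same i.i.d.\ observations and the same cross-fitting partition, I would stack them to obtain
\[
n^{1/2}\begin{pmatrix} \hat{\Gamma}_j^{CF}-\Gamma_j \\ \hat{\chi}_j^{CF}-\chi_j \end{pmatrix}
= n^{-1/2}\sum_{i=1}^n \begin{pmatrix} \tilde{\psi}_{\Gamma_j}(O_i) \\ \tilde{\psi}_{\chi_j}(O_i) \end{pmatrix} + o_p(1),
\]
and invoke the multivariate central limit theorem to conclude convergence to a centered bivariate normal whose covariance matrix has entries $P\tilde{\psi}_{\Gamma_j}^2$, $P\tilde{\psi}_{\chi_j}^2$ and $P(\tilde{\psi}_{\Gamma_j}\tilde{\psi}_{\chi_j})$.

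Next I would apply the delta method. The map $\phi(a,b)=a/b$ is differentiable at $(\Gamma_j,\chi_j)$ whenever $\chi_j\neq 0$, with gradient $\nabla\phi(\Gamma_j,\chi_j)=(1/\chi_j,\,-\Gamma_j/\chi_j^2)=(1/\chi_j,\,-\Omega_j/\chi_j)$, using $\Gamma_j/\chi_j^2=\Omega_j/\chi_j$. Multiplying the stacked expansion by this gradient yields
\[
n^{1/2}(\hat{\Omega}_j^{CF}-\Omega_j)=n^{-1/2}\sum_{i=1}^n \frac{1}{\chi_j}\bigl\{\tilde{\psi}_{\Gamma_j}(O_i)-\Omega_j\,\tilde{\psi}_{\chi_j}(O_i)\bigr\}+o_p(1),
\]
so the influence function of $\hat{\Omega}_j^{CF}$ is exactly $\tilde{\psi}_{\Omega_j}=\chi_j^{-1}(\tilde{\psi}_{\Gamma_j}-\Omega_j\tilde{\psi}_{\chi_j})$ from \eqref{eq:EIFomega}, and the limiting variance is $P\tilde{\psi}_{\Omega_j}^2$. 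Local asymptotic efficiency then follows immediately: by the chain rule for efficient influence functions (\cite{vaart}, Ch. 25.7) this $\tilde{\psi}_{\Omega_j}$ is precisely the EIF of the ratio parameter $\Omega_j$, and an estimator whose influence function equals the EIF is by definition locally efficient.

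The computation in the second paragraph is entirely routine, so the only genuinely non-trivial point is the passage from marginal to joint asymptotic linearity in the first paragraph. What must be checked is that the two $o_p(1)$ remainder terms coming from the separate applications of Theorem \ref{asGammaAndChi} vanish \emph{simultaneously}, not merely individually. Because $\hat{\Gamma}_j^{CF}$ and $\hat{\chi}_j^{CF}$ are constructed on the identical fold partition, and the remainder and empirical-process terms controlled in Appendix \ref{sec:appAN} are bounded uniformly across folds, the two remainders are jointly $o_p(1)$ and the stacking is legitimate. The accompanying caveat, analogous to the assumption $\Theta_l>0$ in Theorem \ref{asTheta}, is that the delta method requires $\chi_j$ bounded away from zero; since $\chi_j=\E\{\var(X_j\mid X_{-j})\}$ this amounts to $X_j$ having genuine conditional variation, which is exactly the condition under which $\Omega_j$ is well defined.
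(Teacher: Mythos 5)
Your proposal is correct and takes essentially the same route as the paper, which obtains Corollary \ref{cor:omega} precisely by applying the (functional) delta method to the ratio $\hat{\Gamma}_j^{CF}/\hat{\chi}_j^{CF}$ given the asymptotic linearity established in Theorem \ref{asGammaAndChi} (see also the first claim of Lemma \ref{lem:SECF}); your explicit gradient computation recovers exactly the EIF $\tilde{\psi}_{\Omega_j}=\chi_j^{-1}(\tilde{\psi}_{\Gamma_j}-\Omega_j\tilde{\psi}_{\chi_j})$. One minor remark: the point you flag as the only non-trivial step is actually automatic, since marginal $o_p(1)$ of each remainder already implies joint $o_p(1)$ of the stacked remainder vector (convergence in probability to a constant is componentwise), so no fold-uniformity argument is needed, while the genuine side condition is the one you correctly identify, namely $\chi_j>0$ so that the ratio map is differentiable at $(\Gamma_j,\chi_j)$.
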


In Appendix \ref{App_Sample_splitting} we device how to 
estimate the variance $P\tilde{\psi}_{\Omega_j}^2$ using cross-fitting. We denote this estimator $\hat{\sigma}^{2,CF}_{\Omega_j}$. Lemma \ref{lem:SECF} in Appendix \ref{App_Sample_splitting} further gives the consistency of the cross-fitted variance estimators considered in this article.

As mentioned in Section \ref{sec:targetParameter}, the target parameter $\Omega_j$ may be scale sensitive, and rather than comparing the magnitude of $\Omega_j$ across different $j$'s, the variable importance is based on the corresponding test-statistic for the hypothesis $H:\Omega_j = 0$. Using Lemma \ref{lem:SECF} in Appendix \ref{App_Sample_splitting} and Slutsky's Theorem, we have the following result:
\begin{corollary}
Under the same setup as in Corollary \ref{cor:omega}, we have under the null-hypothesis, $H_0:\Omega_j=0$, that 
$$
\frac{\hat{\Omega}_j^{CF}}{\sqrt{\hat{\sigma}_{\Omega_j}^{2,CF}/n}} \overset{D}{\longrightarrow} \mathcal{N}(0,1).
$$
\end{corollary}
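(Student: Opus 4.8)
The plan is to combine the asymptotic normality established in Corollary~\ref{cor:omega} with the consistency of the cross-fitted variance estimator from Lemma~\ref{lem:SECF}, and then conclude via Slutsky's theorem. Since the corollary already delivers asymptotic linearity of the ratio estimator $\hat{\Omega}_j^{CF}$, no further efficiency-theoretic work is needed; the entire argument is a routine manipulation of convergence modes.

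First I would specialize Corollary~\ref{cor:omega} to the null. Under $H_0:\Omega_j=0$ the centering term vanishes, so
$$
n^{1/2}\hat{\Omega}_j^{CF} = n^{1/2}(\hat{\Omega}_j^{CF} - \Omega_j) \xrightarrow[]{D} \mathcal{N}(0, \sigma^2_{\Omega_j}),
$$
where $\sigma^2_{\Omega_j} = P\tilde{\psi}_{\Omega_j}^2$. This controls the (rescaled) numerator of the test statistic. Next I would invoke Lemma~\ref{lem:SECF} to obtain $\hat{\sigma}^{2,CF}_{\Omega_j} \xrightarrow[]{P} \sigma^2_{\Omega_j}$. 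Assuming the limiting variance is non-degenerate, $\sigma^2_{\Omega_j}>0$, the continuous mapping theorem then gives $\sqrt{\hat{\sigma}^{2,CF}_{\Omega_j}} \xrightarrow[]{P} \sigma_{\Omega_j}>0$, so that division by this quantity is well defined in the limit.

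It then remains to write the test statistic in the convenient form
$$
\frac{\hat{\Omega}_j^{CF}}{\sqrt{\hat{\sigma}^{2,CF}_{\Omega_j}/n}} = \frac{n^{1/2}\hat{\Omega}_j^{CF}}{\sqrt{\hat{\sigma}^{2,CF}_{\Omega_j}}},
$$
and apply Slutsky's theorem to the numerator (converging in distribution to $\mathcal{N}(0,\sigma^2_{\Omega_j})$) and the denominator (converging in probability to $\sigma_{\Omega_j}$). This yields a limiting law $\mathcal{N}(0,\sigma^2_{\Omega_j})/\sigma_{\Omega_j} = \mathcal{N}(0,1)$, as claimed. The only point requiring genuine care is the non-degeneracy $\sigma^2_{\Omega_j}>0$, which guarantees both that the limit is a proper standard normal and that the normalizing denominator stays bounded away from zero asymptotically; everything else is a direct consequence of the two cited results.
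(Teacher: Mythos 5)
Your proof is correct and follows exactly the paper's route: the paper also derives this corollary by combining the asymptotic normality from Corollary \ref{cor:omega} with the variance-estimator consistency from Lemma \ref{lem:SECF} and applying Slutsky's theorem. Your explicit attention to the non-degeneracy condition $\sigma^2_{\Omega_j}>0$ is a minor refinement the paper leaves implicit, but it does not change the argument.
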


\subsection{Choice of nuisance parameter estimators}\label{sec:nuis}
In the construction of $\hat{\Psi}_l^{CF}$ and $\hat{\Omega}_j^{CF}$ we need estimators of the nuisance parameters $\Lambda, \Lambda_c, \pi, \tau_l$. For estimation of $\hat{\Psi}_l^{CF}$ we also need an estimator of $\tau_d$, and for estimation of $\hat{\Omega}_j^{CF}$ we further need an estimator $\hat{E}_n^j$ of $\E(X_j\mid X_{-j} = x_{-j})$. Estimators $\hat{\Lambda}, \hat{\Lambda}_c, \hat{\pi}, \hat{E}_n^j$ can be chosen by any machine learning methods of choice, where we will use $\hat{\Lambda}$ to construct $\hat{\tau}(x) = e^{-\hat{\Lambda}(t\mid, 1, x)} - e^{-\hat{\Lambda}(t\mid, 0, x)}$. The estimator $\hat{\tau}$ uses $\hat{\Lambda}$, estimated from the entire data, to predict $S(t\mid a,x)$ for $a=0,1$ and it is termed the S-learner in the literature. Other methods of estimating $\hat{\tau}$ from initial estimates of $\Lambda$ of $\pi$ are possible and an overview of such meta-learners are given in \textcite{xu}. 

For estimation of $\tau_l$, we consider a plug-in estimator utilizing the definition of the parameter: 
$$
\hat{\tau}_l(x) = \hat{E}_n\{\hat{\tau}(X)\mid X_{-l} = x_{-l}\},
$$
where $\hat{E}_n$ is a regression of the predicted $\hat{\tau}(X)$'s onto $X_{-l}$. Another possibility is to use the meta-learner given by $\hat{\tau}_l(x) = \hat{E}_n\{\hat{\varphi}(X)\mid X_{-l} = x_{-l}\}$, since $\tau_l(x) = \E\{\varphi(X)\mid X_{-l} = x_{-l}\}$. This meta-learner is a version of the DR-learner (\cite{kennedy}), if $\hat{E}_n$ and $\hat{\varphi}$ are estimated on different samples. Extending the analysis in \textcite{kennedy} to a survival setting might provide theoretical guaranties of $\hat{\tau}_l$ based on the DR-learner, as opposed to the plug-in estimator, but in testing we found that the plug-in estimator performed better. This is in line with the recommendations given in \cite{hines}. \\ \\
Estimation of $\tau_d = \E\{\tau(X)\}$ is a well-studied problem in the survival setting, and it is possible to construct an estimator $\hat{\tau}_d$ with further theoretical guaranties compared to $\hat{\tau}_l$. For a thorough analysis of $\tau_d$-estimation see \textcite{heleneFrank} and \textcite{westling}, who construct estimators based on the EIF \eqref{pseudo} and derive properties under which such estimators are asymptotically linear with influence function given by the EIF. 
We summarize the construction given in \textcite{westling} and give some further details in Appendix \ref{App_Sample_splitting}.

\section{Simulation Study}\label{sec:sim}
\subsection{Setup}
We performed a simulation study to investigate the finite sample properties of the estimators of $\Psi_l$ and $\Omega_j$ with and without cross-fitting in the survival function setting. We generated data from the following models:
\begin{align*}
    \Lambda(t\mid A, X) &= 2 t^{0.0025} \exp \{-X_1 - X_2 -0.3X_3 + 0.1X_4 - A(2 - 0.5X_1 - 0.3X_2)\}, \\
    \Lambda_c(t\mid A, X) &= 2 t^{0.00025} \exp(-0.3X_1), \\
    \pi(1\mid X) &= \frac{1}{1 + \exp(-0.3X_1 - 0.3X_2)},
\end{align*}
where $X_j, \ j=1,\ldots, 4,$ are i.i.d standard normally distributed. Note that $\Lambda$ and $\Lambda_c$ follow Cox-models with baseline hazards given by Weibull hazards. The time-horizon is chosen as $t=10$ with the true values of $\Psi_1$ and $\Omega_1$ being approximately 0.6907 and -0.1518, respectively, in the survival setting. 

The nuisance parameter estimators are chosen in different combinations listed below and the performance of the target parameter estimators is compared between the different choices of nuisance parameter estimators. In the following, we use a naming convention of the nuisance choices in the form \textbf{\textit{A-B}} where \textbf{\textit{A}} corresponds to the choice of $\hat{\Lambda}$, $\hat{\Lambda}_c$ and $\hat{\pi}$, and \textbf{\textit{B}} corresponds to $\hat{E}_n$ (and $\hat{E}_n^1$  for the $\Omega_1$-estimation). When generalized additive models (GAMs) are used, we use the implementation given in the R-package \verb|mgcv| with smoothing function given by the default setting, thin plate regression splines. When random forests are used, we use  the implementation given in the R-package \verb|rfsrc| with hyperparameters given by the default settings, see \cite{rfsrc}. 
Specifically, we refer to the different  nuisance parameter estimator as follows.

\begin{description}
    \item[\textit{correct-GAM}] $\hat{\Lambda}$ and $\hat{\Lambda}_c$ are given by Breslow estimates based on correctly specified Cox models, the propensity score is estimated by a correctly specified GLM and $\hat{E}_n$ (and $\hat{E}_n^1$) is given by a GAM including all interactions. 
    \item[\textit{correct-RF}] $\hat{\Lambda}$ and $\hat{\Lambda}_c$ are given by Breslow estimates based on correctly specified Cox models, the propensity score is estimated by a correctly specified GLM and $\hat{E}_n$ (and $\hat{E}_n^1$) is estimated by a random forest. 
    \item[\textit{RF-RF}] $\hat{\Lambda}$, $\hat{\Lambda}_c$, $\hat{\pi}$ and $\hat{E}_n$ (and $\hat{E}_n^1$) are all estimated by random (survival) forests.   
    \item[\textit{RF-GAM}] $\hat{\Lambda}$, $\hat{\Lambda}_c$ and $\hat{\pi}$ are estimated by random (survival) forest and $\hat{E}_n$ (and $\hat{E}_n^1$) is estimated by a GAM.
\end{description}
Each 
of the above settings are used to estimate $\hat{\Psi}_1$, $\hat{\Psi}_1^{CF}$, $\hat{\Omega}_1$ and $\hat{\Omega}_1^{CF}$, respectively. We used $K=10$ folds for the cross-fitted estimators. To separate the cross-fitted and non-cross-fitted estimators we extend the nuisance parameter naming with a suffix \textbf{\textit{C}} such that, e.g., \textbf{\textit{RF-RF-CF}} corresponds to the estimate $\hat{\Psi}_l^{CF}$ or $\hat{\Omega}_1^{CF}$ using the nuisance estimators \textbf{\textit{RF-RF}}. For estimation of $\Psi_1$, we consider four different sample sizes, $n=1000, 2000, 3000, 4000$, and for each setting we run $N=1000$ simulations and calculate the corresponding estimators for each simulation. For estimation of $\Omega_1$ we consider four different sample sizes, $n=250, 500, 750, 1000$.  The results are presented in the next subsections.

\subsection{Results for $\Psi_1$}
\subsubsection{Correctly specified $\lambda$, $\Lambda_c$ and $\pi$}
In Figure \ref{fig:simPsiCorrect} we see the sampling distribution of the estimators of $\Psi_1$ under the different choices for the nuisance estimator $\hat{E}_n$ with and without cross-fitting. All other nuisances estimators are correctly specified according to \textbf{\textit{correct-GAM}} and \textbf{\textit{correct-RF}} in the previous subsection. From Corollary \ref{cor:psiCF} we know that the estimators based on cross-fitting should be unbiased and asymptotically normal, even when using flexible nuisance estimators, which can not be guaranteed for the corresponding estimators without cross-fitting. Indeed we see that estimators based on GAM seem to follow a normal distribution around the true value, whereas the estimator based on RF is severely biased without cross-fitting but much less so with. The results of the simulations are presented in Table \ref{tab:resPsiCorrect}. When GAM is used to estimate $\hat{E}_n$, the estimator seems to perform satisfactory according to Corollary \ref{cor:psiCF} both with and without cross-fitting. When RF is used for $\hat{E}_n$, we get a large bias without sampling splitting, as also seen in Figure \ref{fig:simPsiCorrect}, but when cross-fitting is used, there still seem to be some non-vanishing bias inherent from the RF-estimation. From the standard error of the simulations corresponding to \textbf{\textit{correct-RF-CF}}, it seems as if the estimator is converging on $n^{1/2}$-rate, which, with the non-vanishing bias, results in coverage decreasing with sample size. Since only $\hat{\tau}_l$ is based on random forest, with all other nuisance estimators being based on correctly specified parametric models, this result suggests that assumption \ref{ass2} is not fulfilled for $\hat{\tau}_l$, which again can possibly be attributed to the choice of hyperparameters used in the random forest.

\begin{figure}[h!] 
    \centering
    \begin{subfigure}[c]{0.9\linewidth}
    \centering
    \includegraphics[scale = 0.65]{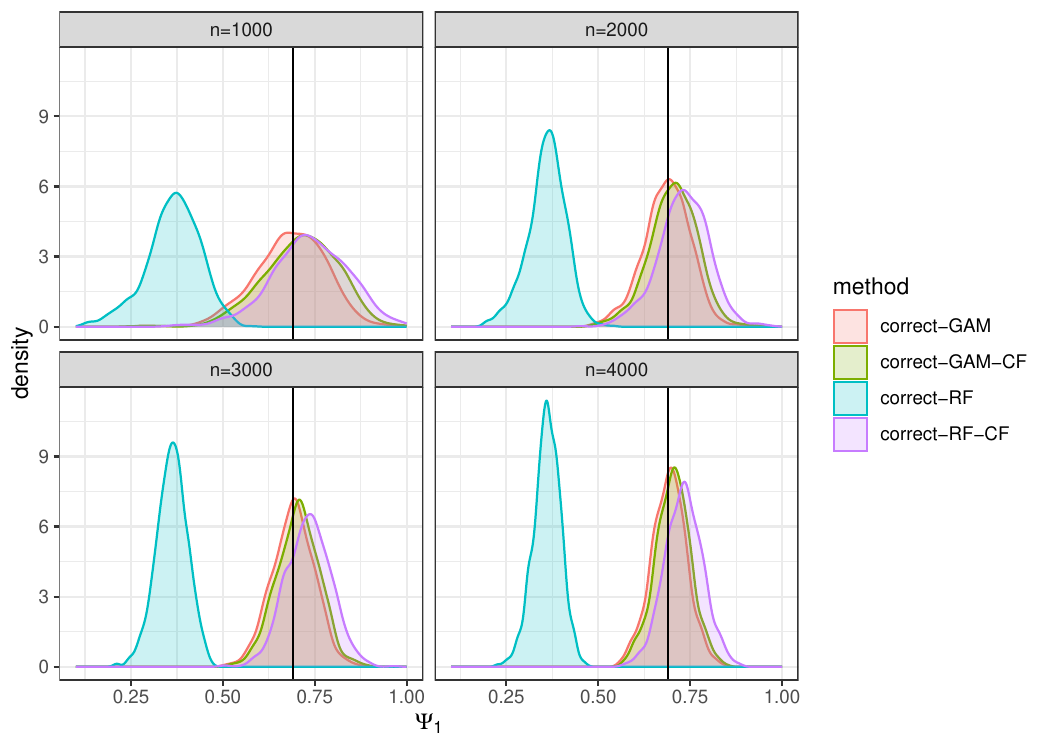}
    \caption{Correctly specified $\Lambda$, $\Lambda_c$ and $\pi$}
    \label{fig:simPsiCorrect}
    \end{subfigure}
    
    \bigskip
    
    \begin{subfigure}[c]{0.9\linewidth}
    \centering
    \includegraphics[scale = 0.65]{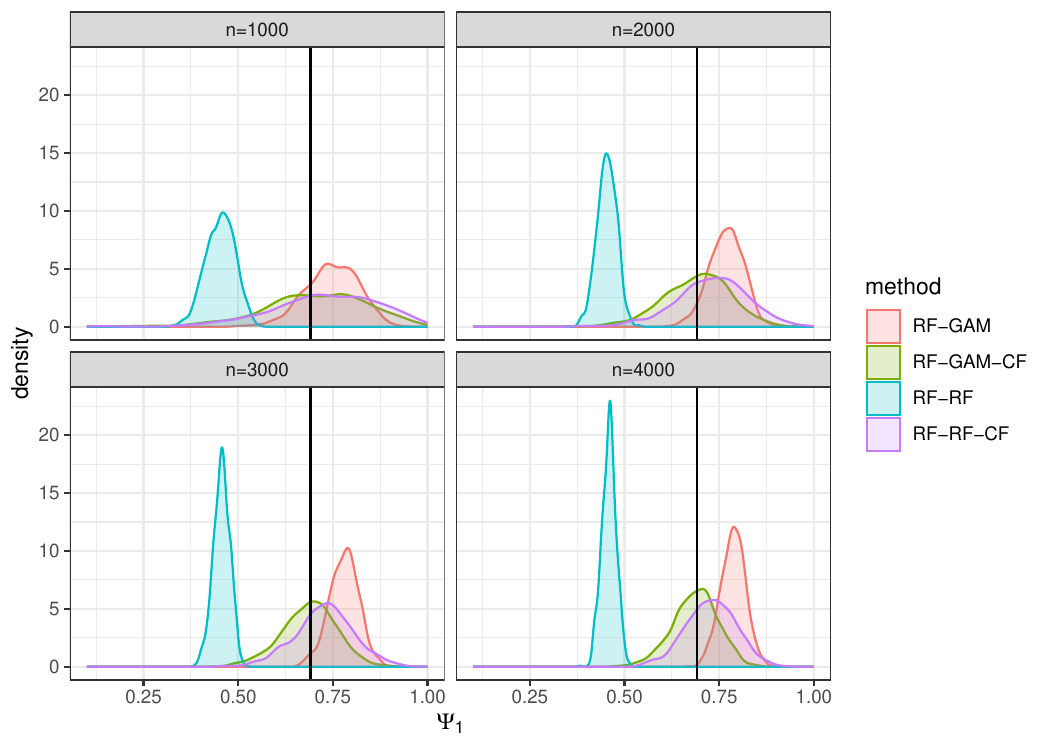}
    \caption{Flexible estimation of $\Lambda$, $\Lambda_c$ and $\pi$}
    \label{fig:simPsiRF}
    \end{subfigure}
    \caption{Sampling distribution of estimators of $\Psi_1$ in the survival function setting with varying nuisance estimators, with and without cross-fitting, across sample sizes $n=1000,2000,3000,4000$. The abbreviation of the methods should be read as follows: A-B-C, where A corresponds to the nuisance estimators $\Lambda$, $\Lambda_c$ and $\pi$, B corresponds to the nuisance estimator $\hat{E}_n$, and C corresponds to whether or not cross-fitting was used. Here, \textit{correct} corresponds to correctly specified Cox and logistic regression, RF corresponds to Random Forest, and GAM corresponds to a generalized additive model}
\end{figure}

\subsubsection{$\lambda$, $\Lambda_c$ and $\pi$ estimated by Random Forest}
In figure \ref{fig:simPsiRF} we see the sampling distribution of the estimators for $\Psi_l$, where the nuisance parameters $\lambda$, $\Lambda_c$ and $\pi$ are all estimated flexibly via \textbf{RF}. The estimators without cross-fitting are seen to be severely biased as was the case with correctly specified $\lambda$, $\Lambda_c$ and $\pi$. Table \ref{tab:resPsiRF} summarizes the results for the estimators using \textbf{RF}. Using \textbf{RF} to estimate $\hat{E}_n$ is seen to introduce some non-vanishing bias, as in Table \ref{tab:resPsiCorrect}, resulting in decreasing coverage. In the case of \textbf{GAM}-estimation for $\hat{E}_n$, cross-fitting is able to correctly debias the estimator giving approximately nominal coverage. 

\begin{table}[h!]
\centering
\begin{subtable}{0.8\textwidth}
\centering
\resizebox{0.9\textwidth}{!}{%
\begin{tabular}{rlrrrrr}
  \hline
  n & method & bias $\Psi_1$ & coverage & SD & mean SE & MSE \\ 
  \hline
    1000 & correct-GAM & -0.0161 & 0.9620 & 0.0963 & 0.0958 & 0.0095 \\ 
    2000 &  & -0.0120 & 0.9620 & 0.0640 & 0.0659 & 0.0042 \\ 
    3000 &  & -0.0068 & 0.9340 & 0.0576 & 0.0535 & 0.0034 \\ 
    4000 &  & -0.0048 & 0.9320 & 0.0478 & 0.0464 & 0.0023 \\ \hline
    1000 & correct-GAM-CF & 0.0230 & 0.9530 & 0.1030 & 0.1031 & 0.0111 \\ 
    2000 &  & 0.0107 & 0.9620 & 0.0657 & 0.0681 & 0.0044 \\ 
    3000 &  & 0.0094 & 0.9260 & 0.0588 & 0.0548 & 0.0035 \\ 
    4000 &  & 0.0082 & 0.9360 & 0.0484 & 0.0472 & 0.0024 \\ \hline
    1000 & correct-RF & -0.3349 & 0.0030 & 0.0759 & 0.0805 & 0.1179 \\ 
    2000 &  & -0.3382 & 0.0000 & 0.0526 & 0.0548 & 0.1172 \\ 
    3000 &  & -0.3364 & 0.0000 & 0.0422 & 0.0439 & 0.1149 \\ 
    4000 &  & -0.3366 & 0.0000 & 0.0369 & 0.0380 & 0.1146 \\ \hline
    1000 & correct-RF-CF & 0.0483 & 0.9200 & 0.1068 & 0.1066 & 0.0137 \\ 
    2000 &  & 0.0381 & 0.9240 & 0.0698 & 0.0716 & 0.0063 \\ 
    3000 &  & 0.0381 & 0.8840 & 0.0613 & 0.0577 & 0.0052 \\ 
    4000 &  & 0.0371 & 0.8700 & 0.0518 & 0.0498 & 0.0041 \\ 
   \hline
\end{tabular}
}
\subcaption{Correctly specified $\Lambda$, $\Lambda_c$ and $\pi$}
\label{tab:resPsiCorrect}
\end{subtable}
\bigskip

\begin{subtable}{0.8\textwidth}
\centering
\resizebox{0.9\textwidth}{!}{%
\begin{tabular}{rlrrrrr}
  \hline
  n & method & bias $\Psi_1$ & coverage & SD & mean SE & MSE \\ 
  \hline
  1000 & RF-RF & -0.2438 & 0.0000 & 0.0384 & 0.0332 & 0.0609 \\ 
    2000 &  & -0.2429 & 0.0000 & 0.0256 & 0.0235 & 0.0597 \\ 
    3000 &  & -0.2404 & 0.0000 & 0.0220 & 0.0191 & 0.0583 \\ 
    4000 &  & -0.2377 & 0.0000 & 0.0189 & 0.0165 & 0.0568 \\ \hline
    1000 & RF-RF-CF & 0.0469 & 0.9140 & 0.1575 & 0.1492 & 0.0270 \\ 
    2000 &  & 0.0353 & 0.9340 & 0.0967 & 0.0976 & 0.0106 \\ 
    3000 &  & 0.0304 & 0.9230 & 0.0780 & 0.0765 & 0.0070 \\ 
    4000 &  & 0.0318 & 0.9050 & 0.0661 & 0.0649 & 0.0054 \\  \hline
    1000 & RF-GAM & 0.0503 & 0.6360 & 0.0698 & 0.0453 & 0.0074 \\ 
    2000 &  & 0.0719 & 0.4260 & 0.0453 & 0.0324 & 0.0072 \\ 
    3000 &  & 0.0826 & 0.2310 & 0.0391 & 0.0263 & 0.0083 \\ 
    4000 &  & 0.0901 & 0.0980 & 0.0328 & 0.0227 & 0.0092 \\ \hline
    1000 & RF-GAM-CF & 0.0087 & 0.9460 & 0.1601 & 0.1521 & 0.0257 \\ 
    2000 &  & -0.0051 & 0.9500 & 0.0883 & 0.0945 & 0.0078 \\ 
    3000 &  & -0.0088 & 0.9440 & 0.0726 & 0.0739 & 0.0053 \\ 
    4000 &  & -0.0097 & 0.9560 & 0.0601 & 0.0623 & 0.0037 \\ \hline
\end{tabular}
}
\subcaption{Flexible estimation of $\Lambda$, $\Lambda_c$ and $\pi$}
\label{tab:resPsiRF}
\end{subtable}
\caption{Results of 1000 simulations of $\hat{\Psi}_1$ in the survival function setting with varying nuisance estimators, with and without cross-fitting, across sample sizes $n=1000,2000,3000,4000$. The abbreviation of the methods should be read as follows: A-B-C, where A corresponds to the nuisance estimators $\Lambda$, $\Lambda_c$ and $\pi$, B corresponds to the nuisance estimator $\hat{E}_n$, and C corresponds to whether or not cross-fitting was used. Here, \textit{correct} corresponds to correctly specified Cox and logistic regression, RF corresponds to Random Forest, and GAM corresponds to a generalized additive model. The tables shows the bias, coverage, empirical standard deviation (SD), mean estimated standard error (mean SE), and the mean squared error (MSE).}
\label{tab:resPsi}
\end{table}

\subsection{Results for $\Omega_1$}
\subsubsection{Correctly specified $\lambda$, $\Lambda_c$ and $\pi$}
Figure \ref{fig:simOmegaCorrect} presents the sampling distribution of the estimators of $\Omega_1$ with correctly specified $\Lambda$, $\Lambda_c$ and $\pi$ for different choices for estimation of $\hat{E}_n$ and $\hat{E}_n^1$, with and without cross-fitting. Compared to $\Psi_1$-estimation, there is no severe shift in sample distribution between the estimators. The estimator with \textbf{RF} and no cross-fitting is seen to have a slightly wider distribution than the others, for which there is no noticeable difference. Table \ref{tab:resOmegaCorrect} gives the results of the simulations, where the \textbf{RF} without cross-fitting is seen to generally have a slightly larger bias and MSE than the others. Remarkably, compared to the $\Psi_1$-estimation, far fewer observations are needed for reliable estimation of $\Omega_1$. One thing to note is that Corollary \ref{cor:omega} is only guaranteed to hold for the cross-fitted estimator, but since the estimator is given as the ratio of two other estimators, it can happen that the bias introduced by employing \textbf{RF} without cross-fitting roughly cancels in the ratio, which may explain why the \textbf{RF} without cross-fitting is seen to perform reliably with approximately nominal coverage. Indeed this is the case for the simulation study conducted here, as seen in Figure \ref{fig:simGammaCorrect} and \ref{fig:simChiCorrect} in the Appendix, where the sample distribution of the estimators of $\Gamma_1$ and $\chi_1$ are shown. Hence, we cannot recommend \textbf{RF} without cross-fitting, as the cancellation of biases in the ratio of $\Gamma_1$ and $\chi_1$ is unlikely to occur generally. 

\begin{figure}[h!] 
    \centering
    \begin{subfigure}[c]{0.9\linewidth}
    \centering
    \includegraphics[scale = 0.65]{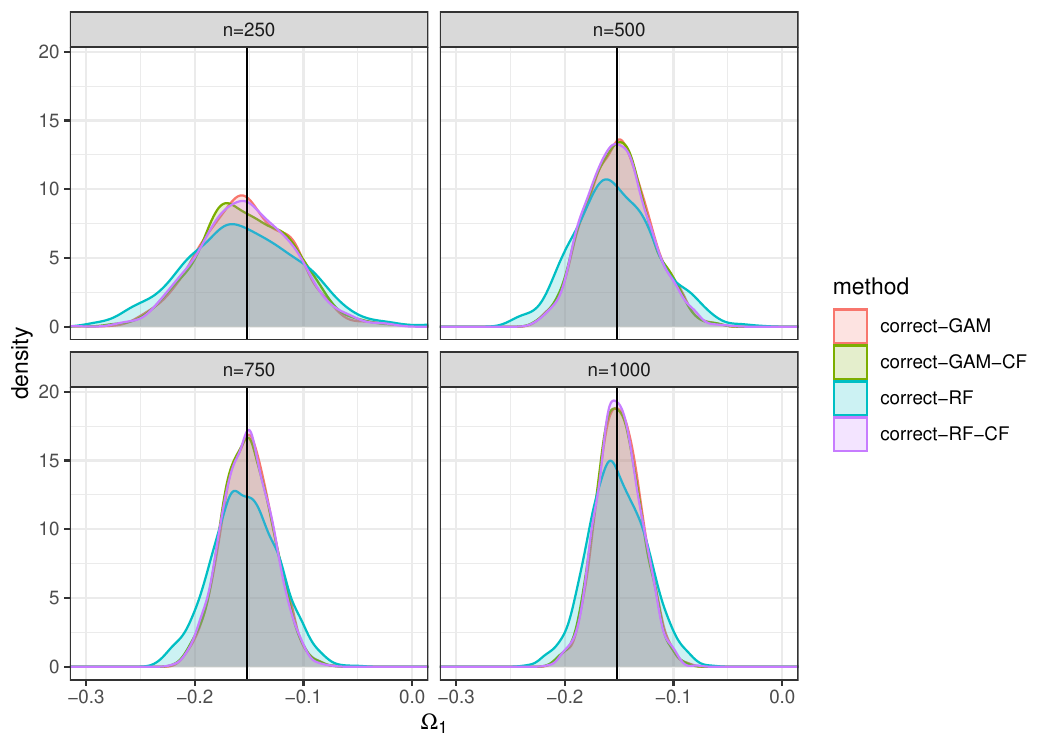}
    \caption{Correctly specified $\Lambda$, $\Lambda_c$ and $\pi$}
    \label{fig:simOmegaCorrect}
    \end{subfigure}

    \begin{subfigure}[c]{0.9\linewidth}
    \centering
    \includegraphics[scale = 0.65]{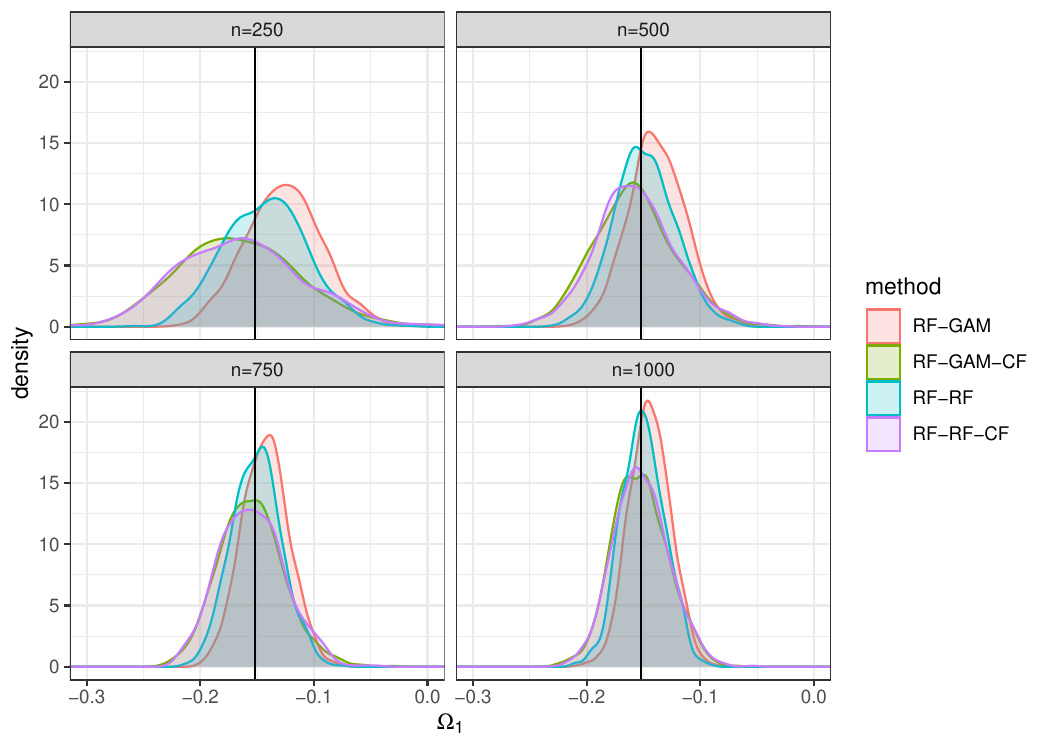}
    \caption{Flexible estimation of $\Lambda$, $\Lambda_c$ and $\pi$}
    \label{fig:simOmegaRF}
    \end{subfigure}
    \caption{Sampling distribution of estimators of $\Omega_1$ in the survival function setting with varying nuisance estimators, with and without cross-fitting, across sample sizes $n=250,500,750,1000$. The abbreviations of the methods are read as follows: A-B-C, where A corresponds to the nuisance estimators $\Lambda$, $\Lambda_c$ and $\pi$, B corresponds to the nuisance estimators $\hat{E}_n$ and $\hat{E}^j_n$, and C corresponds to whether or not cross-fitting was used. Here, \textit{correct} corresponds to correctly specified Cox and logistic regression, RF corresponds to Random Forest, and GAM corresponds to a generalized additive model.}
\end{figure}

\subsubsection{$\lambda$, $\Lambda_c$ and $\pi$ estimated by Random Forest}
Figure \ref{fig:simOmegaRF} shows the sampling distribution of the estimators of $\Omega_1$ when RF is used for estimation of $\lambda$, $\Lambda_c$ and $\pi$. Here, the difference between the cross-fitted and non-cross-fitted estimators are more noticeable. Interestingly, using cross-fitting seem to produce similar distributions, regardless of whether \textbf{RF} og \textbf{GAM} was used for estimation of $\hat{E}_n$ and $\hat{E}_n^1$. Table \ref{tab:resOmegaRF} presents the results of the simulation study. Generally, the bias seem to vanish with the sample size (again, in the case of \textbf{RF-RF}, this might be a coincidence), but the coverage for the non-cross-fitted estimators are far off, whereas cross-fitting seem to provide approximately nominal coverage, even in relatively small samples.

\begin{table}[h!]
\centering
\begin{subtable}{0.8\textwidth}
\centering
\resizebox{0.9\textwidth}{!}{%
\begin{tabular}{rlrrrrr}
  \hline
  n & method & bias $\Omega_1$ & coverage & SD & mean SE & MSE \\ 
  \hline
    250 & correct-GAM & - 0.0011 & 0.934 & 0.0422 & 0.0408 & 0.0018 \\ 
  500 &  & 0.0011 & 0.945 & 0.0291 & 0.0283 & 0.0009 \\ 
  750 &  & 0.0009 & 0.952 & 0.0230 & 0.0231 & 0.0005 \\ 
  1000 &  & 0.0021 & 0.947 & 0.0203 & 0.0201 & 0.0004 \\ \hline
    250 & correct-GAM-CF & -0.0004 & 0.932 & 0.0438 & 0.0419 & 0.0019 \\ 
  500 &  & 0.0006 & 0.948 & 0.0291 & 0.0287 & 0.0008 \\ 
  750 &  & 0.0001 & 0.950 & 0.0231 & 0.0233 & 0.0005 \\ 
  1000 &  & 0.0012 & 0.948 & 0.0204 & 0.0202 & 0.0004 \\ \hline
    250 & correct-RF & -0.0041 & 0.946 & 0.0532 & 0.0544 & 0.0028 \\ 
  500 &  & -0.0015 & 0.953 & 0.0378 & 0.0387 & 0.0014 \\ 
  750 &  & -0.0020 & 0.955 & 0.0299 & 0.0317 & 0.0009 \\ 
  1000 &  & 0.0003 & 0.956 & 0.0265 & 0.0277 & 0.0007 \\ \hline
    250 & correct-RF-CF & -0.0012 & 0.936 & 0.0428 & 0.0414 & 0.0018 \\ 
  500 &  & 0.0006 & 0.945 & 0.0291 & 0.0283 & 0.0008 \\ 
  750 &  & -0.0001 & 0.946 & 0.0229 & 0.0229 & 0.0005 \\ 
  1000 &  & 0.0015 & 0.947 & 0.0202 & 0.0199 & 0.0004 \\ 
   \hline
\end{tabular}
}
\subcaption{Correctly specified $\Lambda$, $\Lambda_c$ and $\pi$}
\label{tab:resOmegaCorrect}
\end{subtable}
\bigskip

\begin{subtable}{0.8\textwidth}
\centering
\resizebox{0.9\textwidth}{!}{%
\begin{tabular}{rlrrrrr}
  \hline
  n & method & bias $\Omega_1$ & coverage & SD & mean SE & MSE \\ 
  \hline
  250 & RF-RF & 0.0090 & 0.915 & 0.0363 & 0.0324 & 0.0014 \\ 
  500 &  & 0.0031 & 0.904 & 0.0265 & 0.0224 & 0.0007 \\ 
  750 &  & 0.0014 & 0.891 & 0.0217 & 0.0183 & 0.0005 \\ 
  1000 &  & 0.0013 & 0.897 & 0.0190 & 0.0158 & 0.0004 \\ \hline
    250 & RF-RF-CF & -0.0127 & 0.935 & 0.0548 & 0.0539 & 0.0032 \\ 
  500 &  & -0.0047 & 0.946 & 0.0353 & 0.0355 & 0.0013 \\ 
  750 &  & -0.0027 & 0.948 & 0.0287 & 0.0285 & 0.0008 \\ 
  1000 &  & -0.0001 & 0.952 & 0.0242 & 0.0246 & 0.0006 \\  \hline
    250 & RF-GAM & 0.0244 & 0.740 & 0.0331 & 0.0245 & 0.0017 \\ 
  500 &  & 0.0135 & 0.747 & 0.0249 & 0.0168 & 0.0008 \\ 
  750 &  & 0.0092 & 0.775 & 0.0205 & 0.0137 & 0.0005 \\ 
  1000 &  & 0.0074 & 0.759 & 0.0181 & 0.0118 & 0.0004 \\ \hline
    250 & RF-GAM-CF & -0.0129 & 0.941 & 0.0558 & 0.0545 & 0.0033 \\ 
  500 &  & -0.0060 & 0.949 & 0.0358 & 0.0365 & 0.0013 \\ 
  750 &  & -0.0032 & 0.952 & 0.0285 & 0.0293 & 0.0008 \\ 
  1000 &  & -0.0011 & 0.961 & 0.0241 & 0.0253 & 0.0006 \\ \hline
\end{tabular}
}
\subcaption{Flexible estimation of $\Lambda$, $\Lambda_c$ and $\pi$}
\label{tab:resOmegaRF}
\end{subtable}
\caption{Results of 1000 simulations of $\hat{\Omega}_1$ in the survival function setting with varying nuisance estimators, with and without cross-fitting, across sample sizes $n=250,500,750,1000$. The abbreviations of the methods are read as follows: A-B-C, where A corresponds to the nuisance estimators $\Lambda$, $\Lambda_c$ and $\pi$, B corresponds to the nuisance estimators $\hat{E}_n$ and $\hat{E}^j_n$, and C corresponds to whether or not cross-fitting was used. Here, \textit{correct} corresponds to correctly specified Cox and logistic regression, RF corresponds to Random Forest, and GAM corresponds to a generalized additive model. The tables shows the bias, coverage, empirical standard deviation (SD), mean estimated standard error (mean SE), and the mean squared error (MSE).}
\label{tab:resOmega}
\end{table}

\section{Application to real data}
 To showcase the two approaches and their differences, we analyze two data different data sets using the estimands $\Psi_l$ and $\Omega_j$ to assess potential treatment effect heterogeneity. In the first example, the two estimands are used analogously to estimate the variable importance of each individual covariate, whereas in the second example $\Psi_l$ is used to analyze the importance of a group of variables, while $\Omega_j$ is estimated for individual covariates.

\subsection{Application to HIV data}
We apply the methods described in the previous sections to the AIDS Clinical Trial Group Study 175 (\cite{hammer}). The data can be found in the \verb|R|-package \verb|ACTG175| and consist of 2139 HIV patients who were randomized to one of four treatments: (1) zidovudine (ZDV)(n=532), (2) zidovudine + didanosine (ZDV+ddI)(n=522), (3) zidovudine + zalcitabine (ZDV+ZAL)(n=524), and (4) didanosine(n=561). Patients were followed from treatment initiation until end of follow-up or a composite event consisting of a decline in CD4 cell count greater than $50\%$, or disease progression to AIDS, or death. We define our event of interest as the composite event. Patients were censored at end-of-follow-up. In line with \textcite{cui}, we define the treatment effect (comparing two treatments) to be given by the RMST at 1000 days after treatment initiation, and we consider 12 baseline covariates for which we will analyse the possible treatment effect heterogeneity explained by each of them. The covariates consist of 5 continuous variables, age, CD4 cell count, CD8 cell count, weight (kg), Karnofsky score, and 7 binary variables, gender, race, hemophilia, homosexual activity, antiretroviral history, symptomatic status, intravenous drug use history.

The aim of the study was to compare monotherapy (ZDV or ddI) with combination therapy (ZDV+ddI or ZDV + ZAL),
and as an illustration we consider  the 
comparison ZDV vs ZDV+ZAL.
We applied the cross-fitted TE-VIM and the best partially linear projection estimators, with $K=10$, described in Section \ref{sec:estimation}, with all nuisance parameters estimated by Random Forests as implemented in the \verb|R|-package \verb|RandomForestSRC|. For the TE-VIM, we used the logit-transformed $\hat{\zeta}_l^{CF}$ (see Appendix \ref{App-logit}) with an expit-back-transformation to obtain $\Psi_l$-estimates and confidence intervals that respect the boundary $\Psi_l \in [0,1)$. 

In Table \ref{tab:zdv+zal} we see the results based on the comparison of ZDV vs ZDV+ZAl. Here the TE-VIM estimates are ranging from 0.02 to 0.992 with cd8 having the largest estimate, but all with a confidence interval ranging from 0 to 1. The results based on the best partially linear projection give p-values in the range 0.007 to 0.953 with 3 significant p-values for cd8, karnof, and cd4, respectively. Both measures ranks CD8 cell count as the most "important" in terms of explaining treatment heterogeneity, but with the TE-VIM having a confidence interval of [0,1]. The results suggest that CD8 cell count, Karnofsky score and CD4 cell count may be important in explaining the treatment effect heterogeneity of ZDV vs ZDV+ZAl on RMST at $t=1000$ days after treatment initiation. In order to interpret the direction of the estimates, we use CD8 as an example. The estimate of $\Omega_j$ is $0.093$ with a p-value of $0.007$ and by looking at the definition of $\Omega_j$ as the least squares projection onto the partially linear model, this corresponds to an increase in treatment effect of 0.093 days in terms of RMST for every increase in CD8 of 1. As noted, the partially linear model might fail to hold, but the interpretation shows that the treatment effect is increased for larger values of CD8.

\begin{table}[h!]
    \begin{subtable}[c]{0.5\textwidth}
    \label{tab:psi-zdv+zal}
    \centering
    \begin{tabular}{lrrr}
  \hline
  covariate & $\Psi_j$ & lower & upper \\ 
  \hline
 cd8 & 0.992 & 0.000 & 1.000 \\ 
   karnof & 0.368 & 0.000 & 1.000 \\ 
   symptom & 0.348 & 0.000 & 1.000 \\ 
   gender & 0.197 & 0.001 & 0.982 \\ 
   str2 & 0.117 & 0.000 & 1.000 \\ 
   age & 0.088 & 0.000 & 1.000 \\ 
   race & 0.052 & 0.000 & 1.000 \\ 
   homo & 0.037 & 0.000 & 1.000 \\ 
   hemo & 0.029 & 0.000 & 1.000 \\ 
   wtkg & 0.026 & 0.000 & 1.000 \\ 
   cd4 & 0.023 & 0.000 & 1.000 \\ 
   drugs & 0.020 & 0.000 & 1.000 \\ 
   \hline
\end{tabular}
\subcaption{Heterogeneity explained by $\Psi_j$}
    \end{subtable}
    \begin{subtable}[c]{0.5\textwidth}
    \label{tab:omega-zdv+zal}
    \centering
    \begin{tabular}{rlrrr}
  \hline
  covariate & $\Omega_j$ & SE & p-value \\ 
  \hline
 cd80 & 0.093 & 0.035 & 0.007 \\ 
   karnof & 6.446 & 3.024 & 0.033 \\ 
   cd40 & -0.288 & 0.136 & 0.034 \\ 
   symptom & -55.853 & 47.467 & 0.239 \\ 
   drugs & 48.365 & 50.376 & 0.337 \\ 
   wtkg & 0.971 & 1.092 & 0.374 \\ 
   age & -0.985 & 1.702 & 0.563 \\ 
   race & -18.520 & 37.165 & 0.618 \\ 
   gender & 24.795 & 54.105 & 0.647 \\ 
   homo & -19.473 & 52.341 & 0.710 \\ 
   hemo & 5.463 & 77.456 & 0.944 \\ 
   str2 & -1.719 & 29.241 & 0.953 \\ 
   \hline
\end{tabular}
\subcaption{Heterogeneity explained by $\Omega_j$}

\end{subtable}
\caption{Heterogeneity in the effect of zidovudine + zalcitabine (ZDV+zal) vs zidovudine (ZDV) on RMST. Estimation of variable importance on the treatment effect of ZDV+zal vs ZDV on RMST. The data is from the study \textcite{hammer} and the outcome is time to an event consisting of a decline in CD4 cell count greater than $50\%$, disease progression to AIDS, or death. The treatment effect is defined as the difference in RMST at 1000 days after treatment initiation between ZDV+zal and ZDV. In table (a), the variable importance is estimated by the $expit$-transformation of $\hat{\zeta}_l^{CF}$, for $l$ ranging over the single covariates, with corresponding confidence intervals. In table (b), the variable importance is estimated by $\hat{\Omega}_j^{CF}$, for $j$ ranging over the single covariates.}
\label{tab:zdv+zal}
\end{table}

In comparing the results related to $\Psi_l$ and $\Omega_j$, respectively, we see the difference in sample sizes needed for providing meaningful estimates between the two measures, as indicated by the simulation study in Section \ref{sec:sim}. With confidence intervals of [0,1], the estimates of $\Psi_l$ do not give any inside into the potential heterogeneity in the effect of the treatments considered here,  whereas the estimates of $\Omega_j$ were able to find significant treatment effect modification for some of the covariates.

\subsection{LEADER data}
We consider data from the LEADER study, see \textcite{marso}, that investigates the effect of liraglutide in combination with standard care in patients with type 2 diabetes on cardiovascular outcomes. We focus on all-cause mortality as  event of interest to avoid competing risks issues. In the study, 9340 patients were randomized to either liraglutide $(n = 4668)$ or placebo $(n = 4672)$ with a maximum follow-up of 60 months. Patients were followed until an event with the outcome of interest or censoring at end of follow-up or exit from the study. For the sake of illustration, we consider a complete-case analysis leaving us with 4586 patients receiving liraglutide and 4576 patients receiving placebo at baseline. 

 \textcite{marso}  used Cox regression to analyse the effect of liraglutide on  all-cause mortality. 
 We focus on the difference in survival probability
at $t=3$ years of follow-up.
Since the study is randomized, all CATE-functions (for different covariate sets $X$) are identified from the observed data, and we can simply choose an appropriate set among the available covariates without violating the assumption of no unmeasured confounding. Thus, we choose the following 12 covariates for our analysis (all measured at baseline): sex, age, SYSBPBL (systolic blood pressure), DIABPBL (diastolic blood pressure), CHOL1BL (total cholesterol), HDLBL (HDL cholesterol), LDLBL (LDL cholesterol), TRIG1BL (Triglycerides), EGFREPB (eGFR), CREATBL (serum Creatinine), BMIBL (BMI), DIABDUR (diabetes duration). 

For the estimation of $\Omega_j$, we exclude DIABPL, HDL1BL, LDL1BL, and TRIG1Bl, to avoid collinearity. From the definition of $\Omega_j$ as the least squares projection onto the partially linear model, it is clear that collinearity in a group of covariates can prevent the detection of heterogeneity, as most of the effect could be captured in $w(X_{-j})$. Hence we choose CHOL1BL and SYSBPBL to represent cholesterol and blood pressure, respectively. Furthermore, we dichotomize CHOL1BL into $\text{CHOL}\_ \text{GRP}$, which is 1 if CHOL1BL is above 5 and 0 if CHOL1BL is less than or equal to 5, representing a high versus normal cholesterol.  

For estimation of $\Psi_l$, we choose two groups and compare their importance in explaining the potential heterogeneity in $\tau$. We let Cholesterol be the set comprised of HDL1BL, LDL1BL, CHOL1BL, TRIG1BL and let Bloodpressure be the set comprised of SYSBPBL and DIABPBL. Furthermore, we use the estimator $\zeta_l^{CF}$ to estimate the logit-transformation of $\Psi_l$ (see Appendix \ref{App-logit}) and obtain estimates and confidence intervals of $\Psi_l$ in the range $[0,1)$. 

We use random forests for estimation of the all nuisance parameters for both $\Omega_j$ and $\Psi_l$ with default hyperparameter settings as given in the \verb|R|-package \verb|randomForestSRC| and $K=5$ folds for the cross-fitting procedure. 

Results for $\Psi_l$ are presented in Table \ref{tab:leader-psil-surv}. 
The estimates and corresponding 95\% confidence-intervals for 
Cholesterol and Bloodpressure are given by
 0.99 [0, 1] and 
0.33 [0.02, 0.91], respectively. 
The point estimate associated with Cholesterol indicates that it is  more important  than Bloodpressure in explaining the heterogeneity. However,
the corresponding confidence-intervals are spanning the entire range in both cases and therefore little can be concluded from the estimates. Additionally, we estimated $\Theta_d$ by an exponential transformation of a cross-fitted one-step estimator of the log-transformed $\Theta_d$, analogously to the logit-transformation of $\Psi_l$, to obtain a $\Theta_d$-estimate above 0. We got an estimate of $5\cdot 10^{-5}$ $[1.5 \cdot 10^{-6}, 1.7 \cdot 10^{-3}]$. Hence, caution is warranted in the interpretation of the estimates of $\Psi_l$ (aside from the range of the confidence intervals), as the estimate of $\Theta_d$ suggests no heterogeneity in the treatment effect $\tau$.

Results for $\Omega_j$ are presented in Table \ref{tab:leader-omega-res}, where it is seen that all estimates are insignificant except for the  estimate 0.025 associated with  \text{CHOL\_GRP}, p=0.042. By the least squares definition of $\Omega_j$, this means an increase in the treatment effect of $2.5\%$ for patients with high cholesterol compared to patients with normal cholesterol. In Figure \ref{fig:leader-kaplan-meier}, we see the Kaplan-Meier curves for the high and normal cholesterol groups. The figure illustrates the potential heterogeneity in relation to cholesterol found by $\Omega_j$.

\begin{table}[h!]
\centering
\begin{tabular}{llrr}
  \hline
group & $\Psi_l$ & lower & upper \\ 
  \hline
Cholesterole & 0.99 & 0.00 & 1.00 \\ 
  BloodPressure & 0.326 & 0.02 & 0.91 \\ 
   \hline
\end{tabular}
\caption{Results from the LEADER data on treatment effect heterogeneity in liraglutide vs. placebo via $\Psi_l$. Estimation of $\Psi_l$ was based on $\tau$ defined in terms of survival probabilities. 
The group Cholesterol is comprised of the covariates HDL1BL, LDL1BL,
CHOL1BL, TRIG1BL, and Bloodpressure is comprised of SYSBPBL and DIABPBL. The outcome is time to all-cause death and the time horizon for the treatment effect $\tau$ is set at $t=3$ years. All nuisance parameters were estimated with random forests and $K=5$ folds were used for cross-fitting.}
\label{tab:leader-psil-surv}
\end{table}

\begin{table}[h!]
\centering
\begin{tabular}{lrrr}
  \hline
covariate & $\Omega_j$ & SE & p-value \\ 
  \hline
SEX & -0.0246 & 0.0306 & 0.4210 \\ 
  AGE & 0.0023 & 0.0015 & 0.1362 \\ 
  SYSBPBL & 0.0004 & 0.0003 & 0.2076 \\ 
  CHOL\_GRP & 0.0246 & 0.0121 & 0.0421 \\ 
  EGFREPB & -0.0004 & 0.0014 & 0.7856 \\ 
  CREATBL & -0.0008 & 0.0009 & 0.4263 \\ 
  BMIBL & 0.0001 & 0.0009 & 0.9152 \\ 
  DIABDUR & -0.0013 & 0.0007 & 0.0821 \\ 
   \hline
\end{tabular}
\caption{Results from the LEADER data on treatment effect heterogeneity in liraglutide vs. placebo via $\Omega_j$. Estimation of $\Omega_j$ was based on $\tau$ defined in terms of survival probabilities. 
The outcome is time to all-cause death and the time horizon for the treatment effect $\tau$ is set at $t=3$ years. All nuisance parameters were estimated with random forests and $K=5$ folds were used for cross-fitting.} 
\label{tab:leader-omega-res}
\end{table}

\begin{figure}[h!]
    \centering
    \begin{subfigure}{0.5\linewidth}
        \centering
        \includegraphics[scale = 0.6]{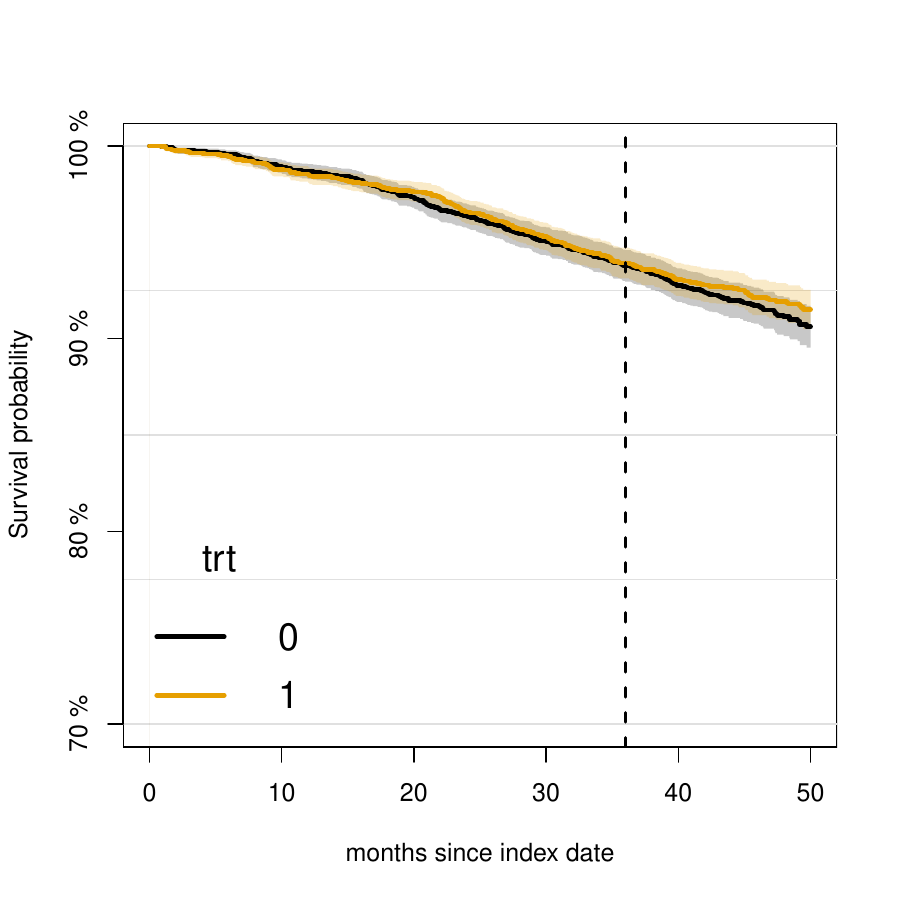}
        \caption{Cholestorol less than or equal to 5}
        \label{fig:leader-cholestorolLessThan5}
    \end{subfigure}
    \begin{subfigure}{0.5\linewidth}
        \centering
        \includegraphics[scale = 0.6]{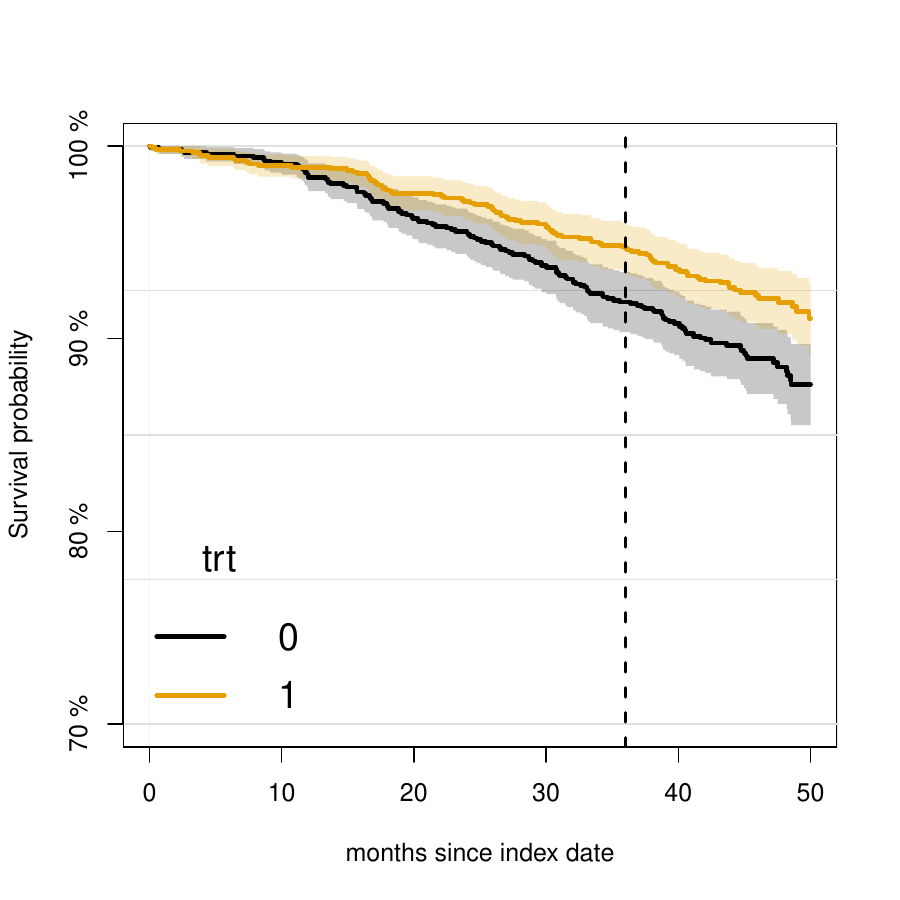}
        \caption{Cholestorol greater than 5}
        \label{fig:leader-cholestorolGreaterThan5}
    \end{subfigure}
    \caption{Kaplan-Meier plots for LEADER data. The yellow curve, corresponding to trt=1, shows the estimated survival probability for patients receiving liraglutide. The grey curve, corresponding to trt=0, shows the survival probability for patients receiving placebo. The dotted line indicates $t=3$ years after baseline. (a) shows the estimated curves for patients with  normal cholesterol at baseline, and (b) shows the curves for patients with  high cholesterol at baseline.}
    \label{fig:leader-kaplan-meier}
\end{figure}

As with the HIV data, the estimates of $\Psi_l$ failed to provide insight into the potential treatment effect heterogeneity with confidence intervals spanning the entire range $[0,1]$, while the estimates of $\Omega_j$ indicated at potential heterogeneity through cholesterol. 

\section{Closing remarks}
In this paper, we have extended the treatment effect variable importance measures introduced by \textcite{hines} to a time-to-event setting allowing for censored data. We have constructed estimators for the TE-VIMs $\Theta_l$ and $\Psi_l$ using two different CATE functions and given assumptions under which they are seen to be asymptotically normal and locally efficient. The assumptions require that the nuisance estimators $\hat{\tau}$ and $\hat{\tau}_l$ are both consistent at $n^{-1/4}$-rate, allowing for the use of machine-learning to estimate the nuisance parameters.  The simulation study showed that the estimators without cross-fitting were heavily biased when using data adaptive nuisance estimators, such as random forest, but that the cross-fitting was mostly able to correct the bias introduced by the flexible nuisance estimation. Importantly, it seems that the main challenge lies in choosing $\hat{E}_n$ appropriately, since using random forest (with default hyperparameters) was seen to introduce a non-vanishing bias, whereas using GAM for $\hat{E}_n$-estimation gave correct coverage even when other nuisance parameters were estimated with random forest. 
One possible avenue to leverage the choice of RF-hyperparameters could be to replace the current cross-fitted one-step estimators with targeted-maximum-likelihood (TMLE). \textcite{li} constructed a TMLE for the TE-VIMs of \textcite{hines}, and though the remainder term still calls for initial $\hat{\tau}_l$ estimators that are consistent at $n^{-1/4}$ rate, the estimators were seen to have better finite sample performance compared to the one-step estimator. Thus, one may pursue a TMLE based on the EIF's derived in this paper with the same asymptotic properties as $\hat{\Theta}_l^{CF}$ under assumption \ref{assA}, to possibly achieve better finite sample performance. We leave this for future work. 

Furthermore, we have proposed a new variable importance measure based on the ideas from \textcite{stijn} as a best partially linear projection of the CATE-function. The estimand has the interpretation of the real parameter in a partially linear model of the CATE function, but it continues to serve as a measure of heterogeneity when the model fails to hold. One consequence, though, is that it could happen that $\Omega_j=0$ even when $X_j$ explains some of the treatment effect, as seen by plugging $\beta=0$ into
$$
\tau(x) = \beta x_j + w(x_{-j}) + R(x_j, x_{-j}).
$$
In contrast to the estimators for $\Psi_l$, the estimators of $\Omega_j$ was seen to perform well in relatively small sample sizes compared to the sample sizes needed for reliable estimation of $\Psi_l$, even when using Random Forest for all nuisance parameter estimation. This was also evident in the practical examples, where the estimates of $\Psi_l$ all had corresponding confidence intervals form 0 to 1, essentially rendering them useless as measures of variable importance, but where the p-values associated with the hypothesis $H:\Gamma_j=0$ provided some significant findings. 
\clearpage
\nocite{*}
\printbibliography
\newpage
\appendix
\section{On projection parameters}
\label{App_A}

In related research, other authors have studied a variable importance measure that is closely related to our projection parameter, namely the so-called "best linear predictor/projection" (\cite{semenova}, \cite{vanLaan}, \cite{cui}, \cite{boileau}). In the case of continuous outcome, \textcite{semenova} provide theoretical results via debiased machine-learning (see e.g. \textcite{cher}) and \textcite{vanLaan} provide theoretical results based on semiparametric efficiency theory. In both cases they consider a target function/parameter, which is then approximated by a projection of the target function onto a working model indexed by a Euclidean parameter. This approach is different from ours in that we seek interpretable summary statistics of our target function (CATE) through a projection, rather than estimating an interpretable target parameter through a projection. The approaches given by \textcite{cui} and \textcite{boileau} are more akin to ours. In the former, they provide a procedure for estimation of the best linear projection of $\tau$ in a survival context but without theoretical results, where the latter considers the best linear projection of $\tau$ onto a linear model, in the case where $\E X_i=0$, $i=1, \ldots , d$, and derives an explicit parameter that is similar to ours, for which they provide an estimation procedure based on semiparametric efficiency theory. All of them consider a projection of $\tau$ onto a working model index by a Euclidean parameter. In contrast, our projection parameter is defined through a projection of $\tau$ onto a subspace indexed by $(\beta, w)$, where $\beta$ is a real-valued parameter and $w$ is a measurable function of $d-1$ variables. Since the space of functions indexed by a Euclidean parameter is a subspace of the space we consider, the error made from projecting $\tau$ onto a subspace is smaller in our setting compared to the best linear projection.

The above discussion will be clarified below. First, we state a result showing that our projection parameter is in fact the desired projection. \\ \\
Let $\mathcal{H}$ be the Hilbert space of measurable functions of $x$ with finite variance endowed with the covariance inner product. We have the following result: 
\begin{alemma}\label{lem:a.projection}
    The projection of $\tau \in \mathcal{H}$ onto the subspace $\mathcal{U} = \{u\in \mathcal{H} : u(x) = \beta x_j + w(x_{-j}), \ \beta \in \mathbb{R}, \  Pw^2 < \infty \}$ is given by
    $$
    \Pi(\tau \mid \mathcal{U}) = \beta^*X_{j} + w^*(X_{-j}),
    $$
    where 
    $$
    \beta^* = \Omega_j \quad \text{and} \quad w^*(X_{-j}) = E(\tau(X)\mid X_{-j}) - \Omega_j E(X_j\mid X_{-j}).
    $$
\end{alemma}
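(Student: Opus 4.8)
The plan is to invoke the orthogonality characterization of the projection in the Hilbert space $\mathcal{H}$ equipped with the covariance inner product $\langle f, g\rangle = \cov\{f(X), g(X)\}$. By the projection theorem, $\Pi(\tau\mid\mathcal{U})$ is the unique $\hat u \in \mathcal{U}$ whose residual $\tau - \hat u$ is orthogonal to every element of $\mathcal{U}$; equivalently, $\hat u$ minimizes $\norm{\tau - u}$ over $u\in\mathcal{U}$. Because any $u\in\mathcal{U}$ has the form $\beta x_j + w(x_{-j})$ and $\mathcal{U}$ is the linear span of $X_j$ together with all square-integrable functions of $X_{-j}$, it suffices to verify that the candidate residual $R = \tau(X) - \Omega_j X_j - w^*(X_{-j})$, with $w^*(X_{-j}) = \E\{\tau(X)\mid X_{-j}\} - \Omega_j\E(X_j\mid X_{-j})$, is orthogonal to (i) every $w(X_{-j})$ and (ii) $X_j$. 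First I would check $\hat u = \Omega_j X_j + w^*(X_{-j}) \in \mathcal{U}$, which holds since $w^*$ is a function of $X_{-j}$ with finite variance whenever $\tau$ and $X_j$ have finite second moments.

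For orthogonality (i), I would compute the conditional mean $\E(R\mid X_{-j})$. Substituting the definition of $w^*$ and using $\beta^* = \Omega_j$ gives $\E(R\mid X_{-j}) = \E\{\tau(X)\mid X_{-j}\} - \Omega_j\E(X_j\mid X_{-j}) - w^*(X_{-j}) = 0$. This settles the required inner products at once: it yields $\E(R)=0$, and by the tower property $\cov\{R, w(X_{-j})\} = \E[R\,w(X_{-j})] = \E[\E(R\mid X_{-j})\,w(X_{-j})] = 0$ for every square-integrable $w$, i.e. $R$ is orthogonal to the whole subspace of functions of $X_{-j}$ (constants included).

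For orthogonality (ii), since $\E(R\mid X_{-j})=0$ I would write $\cov(R, X_j) = \E(R X_j) = \E\{\E(R X_j\mid X_{-j})\} = \E\{\cov(R, X_j\mid X_{-j})\}$, where the last equality again uses $\E(R\mid X_{-j})=0$. Conditionally on $X_{-j}$ the term $w^*(X_{-j})$ is constant, so $\cov(R, X_j\mid X_{-j}) = \cov\{\tau(X), X_j\mid X_{-j}\} - \Omega_j\var(X_j\mid X_{-j})$. Taking expectations and using the definition $\Omega_j = \Gamma_j/\chi_j = \E[\cov\{X_j,\tau(X)\mid X_{-j}\}]/\E\{\var(X_j\mid X_{-j})\}$ yields $\Gamma_j - \Omega_j\chi_j = 0$. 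Hence $R \perp X_j$, and combined with (i) the residual is orthogonal to all of $\mathcal{U}$, so $\hat u$ is the projection with $\beta^* = \Omega_j$ and $w^*$ as stated.

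The main thing to handle carefully is that the covariance inner product is degenerate on constants, so $\mathcal{H}$ should be read as functions modulo additive constants and $w^*$ is determined only up to an additive constant; I would note this explicitly and observe that it does not affect $\beta^*=\Omega_j$. I would also record the standing requirement $\chi_j = \E\{\var(X_j\mid X_{-j})\} > 0$ so that $\Omega_j$ is well defined. Because the argument proceeds entirely through the orthogonality conditions, Pythagoras gives $\norm{\tau-u}^2 = \norm{\tau-\hat u}^2 + \norm{\hat u - u}^2$ for all $u\in\mathcal{U}$, which delivers both minimality and uniqueness without a separate proof that $\mathcal{U}$ is closed; this is the least delicate part, whereas matching the conditional-covariance identity in step (ii) to the exact definition of $\Omega_j$ is where the bookkeeping must be done with care.
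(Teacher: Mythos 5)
Your proposal is correct, and it takes a genuinely different route from the paper's. The paper derives the minimizer constructively: it first profiles out $w$ for fixed $\beta$ via the decomposition $\E\{[a(X)-b(X_{-j})]^2\mid X_{-j}\} = [\E\{a(X)\mid X_{-j}\}-b(X_{-j})]^2 + \var\{a(X)\mid X_{-j}\}$, which forces $w^*(x_{-j}) = \E\{\tau(X)-\beta X_j\mid X_{-j}=x_{-j}\}$, and then obtains $\beta^*=\Omega_j$ from the first-order condition $\frac{\dd}{\dd\beta}\E\{[\tau(X)-\beta X_j-w(X_{-j})]^2\}=0$. You instead guess the solution and verify the two orthogonality conditions $\E(R\mid X_{-j})=0$ and $\cov(R,X_j)=\E\{\cov(\tau(X),X_j\mid X_{-j})\}-\Omega_j\E\{\var(X_j\mid X_{-j})\}=0$, then invoke Pythagoras; your condition (ii) is exactly the paper's normal equation, so the underlying computations coincide, and the structural difference is verification versus derivation. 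Your route buys several things the paper leaves implicit: minimality and uniqueness (modulo constants) follow from the Pythagorean identity without differentiating under the expectation sign — the paper's calculus step is a necessary-condition argument with no accompanying convexity or second-order remark — and you sidestep any appeal to existence of the projection, hence to closedness of $\mathcal{U}$, which the paper never checks. You also flag two points the paper passes over: the covariance inner product is degenerate on constants, so $w^*$ is identified only up to an additive constant (harmless for $\beta^*$), and $\chi_j=\E\{\var(X_j\mid X_{-j})\}>0$ is a standing requirement for $\Omega_j$ to be well defined. What the paper's approach buys in exchange is that it produces the minimizer rather than presupposing it, which is the natural presentation when the formula for $\Omega_j$ is being discovered rather than checked.
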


\begin{proof}
    We want to find $\beta^*$ and $w^*$ such that 
    $$
    (\beta^*, w^*) = \argmin_{\beta, w} \E\{\tau(X) - \beta X_j - w(X_{-j})\}^2.
    $$
    Observe that for any two measurable functions $a:\mathcal{X} \rightarrow \mathbb{R}$ and $b:\mathcal{X}_{-j} \rightarrow \mathbb{R}$ with finite variance, we have
    \begin{align*}
    E([a(X) - b(X_{-j})]^2\mid X_{-j}) = \left[E(a(X)\mid X_{-j}) - b(X_{-j}) \right]^2 + \var(a(X)\mid X_{-j}).
    \end{align*}
    Hence
    \begin{align*}
        \E\{\tau(X) - \beta X_j - w(x_{-j})\}^2 =& \E\left\{ [E(\tau(X) - \beta X_j \mid X_{-j}) - w(X_{-j})]^2 \right\} \\
        &+ \E\{ \var(\tau(X) - \beta X_j \mid X_{-j})\}.
    \end{align*}
    The second term on the right hand side of the latter display does not depend on $w$, and since the integrand in the first term is positive, the expression is minimized in $w$ when the first term is equal to zero. This implies
    $$
    w^*(x_{-j}) = E(\tau(X) - \beta X_{j}\mid X_{-j} = x_{-j}).
    $$
    For $\beta$, observe that
    $$
    \frac{\dd}{\dd \beta}  \E\{\tau(X) - \beta X_j - w(X_{-j})\}^2 = -2 \E\{X_j[\tau(X) - \beta X_j - w(x_{-j})]\} = 0, 
    $$
    together with $w^*$ implies
    \begin{align*}
        \E\{X_j[\tau(X) - \beta X_j - E(\tau(X) - \beta X_{j}\mid X_{-j})]\} = 0,          
    \end{align*}
    and therefore 
    $$
    \beta^* = \frac{\E\{ \cov(\tau(X), X_j\mid X_{-j}) \}}{\E\{ \var(X_j\mid X_{-j}) \}} = \Omega_j
    $$
    and
    $$
    w^*(X_{-j}) = E(\tau(X)\mid X_{-j}) - \Omega_j E(X_j\mid X_{-j}).
    $$
\end{proof}
Next, we consider the best linear projection of $\tau$ and contrast it with the best partially linear projection. To that end, we write the CATE function as
$$
\tau(x) = \alpha + \gamma^T x + R_{\alpha, \gamma}(x)
$$
for $\alpha \in \mathbb{R}$ and $\gamma \in \mathbb{R^d}$ and let
$$
(\alpha^*, \gamma^*) = \argmin_{\alpha, \gamma} \E\{R_{\alpha, \gamma}(X)^2\} = \argmin_{\alpha, \gamma} \E\{[\tau(X) - \alpha - \gamma^T X]^2\}.
$$
We define the remainder corresponding to the best partially linear projection as $R_{\beta, w}(x)$ such that 
$$
(\beta^*, w^*) = \argmin_{\beta, w}\E\{ R_{\beta, w}(X_j, X_{-j})^2 \} = \argmin_{\beta, w}\E\{ [\tau(X) - \beta X_j - w(X_{-j})]^2 \}.
$$
Since the space of linear functions is a subspace of the space of partially linear functions, $\mathcal{U}$, we have that $\alpha^* + \gamma^{*T}x \in \mathcal{U}$. By the projection theorem for Hilbert spaces (see e.g. \cite{tsiatis}, Theorem 2.1), the distance between $\tau$ and the projection onto $\mathcal{U}$ is smaller than the distance between $\tau$ and any other function in $\mathcal{U}$. Hence, by Lemma \ref{lem:a.projection}, 
$$
\norm{R_{\beta^*, w^*}} \leq \norm{R_{\alpha^*,\gamma^*}}.
$$
The result shows that the error made from model misspecification is smaller in the best partially linear projection compared to the best linear projection. As the dimension of $X$ grows, the difference in the errors become larger (since the linear restriction of $X_{-j}$ becomes increasingly strict compared to $w(X_{-j})$), and the best partially linear projection is thus better suited as a measure of importance of a single covariate.

\section{Derivation of efficient influence functions}
\label{AppB}
Here we derive the efficient influence functions from theorem \ref{EIFs} and \ref{EIFomega}. We will first derive the Gateaux derivatives of $\tau$, $\tau_l$ and $\E^j(x)$, respectively, which are then used in the derivations of the EIF's through the chain rule. In the following we use subscripts $P$ to underline the dependence on $P$. Let the parametric submodel be given by $P_\epsilon = Q\epsilon + (1-\epsilon)P \in \mathcal{M}$, where $Q$ is the Dirac measure in a single observation $O$, and define the operator $\partial_\epsilon = \left.\frac{d}{d\epsilon}\right\vert_{\epsilon=0}$ such that $\partial_\epsilon \psi(P_\epsilon)= \left.\frac{d}{d\epsilon} \psi(P_\epsilon) \right\vert_{\epsilon=0}$ for some mapping $\psi:\mathcal{M}\rightarrow \mathbb{R}$. In the following let 
$$
g(A,X) = \left(\frac{\mathbb{1}(A=1)}{\pi(1\mid X)} - \frac{\mathbb{1}(A=0)}{\pi(0\mid X)} \right)
$$
and
$$H(u,t,a,x) = \int_u^t S(s\mid a, x)\dd s,$$
and we write $\tau = \tau_P$ to denote $\tau$ under distribution $P$.

\begin{alemma}\label{gat:tau}
    The Gateaux derivative of $\tau(x)$ is given by 
    \begin{align*}
    \partial_\epsilon S_{P_\epsilon}(t\mid 1, x) - S_{P_\epsilon}(t\mid 0, x) = \frac{\mathbb{1}(X=x)}{f(x)}g(A,x)\int_0^t \frac{-S(t\mid A, x)}{S(s\mid A, x)S_c(s\mid A, x)} \dd M(s\mid A, x)
    \end{align*}
    in the survival functions setting and
    \begin{align*}
    &\partial_\epsilon \int_0^{t^*} S_{P_\epsilon}(t\mid 1, x) - \int_0^{t} S_{P_\epsilon}(t\mid 0, x) \dd t \\
    =& \frac{\mathbb{1}(X=x)}{f(x)}g(A,x)\int_0^{t} \frac{-H(u, t, a, x)}{S(u\mid a, x)S_c(u\mid a, x)} \dd M(u\mid a,x),
    \end{align*}
    in the RMST setting.
\end{alemma}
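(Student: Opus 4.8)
The plan is to route the computation through the cumulative hazard, exploiting $S(t\mid a,x)=\exp\{-\Lambda(t\mid a,x)\}$, so as to differentiate $\Lambda$ first and then transfer the derivative to $S$ and to $\tau$. First I would write the conditional hazard increment as a ratio of conditional expectations,
$$d\Lambda(s\mid a,x)=\frac{\E[\dd N(s)\mid A=a,X=x]}{P(\tilde{T}\geq s\mid A=a,X=x)},$$
and use that under conditionally independent censoring the at-risk probability in the denominator factors as $P(\tilde{T}\geq s\mid A=a,X=x)=S(s\mid a,x)S_c(s\mid a,x)$. The essential building block is the Gateaux derivative of a generic conditional mean $m(a,x)=\E[W\mid A=a,X=x]$ along $P_\epsilon$, namely $\partial_\epsilon m_{P_\epsilon}(a,x)=\frac{\mathbb{1}(A=a)\mathbb{1}(X=x)}{\pi(a\mid x)f(x)}\{W-m(a,x)\}$, with $W$ evaluated at the contaminating observation $O$. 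This is the standard influence function of a conditional expectation, and it is precisely the source of the $\mathbb{1}(X=x)/f(x)$ and $\mathbb{1}(A=a)/\pi(a\mid x)$ factors in the claim.

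Next I would apply the quotient rule to $d\Lambda(s\mid a,x)$, inserting the conditional-mean derivative in both the numerator and the denominator and using $\E[\dd N(s)\mid a,x]=S(s\mid a,x)S_c(s\mid a,x)\,d\Lambda(s\mid a,x)$. After collecting terms the bracket telescopes to $\{\dd N(s)-\mathbb{1}(\tilde{T}\geq s)\,d\Lambda(s\mid a,x)\}/\{S(s\mid a,x)S_c(s\mid a,x)\}$, i.e. the martingale increment $\dd M(s\mid a,x)$ over the at-risk probability, so that
$$\partial_\epsilon d\Lambda_{P_\epsilon}(s\mid a,x)=\frac{\mathbb{1}(X=x)}{f(x)}\frac{\mathbb{1}(A=a)}{\pi(a\mid x)}\frac{\dd M(s\mid a,x)}{S(s\mid a,x)S_c(s\mid a,x)}.$$
The chain rule applied to $S=\exp(-\Lambda)$ then yields $\partial_\epsilon S_{P_\epsilon}(t\mid a,x)=-S(t\mid a,x)\int_0^t\partial_\epsilon d\Lambda_{P_\epsilon}(s\mid a,x)$, the single-arm derivative. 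Subtracting the $a=0$ arm from the $a=1$ arm and observing that $\mathbb{1}(A=a)$ forces $A=a$ inside each integral lets me merge the two arms into the single factor $g(A,x)$, which is exactly the survival-function statement.

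For the RMST setting I would integrate the single-arm survival derivative over the horizon, $\partial_\epsilon\int_0^t S_{P_\epsilon}(u\mid a,x)\,\dd u=\int_0^t\partial_\epsilon S_{P_\epsilon}(u\mid a,x)\,\dd u$, which nests a martingale integral over $[0,u]$ inside a Lebesgue integral over $u\in[0,t]$. Applying Fubini to interchange the order of integration, the time $u$ ranges over $[s,t]$ for each fixed martingale time $s$, so the inner integral $\int_s^t S(u\mid a,x)\,\dd u$ collapses to $H(s,t,a,x)$; relabelling the martingale variable and merging the arms through $g(A,x)$ then produces the stated RMST expression.

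The main obstacle is making the conditional-mean Gateaux derivative rigorous for continuous $X$: the factor $\mathbb{1}(X=x)/f(x)$ is a formal device that must be justified either as a pathwise derivative along smooth submodels approximating the point mass $Q$ or in the nonparametric-influence-function sense, and the same care governs the interchange of $\partial_\epsilon$ with the time integrals and the two integrals in the Fubini step. The one genuinely delicate computation is verifying that the quotient-rule bracket telescopes exactly to $\dd M/(S\,S_c)$; once that is in hand, the $\exp(-\Lambda)$ chain rule and the Fubini rearrangement are routine.
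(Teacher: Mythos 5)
Your proposal is correct and follows essentially the same route as the paper: both proofs Gateaux-differentiate the conditional cumulative hazard along the Dirac-contaminated path (the paper computes the ratio derivative directly, which yields exactly your telescoped $\dd M(s\mid a,x)/\{S(s\mid a,x)S_c(s\mid a,x)\}$ bracket with the $\mathbb{1}(A=a)\mathbb{1}(X=x)/\{\pi(a\mid x)f(x)\}$ prefactor), then apply the chain rule to $S=e^{-\Lambda}$ and merge the two arms into $g(A,x)$. Your Fubini interchange producing $H(s,t,a,x)$ in the RMST case is precisely the step the paper leaves implicit under its appeal to the chain rule.
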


\begin{proof}
    We start by calculating the Gateaux derivative of the conditional cumulative hazard function $\Lambda(t\mid a, x)$, which is also given in, e.g, the supplementary material of \textcite{martSten}, but included here for completeness. Let $P(\tilde{T}\geq s, a, x) = \sum_{\delta=0,1} \int_s^{\infty}P(\dd s, \delta, a, x)$ and note that $P(\tilde{T}\geq s \mid a, x) = S(s\mid a, x)S_c(s\mid a, x)$ because of independent censoring. Then 
\begin{align}
&\partial_\epsilon \Lambda_\epsilon(t\mid a, x) \nonumber \\
=& \int_0^t \partial_\epsilon \frac{P_\epsilon (\dd s,\Delta = 1, a, x)}{P_\epsilon( \tilde{T} \geq s, a, x)} \nonumber \\
=& \int_0^t \frac{Q(\dd s, \Delta = 1, a, x) - P(\dd s, \Delta = 1, a, x)}{P(\tilde{T}\geq s, a, x)} \nonumber \\
&- \int_0^t \left( \sum_{\delta = 0,1} \mathbb{1}(\tilde{T}\geq s, \delta, a, x) - P(\tilde{T}\geq s, \delta, a, x) \right) \frac{P(\dd s, \Delta = 1, a, x)}{P(\tilde{T} \geq s, a, x)^2} \nonumber \\
=& \frac{\mathbb{1}(A=a)\mathbb{1}(X=x)}{\pi(a\mid x) f(x)}\left\{\int_0^t \frac{1}{P(\tilde{T}\geq s \mid a, x)} \dd N(s) - \int_0^t \frac{\mathbb{1}(\tilde{T}\geq s)}{P(\tilde{T}\geq s \mid a, x)} \dd \Lambda(s\mid a, x) \right\} \nonumber \\
&= \frac{\mathbb{1}(A=a)\mathbb{1}(X=x)}{\pi(a\mid x) f(x)}\int_0^t \frac{1}{S(s\mid a, x)S_c(s\mid a, x)} \dd M(s\mid a, x). \nonumber \label{eq:eifLambda}
\end{align}

Consider the survival function setting $\tau(x) = e^{-\Lambda(t\mid, 1,x)} - e^{-\Lambda(t\mid, 0,x)}.$ A simple application of the chain rule gives
\begin{align*}
    \partial_\epsilon \tau_{P_\epsilon}(x) &= \frac{\mathbb{1}(X=x)}{f(x)}g(A,x)\int_0^t \frac{-S(t\mid A, x)}{S(s\mid A, x)S_c(s\mid A, x)} \dd M(s\mid A, x),
\end{align*}
which gives the first claim.
Now, consider the RMST setting, 
$$\tau(x) = \int_0^{t} S(u\mid 1,x)\, du-\int_0^{t} S(u\mid 0,x)\, du.$$ 
Again, the chain rule gives 
\begin{align*}
&\partial_\epsilon \int_0^{t} S_{P_\epsilon}(s\mid 1, x) \dd s - \int_0^{t} S_{P_\epsilon}(t\mid 0, x) \dd s \\
=& \frac{\mathbb{1}(X=x)}{f(x)}g(A,x)\int_0^{t} \frac{-H(u, t, a, x)}{S(u\mid a, x)S_c(u\mid a, x)} \dd M(u\mid a,x)  
\end{align*}
where
$$H(u,t,a,x) = \int_u^t S(s\mid a, x)\dd s, $$
which gives the second claim.
\end{proof}

The next result is from \textcite{hines}. The result is stated in equation (4) in their Appendix.
\begin{alemma}[\cite{hines}]\label{hines}
    Let $g_P(X)$ denote some functional of $P$. Then
    \begin{align*}
        &\partial_\epsilon \E_{P_\epsilon}(g_{P_\epsilon}(X)\mid X_{-l} = x_{-l}) \\
        =& \frac{1(X_{-l} = x_{-l})}{f(x_{-l})}\left[g_P(x) - \E(g_P(X)\mid X_{-l} = x_{-l}) \right] + \E_P(\partial_\epsilon g_{P_\epsilon}(X)\mid X_{-l} = x_{-l}). 
    \end{align*}
\end{alemma}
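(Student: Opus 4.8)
The plan is to exploit that the functional $\E_{P_\epsilon}(g_{P_\epsilon}(X)\mid X_{-l}=x_{-l})$ carries the perturbation parameter $\epsilon$ in two distinct places: once through the conditioning operator $\E_{P_\epsilon}(\cdot \mid X_{-l}=x_{-l})$, whose conditional law depends on $P_\epsilon$, and once through the integrand $g_{P_\epsilon}$. I would introduce the two-variable map $F(\epsilon_1,\epsilon_2) = \E_{P_{\epsilon_1}}(g_{P_{\epsilon_2}}(X)\mid X_{-l}=x_{-l})$, so that the quantity of interest is the derivative of $\epsilon\mapsto F(\epsilon,\epsilon)$ at $\epsilon=0$. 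By the chain rule applied to $\epsilon\mapsto(\epsilon,\epsilon)$, this derivative equals the sum of the two partial derivatives at the origin: the derivative in $\epsilon_1$ with the integrand frozen at $g_P$, plus the derivative in $\epsilon_2$ with the operator frozen at $P$. The computation then splits cleanly into these two pieces.

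For the first piece I would fix the integrand at a generic $h=g_P$ and compute the Gateaux derivative of the conditional-mean functional $P\mapsto \E_P(h(X)\mid X_{-l}=x_{-l})$. Representing this conditional mean as the ratio $\int h(x_l,x_{-l})\,p(x_l,x_{-l})\,\dd\mu(x_l)\,/\,f(x_{-l})$, with $p$ the joint density of $X$ and $f(x_{-l})$ the marginal density of $X_{-l}$, and inserting the mixture density $p_\epsilon=(1-\epsilon)p+\epsilon q$ where $q$ is the Dirac density of $Q=\delta_O$, the quotient rule gives a numerator that telescopes to $f(x_{-l})\,\mathbb{1}(X_{-l}=x_{-l})[h(X)-\E_P(h(X)\mid X_{-l}=x_{-l})]$. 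Dividing by $f(x_{-l})^2$ yields $\frac{\mathbb{1}(X_{-l}=x_{-l})}{f(x_{-l})}[h(X)-\E_P(h(X)\mid X_{-l}=x_{-l})]$, and substituting $h=g_P$ produces exactly the first term of the claim. This mirrors the standard influence-function calculation for a conditional mean, with the (singular) Dirac perturbation handled formally through its delta density.

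For the second piece I would hold the conditioning operator at $P$ and differentiate only the integrand, interchanging $\partial_\epsilon$ with the conditional expectation to obtain $\E_P(\partial_\epsilon g_{P_\epsilon}(X)\mid X_{-l}=x_{-l})$, which is precisely the second term. Adding the two partial derivatives gives the stated identity.

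The main obstacle I expect is regularity rather than algebra. First, the chain-rule split into the two partial derivatives requires joint differentiability of $F$ at the origin; second, the interchange of $\partial_\epsilon$ with the conditional expectation in the second piece needs a dominated-convergence-type justification. Both hold under the smoothness assumed throughout for the pathwise-differentiable functionals considered here. Since the result is quoted from \textcite{hines} (their Appendix, equation (4)), I would note the correspondence and confine the detailed verification to the two partial-derivative computations above.
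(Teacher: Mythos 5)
Your proposal is correct. Note that the paper does not actually prove this lemma: it is quoted verbatim from \textcite{hines} (equation (4) of their Appendix), so there is no in-paper argument to compare against. Your derivation is the standard one and fills that gap soundly: the split of $\partial_\epsilon \E_{P_\epsilon}(g_{P_\epsilon}(X)\mid X_{-l}=x_{-l})$ into a partial derivative through the conditioning law and a partial derivative through the integrand is exactly the product-rule structure the identity encodes, and your quotient-rule computation for the first piece checks out: writing $N_\epsilon = \int h\, p_\epsilon$ and $D_\epsilon = f_\epsilon(x_{-l})$ with $h = g_P$, one gets $N_0'D_0 - N_0D_0' = f(x_{-l})\,\mathbb{1}(X_{-l}=x_{-l})\left[h(X) - \E_P(h(X)\mid X_{-l}=x_{-l})\right]$, which after dividing by $f(x_{-l})^2$ gives the first term of the claim. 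Your handling of the singular Dirac direction as a formal delta density is consistent with the conventions used throughout the paper's Appendix B (the same formal Gateaux calculus appears in the derivation of $\partial_\epsilon \Lambda_\epsilon(t\mid a,x)$), and your flagged regularity caveats (joint differentiability of $F$ at the origin, dominated convergence for interchanging $\partial_\epsilon$ with $\E_P(\cdot\mid X_{-l}=x_{-l})$) are precisely the hypotheses implicit in the paper's use of the lemma, where it is only ever applied to $g_P = \tau_P$ whose Gateaux derivative is explicitly computed in Lemma B.1.
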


\begin{alemma}\label{gat:taus}
    The Gateaux derivative of $\tau_l(x)$ is given by 
    \begin{align*}
        &\partial_\epsilon \E_{P_\epsilon}(\tau_{P_\epsilon}(X) \mid X_{-l} = x_{-l}) \\
        =&\frac{\mathbb{1}(X_{-l} = x_{-l})}{f_{x_{-l}}(x_{-l})}\left(\tau(x) - \tau_l(x) + g(A,X)\int_0^t \frac{-S(t\mid A, x)}{S(s\mid A, x)S_c(s\mid A, x)} \dd M(s\mid A, x) \right)
    \end{align*}
    in the survival setting and 
    \begin{align*}
        &\partial_\epsilon \E_{P_\epsilon}(\tau_{P_\epsilon}(X) \mid X_{-l} = x_{-l}) \\
        =&\frac{\mathbb{1}(X_{-l} = x_{-l})}{f_{x_{-l}}(x_{-l})}\left(\tau(x) - \tau_l(x) + g(A,X)\int_0^t \frac{-H(u, t, a, x)}{S(s\mid A, x)S_c(s\mid A, x)} \dd M(s\mid A, x) \right)
    \end{align*}
    in the RMST setting.
\end{alemma}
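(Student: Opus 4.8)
The plan is to apply Lemma \ref{hines} with the functional $g_P = \tau_P$ and then substitute the Gateaux derivative of $\tau$ obtained in Lemma \ref{gat:tau}. First I would invoke Lemma \ref{hines} directly, which expresses $\partial_\epsilon \E_{P_\epsilon}(\tau_{P_\epsilon}(X)\mid X_{-l}=x_{-l})$ as the sum of a pointwise term $\frac{\mathbb{1}(X_{-l}=x_{-l})}{f(x_{-l})}[\tau_P(x) - \E(\tau_P(X)\mid X_{-l}=x_{-l})]$ and a nuisance term $\E_P(\partial_\epsilon \tau_{P_\epsilon}(X)\mid X_{-l}=x_{-l})$. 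The bracket in the first term is, by the definition $\tau_l(x)=\E\{\tau(X)\mid X_{-l}=x_{-l}\}$, exactly $\tau(x)-\tau_l(x)$, so this term is already in the required form.

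The second part of the plan is to evaluate the nuisance term. Here I would substitute the expression for $\partial_\epsilon \tau_{P_\epsilon}$ from Lemma \ref{gat:tau}, noting that it carries a Dirac-type factor $\mathbb{1}(X=z)/f(z)$ in the evaluation coordinate $z$. Taking the conditional expectation of this against the conditional law of $X$ given $X_{-l}=x_{-l}$ --- whose density is $f(z)/f(x_{-l})$ on the set $z_{-l}=x_{-l}$ --- the indicator $\mathbb{1}(X=z)$ collapses the integral to the single point $z=X$, the covariate of the perturbing observation, the factor $f(z)$ in the denominator cancels against the numerator of the conditional density, and an indicator $\mathbb{1}(X_{-l}=x_{-l})$ is produced. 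This leaves $\frac{\mathbb{1}(X_{-l}=x_{-l})}{f(x_{-l})}$ multiplying $g(A,X)\int_0^t \frac{-S(t\mid A, x)}{S(s\mid A,x)S_c(s\mid A,x)}\dd M(s\mid A,x)$.

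Finally I would combine the two terms: both now share the common prefactor $\frac{\mathbb{1}(X_{-l}=x_{-l})}{f(x_{-l})}$, and pulling it outside the parentheses yields the claimed expression in the survival setting. The RMST claim follows verbatim, substituting $H(u,t,a,x)$ for $S(t\mid A, x)$ in the Gateaux derivative by the RMST half of Lemma \ref{gat:tau}.

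I expect the main obstacle to be the careful bookkeeping of the two distinct roles played by $X$: once as the random argument over which the conditional expectation in Lemma \ref{hines} is taken, and once as the covariate component of the single Dirac observation $O$ entering the Gateaux derivative through $\mathbb{1}(X=z)/f(z)$. Keeping these separate --- and correctly recognizing that the conditional density cancels the $f(z)$ denominator while generating the outer indicator $\mathbb{1}(X_{-l}=x_{-l})$ --- is the step where sign or density errors are most likely, whereas the remaining manipulations are routine.
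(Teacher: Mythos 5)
Your proposal is correct and matches the paper's proof: the paper likewise applies Lemma \ref{hines} with $g_P=\tau_P$ and then collapses the Dirac-type factor $\mathbb{1}(X=x)/f(x)$ from Lemma \ref{gat:tau} under the conditional expectation, stating as a general observation exactly the cancellation $\E_P(\partial_\epsilon g_{P_\epsilon}(X)\mid X_{-l}=x_{-l}) = \frac{\mathbb{1}(X_{-l}=x_{-l})}{f(x_{-l})}v(O)$ that you derive explicitly. Your spelled-out bookkeeping of the two roles of $X$ is a slightly more detailed rendering of the same one-line argument, so there is nothing to add.
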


\begin{proof}
We note that for any functional $g_P(X)$ with Gateaux derivative $\frac{\mathbb{1}(X=x)}{f(x)}v(O)$ for some function $v:\mathcal{O}\rightarrow \mathbb{R}$ we have 
$$
\E_P(\partial_\epsilon g_{P_\epsilon}(X)\mid X_{-s} = x_{-s}) = \frac{\mathbb{1}(X_{-s} = x_{-s})}{f(x_{-s})}v(O).
$$
Let $g_P(x) = \tau(x)$. An application of lemma \ref{hines} followed by an application of lemma \ref{gat:tau} gives the result.
\end{proof}

\subsection{Proof of Theorem \ref{EIFs}}
\textit{EIF in survival setting} \\\\
Consider the survival function setting, $\tau(x) = \exp(-\Lambda(t\mid A=1, x)) - \exp(-\Lambda(t\mid A=0, x))$. Lemma \ref{gat:tau} gives the Gateaux derivative of $\tau$ from which we can calculate the EIF's of $\var(\tau(X))$ and $\var(\tau_l(X))$ by simple applications of the chain rule.
\begin{align*}
&\partial_\epsilon \var_{P_\epsilon}(\tau_{P_\epsilon}(X)) \\
=& \int \partial_\epsilon (\tau_{P_\epsilon}(X) - \E \tau_{P_\epsilon}(X))^2\dd P_\epsilon \nonumber \\
=&  (\tau_{P}(X) - \E \tau_{P}(X))^2 - \int (\tau_{P}(X) - \E \tau_{P}(X))^2\dd P + \int \partial_\epsilon (\tau_{P_\epsilon}(X) - \E \tau_{P_\epsilon}(X))^2\dd P \nonumber \\
=& (\tau_{P}(X) - \E \tau_{P}(X))^2 - \var(\tau(X)) + \int  2(\tau_{P}(X) - \E \tau_{P}(X)) \partial_\epsilon (\tau_{P_\epsilon}(X) - \E \tau_{P_\epsilon}(X))\dd P \nonumber \\
=& (\tau_{P}(X) - \E \tau_{P}(X))^2 - \var(\tau(X)) + \int  2(\tau_{P}(X) - \E \tau_{P}(X)) \partial_\epsilon \tau_{P_\epsilon}(X)\dd P \nonumber \\
=& (\tau_P(X) - \E\tau_P(X))^2 - \var(\tau_P(X))   \\
&+ 2(\tau_P(X) - \E\tau_P(X))g(A,X)\int_0^t \frac{-S(t\mid A, x)}{S(s\mid A, x)S_c(s\mid A, x)} \dd M(s\mid A, x) \nonumber \\
=& \ \tilde{\psi}_{\var(\tau(X))}
\end{align*}
Analogously we find the EIF of $\var(\tau_l(X))$ by use of the Gateaux derivative of $\tau_l$ from lemma \ref{gat:taus}:
\begin{align*}
&\partial_\epsilon \var_{P_\epsilon}(\tau_{l,P_\epsilon}(X)) \\
=& \int \partial_\epsilon (\tau_{l,P_\epsilon}(X) - \E \tau_{s,P_\epsilon}(X))^2\dd P_\epsilon \nonumber \\
=& (\tau_l(X) - \E(\tau(X)))^2 - \var(\tau(X)) \\
&+ 2(\tau_l(X) - \E(\tau(X)))\left(\tau(x) - \tau_l(x) + g(A,X)\int_0^t \frac{-S(t\mid A, x)}{S(s\mid A, x)S_c(s\mid A, x)} \dd M(s\mid A, x) \right)  \nonumber \\
=& \tilde{\psi}_{\var(\tau_l(X))}
\end{align*}
noting that $\E\tau_l(X) = \E\tau(X)$. From the two EIF's we have that the EIF of $\Theta_l$ is given by their difference:
\begin{equation}\label{eq:eifThetaApp}
    \tilde{\psi}_{\Theta_l} = \tilde{\psi}_{\var(\tau(X))} - \tilde{\psi}_{\var(\tau_l(X))}
\end{equation}
and the EIF of $\Psi_l$ is given by 
\begin{equation}\label{eq:eifPsiApp}
  \Phi_l(O) = \frac{1}{\var(\tau(X))}\left(\tilde{\psi}_{\Theta_l}(O) - \Psi_l \tilde{\psi}_{\var(\tau(X))}(O) \right).  
\end{equation} \\ \\
\textit{EIF in restricted mean setting} \\ \\
Let $\tau(x) = \int_0^{t^*} S(t\mid 1,x) \dd t-\int_0^{t^*} S(t\mid 0,x) \dd t$. Since the structure of the Gateaux derivatives of $\tau$ and $\tau_l$, from lemma \ref{gat:tau} and \ref{gat:taus}, is identical to the survival case with $H(u,t,a,x)$ replacing $S(t\mid a, x)$, the calculations from the survival function setting apply and we have the EIF's of $\Theta_l$ and $\Psi_l$ are given by \eqref{eq:eifThetaApp} and \eqref{eq:eifPsiApp}, respectively with

\begin{align*}
\tilde{\psi}_{\var(\tau(X))} =& (\tau_P(X) - \E\tau_P(X))^2 - \var(\tau_P(X))   \\
&+ 2(\tau_P(X) - \E\tau_P(X))g(A,X)\int_0^t \frac{-H(u,t, A, x)}{S(u\mid A, x)S_c(u\mid A, x)} \dd M(u\mid A, x) \nonumber
\end{align*}
and
\begin{align*}
 &\tilde{\psi}_{\var(\tau_l(X))} \\
 =&  (\tau_l(X) - \E(\tau_l(X)))^2 - \var(\tau(X)) \\
&+ 2(\tau_l(X) - \E(\tau_l(X)))\left(\tau(x) - \tau_l(x) + g(A,X)\int_0^t \frac{-H(u,t, A, x)}{S(u\mid A, x)S_c(u\mid A, x)} \dd M(u\mid A, x) \right)  \nonumber
\end{align*}

\subsection{Proof of Corollary \ref{hinesEIF}}

Note that in both the survival function and RMST setting, the EIFs of $\var\{\tau(X)\}$ and $\var\{\tau_l(X)\}$ can be written as
$$
\tilde{\psi}_{\var\{\tau(X)\}} = [\tau(X) - \E\{\tau(X)\}]^2 + 2[\tau(X) - \E\{\tau(X)\}][\varphi(O) - \tau(X)] - \var\{\tau(X)\}
$$
and
$$
\tilde{\psi}_{\var\{\tau_l(X)\}} = [\tau(X) - \E\{\tau(X)\}]^2 + 2[\tau(X) - \E\{\tau(X)\}][\varphi(O) - \tau_l(X)] - \var\{\tau_l(X)\}.
$$
A simple rewriting of the above EIFs gives
\begin{align*}
\tilde{\psi}_{\var\{\tau(X)\}} =& [\tau(X) - \E\{\tau(X)\}]^2 - 2\tau(X)^2 + 2\tau(X)\E\{\tau(X)\} \nonumber \\
&+ 2[\tau(X) - \E\{\tau(X)\}]\varphi(O) - \var\{\tau(X)\} \nonumber\\
=& \E\{\tau(X)\}^2 - \tau(X)^2 + 2[\tau(X) - \E\{\tau(X)\}]\varphi(O) - \var\{\tau(X)\} \nonumber \\
=& [\varphi(O) - \tau_d]^2 - [\varphi(O) - \tau(X)]^2 - \var\{\tau(X)\} \nonumber 
\end{align*}
and analogously for $\var\{\tau_l(X)\}$:
$$
\tilde{\psi}_{\var\{\tau(X)\}} = [\varphi(O) - \tau_d]^2 - [\varphi(O) - \tau_l(X)]^2 - \var\{\tau_l(X)\}.
$$
Subtracting the two gives the EIF for $\Theta_l$ and the chain rule gives the EIF for $\Psi_l$.

\subsection{Proof of Theorem \ref{EIFomega}}
Let $g_P(X) = X_j$. Lemma \ref{hines} then gives
\begin{align}\label{gat:ej}
    \partial_\epsilon \E_{P_\epsilon}(X_j\mid X_{-j}) = \frac{\mathbb{1}(X_{-j} = x_{-j})}{f(x_{-j})}(X_j - \E(X_j\mid X_{-j})). \nonumber
\end{align}
It follows immediately that the EIF of $\chi_j$ is given by
\begin{align*}
    &\partial_\epsilon \chi_j(P_\epsilon) \\
    =& \partial_\epsilon \E_{P_\epsilon}\{X_j - \E_{P_\epsilon}(X_j\mid X_{-j}) \}^2 \nonumber \\
    =& (X_j - \E(X_j\mid X_{-j}))^2 - \chi_j(P) \nonumber \\
    &- 2\int [x_j - \E(X_j\mid X_{-j} = x_{-j})] \frac{\mathbb{1}(X_{-j} = x_{-j})}{f(x_{-j})}[X_j - \E(X_j\mid X_{-j} = x_{-j})] P_{X_j, X_{-j}}\left(\dd (x_j, x_{-j}) \right) \nonumber \\
    =& (X_j - \E(X_j\mid X_{-j}))^2 - \chi_j(P) \nonumber \\
    &- 2[X_j - \E(X_j\mid X_{-j})]\int [x_j - \E(X_j\mid X_{-j})]  P_{X_j \mid X_{-j} = x_{-j}}\left(\dd x_j \right) \nonumber \\
    =& (X_j - \E(X_j\mid X_{-j}))^2 - \chi_j(P) \nonumber \\
    =& \tilde{\psi}_{\chi_j}. \nonumber
\end{align*}
For the derivation of the EIF of $\Gamma_j$ we let $\varphi$ denote the uncentered EIF of the ATE regardless of whether we consider the survival function setting or the RMST setting. Hence, by lemma \ref{gat:tau}, we write the Gateaux derivative of the CATE function as
$$
\partial_\epsilon \tau_{P_\epsilon}(x) = \frac{\mathbb{1}(X=x)}{f(x)}(\varphi(O) - \tau(x)),
$$
and, by lemma \ref{gat:taus}, the Gateaux derivative of $\tau_{\{j\}}$ as
$$
\partial_\epsilon \tau_{\{j\},P_\epsilon}(x) = \frac{\mathbb{1}(X_{-j}=x_{-j})}{f(x_{-j})}(\varphi(O) - \E(\tau(X)\mid X_{-j})).
$$
Observe that 
\begin{equation}
\setlength{\jot}{10pt}
\begin{aligned}
    \Gamma_j =& \E\{ \cov(\tau(X), X_j \mid X_{-j}) \} \nonumber \\
    =& \E\{E([\tau(X) - \E(\tau(X)\mid X_{-j})] [X_j - \E(X_j \mid X_{-j})] \mid X_{-j}) \} \nonumber \\
    =& \E\{E(\tau(X)[X_j - \E(X_j \mid X_{-j})]\mid X_{-j}) - \E(\tau(X)\mid X_{-j}) \E(X_j - \E(X_j \mid X_{-j}) \mid X_{-j}) \} \nonumber \\
    =& \E\{\tau(X)X_j\} - \E\{ \E(\tau(X)\mid X_{-j}) \E(X_j\mid X_{-j}) \}, \nonumber
\end{aligned}
\end{equation}
and that the EIF of $\Gamma_j$ is given by
\begin{equation}\nonumber
\setlength{\jot}{10pt}
\begin{aligned}
    \partial_\epsilon \Gamma_j(P_\epsilon) =& \partial_\epsilon \big\{\E_{P_\epsilon}\{\tau_{P_\epsilon}(X)X_j\} - \E_{P_\epsilon}\{ \E_{P_\epsilon}(\tau_{P_\epsilon}(X)\mid X_{-j}) \E_{P_\epsilon}(X_j\mid X_{-j}) \} \big\} \\
    =& \tau(X) X_j - \E\{\tau(X)X_j\} + X_j[\varphi(O) - \tau(X)] \\
    &- \E(\tau(X)\mid X_{-j}) \E(X_j\mid X_{-j}) + \E\{ \E(\tau(X)\mid X_{-j}) \E(X_j\mid X_{-j}) \} \\
    &- \partial_\epsilon \E\{ \E_{P_\epsilon}(\tau_{P_\epsilon}(X)\mid X_{-j}) \E(X_j\mid X_{-j}) \} - \partial_\epsilon \E\{ \E(\tau(X)\mid X_{-j}) \E_{P_\epsilon}(X_j\mid X_{-j}) \} \\
    =& \tau(X) X_j - \E\{\tau(X)X_j\} + X_j[\varphi(O) - \tau(X)] \\
    &- \E(\tau(X)\mid X_{-j}) \E(X_j\mid X_{-j}) + \E\{ \E(\tau(X)\mid X_{-j}) \E(X_j\mid X_{-j}) \} \\
    &-  [\varphi(O) - \E(\tau(X)\mid X_{-j})] \E(X_j\mid X_{-j}) -  \E(\tau(X)\mid X_{-j}) [X_j - (X_j\mid X_{-j})] \\
    =& \varphi(O)[X_j - \E(X_j\mid X_{-j}] - \E(\tau(X)\mid X_{-j})[X_j - \E(X_j\mid X_{-j})] - \Gamma_j \\
    =& [\varphi(O) - \E(\tau(X)\mid X_{-j})][X_j - \E(X_j\mid X_{-j})] - \Gamma_j \\
    =& \tilde{\psi}_{\Gamma_j}
\end{aligned}    
\end{equation}
The EIF of $\Omega_j$ follows by an application of the chain rule.

\section{Sample splitting}
\label{App_Sample_splitting}
Split the index set $\{1, \ldots, n\}$ (uniformly at random) into K disjoint sets $\mathcal{T}_1, \mathcal{T}_2, \ldots, \mathcal{T}_K$, such that $\{1, \ldots, n\} = \dot{\cup}_{k=1}^k \mathcal{T}_k$. Let $\mathcal{V}_k$ denote the subset of the observed data corresponding to the $k$'th index set, $\mathcal{T}_k$, i.e., $\mathcal{V}_k = \{O_i: \ i \in \mathcal{T}_k\}$, such that $\mathcal{O} = \dot{\cup}_{k=1}^K \mathcal{V}_k.$ Let $\mathbb{P}_n^k$ be the empirical measure in the sample $\mathcal{V}_k$ and let $\phi_{\Theta_l} = (\varphi-\tau_l)^2 - (\varphi-\tau)^2$ denote the uncentered EIF of $\Theta_l$ with $\hat{\phi}_{\Theta_l}$ being an estimate obtained by plugging in estimated nuisance parameters in the expression for $\phi$. Let $\hat{\phi}_{\Theta_l, -k}$ be the estimate of $\phi_{\Theta_l}$ based on data in $\mathcal{V}_{-k} = \cup_{i \neq k} \mathcal{V}_i$. The cross-fitted one-step estimator is then given by
\begin{equation}
    \hat{\Theta}_l^{CF} = \sum_{i = k}^K \frac{n_k}{n} \mathbb{P}_n^k \hat{\phi}_{\Theta_l, -k} = \frac{1}{n}\sum_{k = 1}^K \sum_{i \in \mathcal{T}_k} \left\{ (\hat{\varphi}_{-k}(O_i)-\hat{\tau}_{l, -k}(X_i))^2 - (\hat{\varphi}_{-k}(O_i)-\hat{\tau}_{-k}(X_i))^2 \right\} \nonumber
\end{equation}
where $n_k$ is the number of observations in $\mathcal{V}_k$ and $\hat{\varphi}_{-k} = \varphi(\hat{\Lambda}_{-k}, \hat{\Lambda}_{C, -k}, \hat{\pi}_{-k})$ with $(\hat{\Lambda}_{-k}, \hat{\Lambda}_{C, -k}, \hat{\pi}_{-k})$ being the nuisance estimators obtained from the sample $\mathcal{V}_{-k}$ . We let $\hat{\Theta}_d^{CF}$ be defined in the same way with $d$ instead of $l$ in the above formula and note that $\hat{\tau}_{d,-k}$ corresponds to the ATE estimate based on the data in $\mathcal{V}_{-k}$ (see \ref{sec:nuis}).

Next we state a set of conditions from which the asymptotic distribution of $\hat{\Theta}_l^{CF}$ is obtained.
\begin{NoHyper}
\begin{assumption}[Nuisance parameters]\label{assA}
Let $g(s \mid a,x) = \pi(a\mid x)S_c(s\mid a,x)$. For nuisance estimates $\hat{\pi}, \hat{\Lambda}$, $\hat{\Lambda}_C$ define
\begin{itemize}
    \item $\hat{\tau}(x) = e^{-\hat{\Lambda}(t\mid A=1, x)} - e^{-\hat{\Lambda}(t\mid A=0, x)}$
    \item $\hat{g}(s\mid a, x) = \hat{\pi}(a\mid x)e^{-\hat{\Lambda}_c(s\mid a, x)}$
    \item $\hat{\tau}_l(x) = \hat{E}_n(\hat{\tau}(X) \mid X_{-l} = x_{-l})$
\end{itemize}
where $\hat{E}_n$ is some regression of $\hat{\tau}(X)$ onto $X_{-l}$. Define $\hat{L}(s,t\mid a,x) = \frac{S(s\mid a, X)}{ \hat{S}(s\mid a,X)} \hat{S}(t\mid a,X)$ in the survival function setting and $\hat{L}(s,t\mid a,x) = \frac{\hat{H}(s,{t}\mid a, X) S(s\mid a, X)}{\hat{S}(s\mid a, X)}$ in the RMST setting. Assume that the nuisance parameters are chosen such that

\begin{enumerate}[label=\ref{assA}\arabic*]
    \item \label{ass1} 
    $\exists \eta > 0, \ s.t. \ \eta < \hat{g}(s \mid a, x) \ and \ \eta < e^{-\hat{\Lambda}(s \mid a, x)} \quad \forall (s,a,x) \in [0, t] \times \{0,1\} \times \mathcal{X}.$
    \item \label{ass2}
    $\norm{\hat{\tau}(x) - \tau(x)}_{L_2(P)} = o_p(n^{-\frac{1}{4}}).$
    \item \label{ass3}
    $\norm{\hat{\tau}_l(x) - \tau_l(x)}_{L_2(P)} = o_p(n^{-\frac{1}{4}}).$
    \item \label{ass4}
    $  \E \left\{ \int_0^t \left(1 - \frac{g(s\mid a, X)}{\hat{g}(s\mid a, X)} \right) \hat{L}(s,t\mid a,x) \dd \left[\Lambda(s\mid a, X) - \hat{\Lambda}(s\mid a, X)\right]   \right\} = o_p(n^{-\frac{1}{2}}).$
    \item \label{ass5}
    $\norm{\ \sup_{s<t}\left|\hat{g}(s\mid a, x) - g(s\mid a,x)\right|\ }_{L_2(P)} = o_p(1)$. \\
    $\norm{\ \sup_{s<t}\left|\hat{\Lambda}(s\mid a, x) - \Lambda(s\mid a,x)\right|\ }_{L_2(P)} = o_p(1).$
    \item \label{ass6}
    $|\hat{\tau}_l(x) - \hat{\tau}(x)| \leq \delta < \infty$ for almost all $x$.
\end{enumerate}
\end{assumption}
\end{NoHyper}

Before proceeding to estimation of $\Theta_d$, some comments about assumption \ref{assA} are in order. \ref{ass1} is a common positivity assumption, which states that all individuals have a positive probability of receiving treatment and being under observation for the entire time horizon. Assumption \ref{ass2} and \ref{ass3} refer to convergence rates of the CATE function estimate as well as the conditional CATE function estimate. For CATE estimates given by $\hat{\tau}(x) = e^{\hat{-\Lambda}(t\mid A=1, x)} - e^{-\hat{\Lambda}(t\mid A=0, x)}$, the assumption boils down to an assumption on the convergence rate of the survival function estimate, but this is seen to be a mild assumption for which many ML-methods concur (see e.g. the discussion in Section 4.3 in \cite{kennedydouble}). We note though, that assumption \ref{ass2} and \ref{ass3} imply that the estimator is not double robust in the sense that one only needs the outcome or the censoring and propensity to be correctly specified in order to obtain a consistent estimator, which is in contrast to other related estimands (e.g. the ATE, see \cite{westling}, theorem 2). From the structure of the remainder term (see the proof of Theorem \ref{asTheta} in Appendix \ref{sec:appAN}), it is seen that the estimator is consistent if $\Lambda$, $\tau$ and $\tau_l$ are consistent, but that it is not the case if only the censoring and propensity are consistent. Thus, it is important to employ flexible methods for obtaining nuisance estimators $\hat{\Lambda}$, $\hat{\tau}$, and $\hat{\tau}_l$. Furthermore, since $\tau_l(x) = \E(e^{-\Lambda(t\mid 1, X)} - e^{-\Lambda(t\mid 0, X)} \mid X=x)$ it will generally be a complicated function, even for a correctly specified $\hat{\Lambda}$ (e.g. as a Cox regression with a Breslow baseline hazard), emphasizing the need for a flexible estimator $\hat{\E}_n$.

The assumption \ref{ass4} corresponds to a bound on the aforementioned remainder term (see proof of Theorem \ref{asTheta} in Appendix \ref{sec:appAN}). In studies on related target parameters (e.g. the ATE) with uncensored outcome, the related bound on the remainder term is seen to have a product structure, in the sense that the product of the $L_2(P)$-norms of the outcome regression and the propensity estimator needs to be $o_p(n^{-1/2})$ (see \cite{kennedydouble}). This is then achieved if both estimators converge on $n^{-1/4}$ rate or, e.g., if one estimator is bounded in probability and the other converges on parametric rate. In our case $\hat{\Lambda}$ will often be a step function and one has to study \ref{ass4} in greater detail in order to obtain a product structure result analogous to the uncensored case. This is beyond the scope of this paper, but we will expect it to be the case in many settings. \ref{ass5} corresponds to uniform consistency of the time-to-event nuisance parameters. Assumption \ref{ass6} is a technical assumption, which we would expect to hold for most reasonable choices of $\hat{E}_n$.

Let $\hat{\tilde{\psi}}_{\Psi_l, -k}$ denote the estimate of the EIF $\tilde{\psi}_{\Psi_l}$ based on data from $\mathcal{V}_{-k}$. Define the variance estimator
$$
\hat{\sigma}^{2,CF}_{\Psi_l} = \sum_{i = k}^K \frac{n_k}{n} \mathbb{P}_n^k \hat{\tilde{\psi}}_{\Psi_l, -k}^2 =  \frac{1}{n}\sum_{k = 1}^K \sum_{i \in \mathcal{T}_k} \left[\frac{1}{\hat{\Theta}_d^{CF}} \left( \hat{\phi}_{\Theta_l, -k}(O_i) - \hat{\Theta}_l^{CF} - \hat{\Psi}_l ( \hat{\phi}_{\Theta_d}(O_i) - \hat{\Theta}_d^{CF} ) \right) \right]^2.
$$
Lemma \ref{lem:SECF} below  gives that the variance estimator above is consistent, and a confidence interval of $\hat{\Psi}_l^{CF}$ is then constructed as $\hat{\Psi}_l^{CF} \pm 1.96 \sqrt{\hat{\sigma}^{2,CF}_{\Psi_l} / n}$

As with $\Psi_l$, we construct cross-fitted one-step estimators for $\Gamma_j$ and $\chi_j$ based on the EIF's in Theorem \ref{EIFomega} where we let $\phi_{\Gamma_j} = [ \varphi(O) - \tau_j(X) ][ X_j - \E(X_j\mid X_{-j}) ]$ and $\phi_{\chi_j} = [X_j - \E(X_j\mid X_{-j})]^2$ denote the uncentered EIF's. Then, using the same sample splitting notation as in the construction of $\hat{\Theta}_l^{CF}$, we denote $\hat{E}^j_{n,-k}$ the regression of $X_j$ onto $X_{-j}$ in the sample $\mathcal{V}_{-k}$ and define the estimators
\begin{equation}
    \hat{\Gamma}_j^{CF} = \sum_{i = k}^K \frac{n_k}{n} \mathbb{P}_n^k \hat{\phi}_{\Gamma_j,-k} = \frac{1}{n}\sum_{k = 1}^K \sum_{i \in \mathcal{T}_k} \left\{ [ \hat{\varphi}_{-k}(O_i) - \hat{\tau}_{j,-k}(X) ][ X_{i,j} - \hat{E}^j_{n,-k}(X_{i, -j}) ] \right\} \nonumber
\end{equation}
and
\begin{equation}
    \hat{\chi}_j^{CF} = \sum_{i = k}^K \frac{n_k}{n} \mathbb{P}_n^k \hat{\phi}_{\chi_j,-k} = \frac{1}{n}\sum_{k = 1}^K \sum_{i \in \mathcal{T}_k} [X_{i, j} - \hat{E}^j_{n,-k}(X_{i, -j})]^2. \nonumber
\end{equation}
The two estimators are combined to create an estimator for $\Omega_j$:
\begin{equation}
    \hat{\Omega}_j^{CF} = \frac{\hat{\Gamma}_j^{CF}}{\hat{\chi}_j^{CF}}. \nonumber
\end{equation}

To state results on the asymptotic distribution of the  above estimators, we need a slight modification of assumption \ref{assA} as follows:

\begin{NoHyper}
\begin{assumption}\label{assB}\ \\  
\begin{enumerate}[label=\ref{assB}\arabic*]
    \item \label{assb1} 
    $\norm{\hat{\E}^j_n - \E(\cdot \mid X_{-j})}_{L_2(P)} = o_p(n^{-\frac{1}{4}}).$
    \item \label{assb2}
    $(X_j - \hat{E}_n^j(X_{-j}))^2 \leq \delta < \infty, \quad \delta > 0, \quad a.s. $
    \item \label{assb3}
    $ \var(X_j\mid X_{-j}) < \infty$ for all $j \in \{1, \ldots d\}$.
\end{enumerate}
\end{assumption}
\end{NoHyper}

Assumption \ref{assb1} relates to the convergence rate of $\hat{E}_n^j$, and it is similar to assumption \ref{ass2} and \ref{ass3}. Assumption \ref{assb2} is a technical assumption, which we would expect to hold for most reasonable estimators $\hat{E}_n^j$, and assumption \ref{assb3} assumes all the conditional distributions of $X_j$ given $X_{-j}$ to have second moment.

Let $\hat{\tilde{\psi}}_{\Omega_j, -k}$ be the estimate of $\tilde{\psi}_{\Omega_j}$ in the sample $\mathcal{V}_{-k}$. We define the cross-fitted plug-in variance estimator as
$$
\hat{\sigma}^{2,CF}_{\Omega_j} =  \sum_{i = k}^K \frac{n_k}{n} \mathbb{P}_n^k \hat{\tilde{\psi}}_{\Omega_j, -k}^2 =  \frac{1}{n}\sum_{k = 1}^K \sum_{i \in \mathcal{T}_k} \left[\frac{1}{\hat{\chi}_j^{CF}} \left( \hat{\phi}_{\Gamma_j, -k}(O_i) - \hat{\Gamma}_j^{CF} - \hat{\Omega}_j ( \hat{\phi}_{\chi_j, -k}(O_i) - \hat{\chi}_j^{CF} ) \right) \right]^2.
$$

\begin{lemma}\label{lem:SECF}
    Let $\psi_1$ and $\psi_2$ be two pathwise differentiable maps from $\mathcal{M}$ to the reals with EIF's given by $\tilde{\psi}_1$ and $\tilde{\psi}_2$, respectively, where $\tilde{\psi}_i(\psi_i, \nu_i) = \varphi_i(\nu_i) - \psi_i$, $i=1,2$, for some nuisance parameters $\nu_i$. Let $\hat{\psi}^{CF}_i$ denote the cross-fitted one-step estimator for $\psi_i$ and assume that $\norm{\varphi(\hat{\nu}_{i,-k}) - \varphi(\nu_i)} = o_p(1)$ for each $k$ and $i$. Furthermore, we assume that
    $$
    \hat{\psi}_i^{CF} - \psi_i = \mathbb{P}_n \tilde{\psi}_i + o_p(n^{-1/2}), \quad i = 1,2.
    $$
    Let $\Psi = \frac{\psi_1}{\psi_2}$ and denote the cross-fitted estimator $\hat{\Psi}^{CF} = \frac{\hat{\psi}_1^{CF}}{\hat{\psi}_2^{CF}}$. Define
    $$
    \tilde{\psi}(\psi_1, \psi_2, \nu_1, \nu_2) = \frac{1}{\psi_2}\left(\varphi_1(\nu_1) - \psi_1 - \frac{\psi_1}{\psi_2} (\varphi_2(\nu_2) - \psi_2 ) \right).
    $$
    Then
    \begin{align}\label{lem:SECF1}
    \hat{\Psi}^{CF} - \Psi = \mathbb{P}_n\tilde{\psi}(\psi_1, \psi_2, \nu_1, \nu_2) + o_p(n^{-1/2})    
    \end{align}
    and we have the following consistency results for the cross-fitted variance estimators:
    \begin{align}\label{lem:SECF2}
    \hat{\sigma}_{\psi_i}^{2,CF} &= \sum_{k=1}^K \frac{n_k}{n}\mathbb{P}_n^k \tilde{\psi}_i(\hat{\psi}_i^{CF}, \hat{\nu}_{i, -k})^2 \overset{P}{\longrightarrow} P\tilde{\psi}_i(\psi_i, \nu_i)^2 \\
    \label{lem:SECF3}
    \hat{\sigma}_{\Psi}^{2,CF} &= \sum_{k=1}^K \frac{n_k}{n}\mathbb{P}_n^k \tilde{\psi}(\hat{\psi}_1^{CF}, \hat{\psi}_2^{CF}, \hat{\nu}_{1, -k}, \hat{\nu}_{2, -k})^2 \overset{P}{\longrightarrow} P\tilde{\psi}(\psi_1, \psi_2, \nu_1, \nu_2)^2.
    \end{align}
\end{lemma}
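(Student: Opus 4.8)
The plan is to handle the three claims in turn, using only the two hypotheses — asymptotic linearity of each $\hat{\psi}_i^{CF}$ and the $L_2(P)$-consistency $\norm{\varphi_i(\hat{\nu}_{i,-k}) - \varphi_i(\nu_i)} = o_p(1)$ — together with finiteness of $K$ and $\psi_2 \neq 0$. For \eqref{lem:SECF1} I would run a first-order delta-method expansion of $(a,b)\mapsto a/b$ at $(\psi_1,\psi_2)$. Writing $\hat{\psi}_i^{CF} = \psi_i + \mathbb{P}_n\tilde{\psi}_i + o_p(n^{-1/2})$ and noting $\mathbb{P}_n\tilde{\psi}_i = O_p(n^{-1/2})$ (mean-zero influence functions of finite variance), each estimator is $\sqrt{n}$-consistent, so $\hat{\psi}_2^{CF}$ is bounded away from $0$ with probability tending to one. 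A Taylor expansion then gives
$$
\hat{\Psi}^{CF} - \Psi = \frac{1}{\psi_2}(\hat{\psi}_1^{CF} - \psi_1) - \frac{\psi_1}{\psi_2^2}(\hat{\psi}_2^{CF} - \psi_2) + r_n,
$$
where the quadratic remainder satisfies $r_n = O_p(n^{-1}) = o_p(n^{-1/2})$ since each $\hat{\psi}_i^{CF}-\psi_i = O_p(n^{-1/2})$. Substituting the linear expansions and collecting terms produces $\mathbb{P}_n[\frac{1}{\psi_2}\tilde{\psi}_1 - \frac{\psi_1}{\psi_2^2}\tilde{\psi}_2] + o_p(n^{-1/2})$, and a direct rearrangement shows the bracketed function equals $\tilde{\psi}(\psi_1,\psi_2,\nu_1,\nu_2)$.

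For the two variance statements I would isolate one common engine. Both estimators have the shape $\sum_{k=1}^K \frac{n_k}{n}\mathbb{P}_n^k \hat{g}_{-k}^2$, where $\hat{g}_{-k}$ is an estimated influence function trained on $\mathcal{V}_{-k}$ and evaluated on the held-out fold $\mathcal{V}_k$, and where $\hat{g}_{-k}$ is $L_2(P)$-consistent for the corresponding true influence function $g$. For each fixed $k$ I would decompose
$$
\mathbb{P}_n^k \hat{g}_{-k}^2 - Pg^2 = (\mathbb{P}_n^k - P)\hat{g}_{-k}^2 + P(\hat{g}_{-k}^2 - g^2).
$$
The bias term is handled by Cauchy--Schwarz, $|P(\hat{g}_{-k}^2 - g^2)| \le \norm{\hat{g}_{-k} - g}\,\norm{\hat{g}_{-k} + g}$, where the first factor is $o_p(1)$ and the second is $O_p(1)$ by the triangle inequality and $\norm{g}<\infty$. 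The empirical-process term is where cross-fitting does the work: conditionally on the training data $\hat{g}_{-k}$ is a fixed function independent of $\mathcal{V}_k$, so $(\mathbb{P}_n^k - P)\hat{g}_{-k}^2$ is a centered average of i.i.d.\ terms and converges to zero in (conditional, hence unconditional) probability by Chebyshev, given a mild bound on $P\hat{g}_{-k}^4$. Summing over the finitely many folds gives convergence to $Pg^2$.

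It then remains to verify the $L_2(P)$-consistency of the plugged-in influence functions. For \eqref{lem:SECF2} I take $\hat{g}_{-k} = \varphi_i(\hat{\nu}_{i,-k}) - \hat{\psi}_i^{CF}$ and $g = \tilde{\psi}_i$; the bound $\norm{\hat{g}_{-k} - g} \le \norm{\varphi_i(\hat{\nu}_{i,-k}) - \varphi_i(\nu_i)} + |\hat{\psi}_i^{CF} - \psi_i| = o_p(1)$ follows from the hypotheses and the consistency established in the first paragraph. For \eqref{lem:SECF3} I take $\hat{g}_{-k} = \tilde{\psi}(\hat{\psi}_1^{CF},\hat{\psi}_2^{CF},\hat{\nu}_{1,-k},\hat{\nu}_{2,-k})$; writing it out, its $L_2(P)$-distance to $\tilde{\psi}(\psi_1,\psi_2,\nu_1,\nu_2)$ is bounded by a finite sum of terms, each a coefficient converging by the continuous-mapping theorem (since $\psi_2 \neq 0$ forces $1/\hat{\psi}_2^{CF}\to 1/\psi_2$) multiplied by an $o_p(1)$ factor of the form $\norm{\varphi_i(\hat{\nu}_{i,-k}) - \varphi_i}$ or $|\hat{\psi}_i^{CF} - \psi_i|$. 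Feeding these into the engine delivers both limits.

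The main obstacle I anticipate is the empirical-process term $(\mathbb{P}_n^k - P)\hat{g}_{-k}^2$: without cross-fitting it would demand Donsker-type control over the random class of estimated influence functions, whereas here the sample-splitting reduces it to a conditional weak law of large numbers. The one genuine requirement is a fourth-moment (equivalently conditional-variance) bound on the influence-function estimates, and I would confirm that this is supplied by the boundedness and positivity already imposed in Assumptions \ref{assA} and \ref{assB}, which keep the relevant $\varphi$ and $X_j - \hat{E}_n^j$ terms bounded in each of the survival-function and RMST settings.
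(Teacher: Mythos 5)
Your proof of \eqref{lem:SECF1} matches the paper's (the paper simply cites the functional delta method of \textcite{vaart}, Ch.~25.7, where you write out the Taylor expansion explicitly; both are fine). For the variance claims \eqref{lem:SECF2}--\eqref{lem:SECF3} you take a genuinely different route. The paper expands the square algebraically, e.g.\ $\mathbb{P}_n^k \tilde{\psi}_i(\hat{\psi}_i^{CF},\hat{\nu}_{i,-k})^2 = \hat{\psi}_i^{2,CF} + \mathbb{P}_n^k \varphi_i(\hat{\nu}_{i,-k})^2 - 2\hat{\psi}_i^{CF}\,\mathbb{P}_n^k\varphi_i(\hat{\nu}_{i,-k})$, pulls the scalar estimates out of the empirical sums, and reduces everything to consistency of $\mathbb{P}_n^k\varphi_i(\hat{\nu}_{i,-k})$, $\mathbb{P}_n^k\varphi_i(\hat{\nu}_{i,-k})^2$ and, for \eqref{lem:SECF3}, the cross term $\mathbb{P}_n^k\varphi_1(\hat{\nu}_{1,-k})\varphi_2(\hat{\nu}_{2,-k})$; each of these is handled by the three-way split $(\mathbb{P}_n^k-P)(\varphi(\hat{\nu})-\varphi(\nu)) + P(\varphi(\hat{\nu})-\varphi(\nu)) + (\mathbb{P}_n^k-P)\varphi(\nu)$, invoking Lemma~2 of \textcite{kennedy2020} for the first piece and the continuous mapping theorem to transfer $L_2$-consistency of $\varphi_i(\hat{\nu}_{i,-k})$ to its square and to the product. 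You instead keep the whole estimated influence function $\hat{g}_{-k}$ together and split $\mathbb{P}_n^k\hat{g}_{-k}^2 - Pg^2$ into a bias term (Cauchy--Schwarz) and a cross-fold empirical-process term (conditional Chebyshev). Your version is arguably cleaner and more unified: one engine serves both \eqref{lem:SECF2} and \eqref{lem:SECF3}, whereas the paper's cross-term treatment in \eqref{lem:SECF3} needs a separate joint continuous-mapping argument. The paper's version, on the other hand, never explicitly states a higher-moment condition.

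That said, note the one substantive caveat, which you correctly flag yourself: your Chebyshev step requires a bound on $P\hat{g}_{-k}^4$, which is not among the hypotheses of the lemma as stated (only $\norm{\varphi_i(\hat{\nu}_{i,-k})-\varphi_i(\nu_i)} = o_p(1)$ and asymptotic linearity are assumed). Deferring to Assumptions \ref{assA} and \ref{assB} is legitimate for every application in the paper, since \ref{ass1} bounds the estimated martingale integrands and \ref{assb2} bounds $X_j - \hat{E}_n^j(X_{-j})$, making $\hat{g}_{-k}$ almost surely bounded; but it makes your proof a proof of a slightly strengthened-hypothesis version of the abstract lemma. (To be fair, the paper's own proof has the mirror-image soft spot: the continuous-mapping step asserting $\norm{\varphi_i(\hat{\nu}_{i,-k})^2 - \varphi_i(\nu_i)^2}_{L_2(P)} = o_p(1)$ also implicitly needs fourth-moment-level control, since squaring is not continuous from $L_2(P)$ to $L_2(P)$.) If you want to eliminate the extra condition entirely, replace Chebyshev by the decomposition $\hat{g}_{-k}^2 - g^2 = (\hat{g}_{-k}-g)^2 + 2g(\hat{g}_{-k}-g)$: conditional Markov applied to the nonnegative term gives $\mathbb{P}_n^k(\hat{g}_{-k}-g)^2 = o_p(1)$ from $\norm{\hat{g}_{-k}-g}^2 = o_p(1)$ alone, and empirical Cauchy--Schwarz bounds $|\mathbb{P}_n^k\, g(\hat{g}_{-k}-g)|$ by $(\mathbb{P}_n^k g^2)^{1/2}(\mathbb{P}_n^k(\hat{g}_{-k}-g)^2)^{1/2} = O_p(1)\,o_p(1)$, so only $Pg^2 < \infty$ (automatic for an EIF) is used. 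With that repair your argument proves the lemma under exactly its stated hypotheses.
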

\begin{proof}
    See Appendix \ref{app:consVar}.
\end{proof}

We will now briefly summarize the construction given in \textcite{westling} concerning the nuisance parameter estimator $\hat{\tau}_d$.

For estimators $\hat{\Lambda}, \hat{\Lambda}_c, \hat{\pi}$ following assumption \ref{assA}, we construct the following cross-fitted estimator 
\begin{equation}
    \hat{\tau}_d^{CF} = \sum_{i = k}^K \frac{n_k}{n} \mathbb{P}_n^k \hat{\varphi}_{-k} = \frac{1}{n}\sum_{k = 1}^K \sum_{i \in \mathcal{T}_k} \hat{\varphi}_{-k}(O_i). \nonumber
\end{equation}
Theorem 3 in \cite{westling} gives that $\hat{\tau}^{CF}_d$ is asymptotically linear with influence function given by $\tilde{\psi}_{\tau_d} = \varphi - \tau_d$. Now, let $X_n = n^{1/2}(\hat{\tau}_d^{CF} - \tau_d)$. Then $X_n \rightsquigarrow X$ with $X \sim \mathcal{N}(0,P\tilde{\psi}_{\tau_d}^2)$ and Prohorov's theorem (\cite{vaart} theorem 2.4) gives $\norm{X_n}=O_p(1)$, and hence, $\norm{\hat{\tau}_d^{CF} - \tau_d} = n^{-1/2}\norm{X_n} \overset{P}{\rightarrow} 0$. Take $\frac{1}{4} < \delta < \frac{1}{2}$. Then $n^\delta \norm{\hat{\tau}_d^{CF} - \tau_d} = n^{-\epsilon} \norm{X_n} \overset{P}{\rightarrow} 0$, $\epsilon > 0$, and hence $\norm{\hat{\tau}_d^{CF} - \tau_d} = o_p(n^{-\delta})$. Since $\delta > \frac{1}{4}$ the rate assumption on $\hat{\tau}_d$ in Corollary \ref{cor:psiCF} is fulfilled.

Notice, however, that the estimation of $\hat{\Theta}_d^{CF}$ requires estimation of $\hat{\tau}_{d,-k}$ for each split $k$. Thus, in order to use the convergence rate results above, we need to perform a nested type of cross-fitting, such that $\hat{\tau}_{d, -k}$ is the cross-fitted estimator above but estimated using data in $\mathcal{V}_{-k}$ instead of the entire data. For estimation of $\Theta_d$ we have the following procedure:

\begin{enumerate}
    \item Split the data uniformly at random into $K_1$ subsamples $\mathcal{V}_k$, $k=1,\ldots, K_1$, such that $\mathcal{O} = \dot{\cup}_{k=1}^{K_1} \mathcal{V}_k.$
    \item for each $k$ estimate $\hat{\Lambda}, \hat{\Lambda}_c, \hat{\pi}$ using data in $\mathcal{V}_{-k}$ and obtain $\hat{\tau}_{-k}$ and $\hat{\varphi}_{-k}$. For $\tau_{d,-k}$-estimation: 
    \begin{enumerate}
        \item Split $\mathcal{V}_{-k}$ into $K_2$ subsamples $\mathcal{V}^k_i$, $i=1,\ldots, K_2$, such that $\mathcal{V}_{-k} = \dot{\cup}_{i=1}^{K_2} \mathcal{V}^k_i.$
        \item For each $i = 1,\ldots, K_2$ estimate $\hat{\Lambda}, \hat{\Lambda}_c, \hat{\pi}$ using data in $\mathcal{V}^k_{-i}$ and obtain $\hat{\varphi}^k_{-i}$.
        \item Obtain the $k'th$ estimate $\hat{\tau}_{d,-k}^{CF} = \frac{1}{n_{-k}}\sum_{i=1}^{K_2} \sum_{O_j \in \mathcal{V}_i^k} \hat{\varphi}^k_{-i}(O_j)$, where $n_{-k}$ is the number of observations in $\mathcal{V}_{-k}$.
    \end{enumerate}
     \item Obtain the estimate 
     \begin{equation}
    \hat{\Theta}_d^{CF} = \frac{1}{n}\sum_{k = 1}^K \sum_{i \in \mathcal{T}_k} \left\{ (\hat{\varphi}_{-k}(O_i)-\hat{\tau}_{d, -k}^{CF})^2 - (\hat{\varphi}_{-k}(O_i)-\hat{\tau}_{-k}(X_i))^2 \right\} \nonumber
    \end{equation}
\end{enumerate}
Using this estimating scheme, the convergence rate assumption on $\hat{\tau}^{CF}_d$ is automatically fulfilled by assumption \ref{assA} and Corollary \ref{cor:psiCF} applies without further restrictions on $\hat{\tau}_d$.

\section{On estimation of logit transformation of $\Psi_l$}
\label{App-logit}
The target parameter $\Psi_l$ is restricted to $[0,1)$, but the estimator $\hat{\Psi}_l^{CF}$ is unrestricted, which in practice can result in parameter estimates that are outside the range $[0,1)$, or confidence intervals that contain either 0 or 1. To combat this issue, we construct a cross-fitted one-step estimator of the transformed parameter $\text{logit}(\Psi_l)$. We note that given initial estimators, $\hat{\Theta}_l^{CF}$, $\hat{\Theta}_d^{CF}$, $\hat{\Theta}_{l}^0$, $\hat{\Theta}_d^0$, where $\hat{\Theta}_{l}^0$ and $\hat{\Theta}_d^0$ are plug-in estimators, the construction is directly given in \textcite{hines} and in Theorem 4 in their Appendix, they give additional conditions on the plug-in estimators under which the estimator of $\text{logit}(\Psi_l)$ is asymptotically linear. Hence, we will only sketch the construction, and refer to \textcite{hines} for a derivation of the asymptotic results. \\ \\
Define the transformed target parameter $\zeta_l(P) \equiv \text{logit}(\Psi_l(P))$. The efficient influence function of $\zeta(P)$ is given by 
$$
\tilde{\psi}_{\zeta_l} = \frac{\tilde{\psi}_{\Psi_l}}{\Psi_l(1-\Psi_l)}
$$
by the chain rule. Given the plug-in estimators, define $\hat{\Psi}_l^0 = \frac{\hat{\Theta}_l^0}{\hat{\Theta}_d^0}$. The cross-fitted one-step estimator is then given by 
$$
\hat{\zeta}_l^{CF} = \sum_{k=1}^K \frac{n_k}{n} \hat{\zeta}_{l,k}
$$
where 
$$
\hat{\zeta}_{l,k} = \text{logit}(\hat{\Psi}_{l,-k}^0) + \mathbb{P}_n^k \hat{\tilde{\psi}}_{\zeta_l, -k} = \text{logit}(\hat{\Psi}_{l,-k}^0) + \mathbb{P}_n^k \frac{\hat{\tilde{\psi}}_{\Psi_l,-k}}{\hat{\Psi}_{l,-k}^0(1-\hat{\Psi}_{l,-k}^0)}
$$
with $\hat{\Psi}_{l,-k}^0$ being the plug-in estimator obtained from the sample $\mathcal{V}_{-k}$ and $\hat{\tilde{\psi}}_{\Psi_l,-k}$ being the estimator of the EIF $\tilde{\psi}_{\Psi_l}$ derived from nuisance estimators obtained from $\mathcal{V}_{-k}$. As noted in \textcite{hines}, the estimator, $\hat{\zeta}_l^{CF}$, can be written in terms of the already established estimators with
$$
\hat{\zeta}_{l,k} = \text{logit}(\hat{\Psi}_{l,-k}^0) + \frac{\hat{\Theta}_{d,-k}}{\hat{\Theta}_{d,-k}^{0}} \frac{(\hat{\Psi}_{l,-k} - \hat{\Psi}_{l,-k}^{0})}{\hat{\Psi}_{l,-k}^0(1-\hat{\Psi}_{l,-k}^0)},
$$
where $\hat{\Theta}_{d,-k}$ and $\hat{\Psi}_{l,-k}$ are the one-step estimators obtained from $\mathcal{V}_{-k}$.

Under the same assumptions as in Corollary \ref{cor:psiCF}, Theorem 4 in \textcite{hines} gives that $\hat{\zeta}_l^{CF}$ is asymptotically linear with $\tilde{\psi}_{\zeta_l}$ as its influence function. In practice, this can be leverage to obtain an estimate of $\Psi_l$, where the estimate and the confidence interval are restricted to the interval $[0,1)$, by an expit-transformation of $\hat{\zeta}_l^{CF}$ and the corresponding confidence interval. By the delta method, the back-transformed estimator then shares the same asymptotic properties as given in Corollary \ref{cor:psiCF}.


\section{Proofs of asymptotic results}\label{sec:appAN}
The proofs of Theorem \ref{asTheta} and \ref{asGammaAndChi} follow the same recipe. The strategy is based on an expansion of the target parameter estimator in question as described in \textcite{kennedydouble}, and we will give a short recap of the general idea. In the following, let $\psi(P)$ denote a generic target parameter with EIF given by $\tilde{\psi}_P = \varphi_P - \psi(P)$, i.e., the EIF is linear in $\psi(P)$. Define the corresponding cross-fitted one-step estimator
$$
\hat{\psi}^{CF} = \sum_{k=1}^K \frac{n_k}{n}\mathbb{P}_n^k \varphi_{\hat{P}_k},
$$
where $\hat{P}_{-k}$ is an estimate of $P$ obtained from $\mathcal{V}_{-k}$. Consider the following expansion of $\mathbb{P}_n^k \varphi_{\hat{P}_{-k}}$.
\begin{align*}
    \mathbb{P}_n^k \varphi_{\hat{P}_{-k}} = \mathbb{P}_n^k \tilde{\psi} + (\mathbb{P}_n^k - P) (\varphi_{\hat{P}_{-k}} - \varphi_P) + P\varphi_{\hat{P}_{-k}} - \psi(P). 
\end{align*}
Given the above expansion, we obtain the decomposition
\begin{align*}
    \hat{\psi}^{CF} - \psi(P) = \mathbb{P}_n \tilde{\psi} + \underbrace{\sum_{k=1}^K \frac{n_k}{n} \mathbb{P}_n^k (\mathbb{P}_n^k - P) (\varphi_{\hat{P}_{-k}} - \varphi_P)}_{\text{empirical process term}} + \underbrace{\sum_{k=1}^K \frac{n_k}{n} P(\varphi_{\hat{P}_{-k}} - \psi(P))}_{\text{remainder term}}.
\end{align*}
By Lemma 2 in the supplementary material in \textcite{kennedy2020}, the empirical process term is $o_p(n^{-1/2})$ if $\norm{\varphi_{\hat{P}_{-k}} - \varphi_P} = o_p(1)$ for each $k$. The remainder term is $o_p(n^{-1/2})$ if $P\varphi_{\hat{P}_{-k}} - \psi(P) = o_p(n^{-1/2})$ for each $k$ by the continuous mapping theorem, since $\frac{n_k}{n} \overset{P}{\rightarrow} \frac{1}{K}$. This is essentially Proposition 2 in \textcite{kennedydouble}. Hence, the statements in Theorem \ref{asTheta} and Theorem \ref{asGammaAndChi} follow, if we can show that $\norm{\varphi_{\hat{P}_{-k}} - \varphi_P} = o_p(1)$ and that $P(\varphi_{\hat{P}_{-k}}) - \psi(P) = o_p(n^{-1/2})$ for the corresponding estimators. In the following, we drop the dependence on $k$ to ease notation. \\ \\
We start by stating two results related to the empirical process term and remainder term for the ATE, which will come in handy in the proofs of Theorem \ref{asTheta} and Theorem \ref{asGammaAndChi}. The results are essentially found in \textcite{westling} for the survival function setting (albeit, stated slightly differently), but we repeat them here for completeness and extend them to the RMST setting. 

\begin{alemma}\label{rem:ate}
    Let $\varphi$ be given as in \eqref{eq:varphi_surv_a} for the survival function setting and as in \eqref{eq:varphi_RMST_a} in the RMST setting. Under assumption \ref{ass1} and \ref{ass4}, $P\{\varphi(\hat{\nu}) - \tau \} = o_p(n^{-1/2})$.
\end{alemma}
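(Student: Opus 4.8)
The plan is to reduce the claim to assumption \ref{ass4} by an explicit computation of the population mean of the estimated influence function, treating $\hat{\nu}$ as fixed as is implicit in the notation $P\{\varphi(\hat{\nu})-\tau\}$. Since $\varphi = \varphi_1 - \varphi_0$ and $\tau(X) = S(t\mid 1,X) - S(t\mid 0,X)$ (respectively the difference of RMSTs), by linearity it suffices to show, for each arm $a \in \{0,1\}$, that $P\{\varphi_a(\hat{\nu}) - \tau_a\} = o_p(n^{-1/2})$, where $\tau_a(X)$ denotes $S(t\mid a,X)$ or its RMST analogue. First I would substitute the definition \eqref{eq:varphi_surv_a} (resp. \eqref{eq:varphi_RMST_a}) with estimated nuisances and write the estimated martingale increment as $\dd \hat{M}(u\mid a,X) = \dd M(u\mid a,X) + \mathbb{1}(\tilde{T} \geq u)\dd\{\Lambda - \hat{\Lambda}\}(u\mid a,X)$, where $\dd M$ is the \emph{true} martingale increment. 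Assumption \ref{ass1} guarantees that $\hat{g}$ and $\hat{S}=e^{-\hat{\Lambda}}$ are bounded away from zero, so every ratio appearing below is well defined and the relevant expectations are finite, justifying the interchange of integration and expectation.

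The integral term splits accordingly. Because $\hat{\nu}$ is fixed, the integrand multiplying $\dd M$ is a deterministic, predictable function of $(u,a,X)$, and $M(\cdot\mid a,X)$ is a genuine $P$-martingale conditional on $A=a,X$; hence its integral has conditional mean zero and contributes nothing to $P\varphi_a(\hat{\nu})$. For the part carrying $\mathbb{1}(\tilde{T}\geq u)\dd\{\Lambda-\hat{\Lambda}\}$ I would condition on $(A,X)$ and use independent censoring, $P(\tilde{T} \geq u \mid a,x) = S(u-\mid a,x)S_c(u-\mid a,x)$, together with $P(A=a\mid X) = \pi(a\mid X)$. Collecting the factors $\pi/\hat{\pi}$ and $S_c/\hat{S}_c$ into $g/\hat{g}$ and the remaining $S/\hat{S}$ into the definition of $\hat{L}$, this term becomes $\E\big[\int_0^t \tfrac{g(s\mid a,X)}{\hat{g}(s\mid a,X)}\hat{L}(s,t\mid a,X)\dd\{\Lambda-\hat{\Lambda}\}(s\mid a,X)\big]$, with $\hat{L}$ the quantity defined in assumption \ref{assA} in each of the two settings.

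It then remains to rewrite the plug-in bias $\E\{\hat{S}(t\mid a,X) - S(t\mid a,X)\}$ (resp. its integrated version) in the same form. Here I would invoke the Duhamel (product-integral) identity $\hat{S}(t) - S(t) = \int_0^t \tfrac{S(s)}{\hat{S}(s)}\hat{S}(t)\,\dd\{\Lambda-\hat{\Lambda}\}(s)$, whose integrand is exactly $\hat{L}(s,t)$, so that the difference equals $\E\big[\int_0^t \hat{L}(s,t\mid a,X)\dd\{\Lambda-\hat{\Lambda}\}(s\mid a,X)\big]$ in the survival-function setting. In the RMST setting one applies the same identity inside $\int_0^t(\cdot)\dd u$ and exchanges the order of integration by Fubini, so that $\int_s^t \hat{S}(u)\dd u = \hat{H}(s,t)$ reproduces the RMST form $\hat{L}(s,t) = \tfrac{\hat{H}(s,t)S(s)}{\hat{S}(s)}$. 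Subtracting the two displays yields $P\{\varphi_a(\hat{\nu}) - \tau_a\} = \E\big[\int_0^t \big(1 - \tfrac{g(s\mid a,X)}{\hat{g}(s\mid a,X)}\big)\hat{L}(s,t\mid a,X)\dd\{\Lambda-\hat{\Lambda}\}(s\mid a,X)\big]$, which is precisely the quantity assumed to be $o_p(n^{-1/2})$ in \ref{ass4}; summing over $a$ gives the claim.

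The main obstacle is the bookkeeping around the product-integral identity: keeping the left-limits ($S(s-)$ versus $S(s)$, $\hat{S}(s-)$ versus $\hat{S}(s)$) consistent with the definition of $\hat{L}$ when $\hat{\Lambda}$ is a step function, and verifying that the martingale integral truly has conditional mean zero once estimated (rather than true) nuisances appear in the integrand, which is where the fixedness of $\hat{\nu}$ (supplied by cross-fitting in the application) and the predictability of the integrand are essential. The Fubini step in the RMST case and the integrability checks via \ref{ass1} are routine once these points are settled.
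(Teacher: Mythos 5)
Your proposal is correct and follows essentially the same route as the paper: reduce to each arm $a$, compute the conditional expectation given $(A,X)$ so the true-martingale part vanishes and the remainder becomes $\E\bigl[\int_0^t \frac{g}{\hat{g}}\,\hat{L}\,\dd\{\Lambda-\hat{\Lambda}\}\bigr]$, rewrite the plug-in bias via Duhamel's equation (with Fubini producing $\hat{H}$ in the RMST case), and invoke assumption \ref{ass4}. The only organizational difference is that you decompose $\dd\hat{M}$ into the true martingale increment plus $\mathbb{1}(\tilde{T}\geq u)\,\dd\{\Lambda-\hat{\Lambda}\}$, whereas the paper directly evaluates $\E(\dd N\mid A=a,X)$ and $\E(\mathbb{1}(\tilde{T}\geq s)\mid A=a,X)$ — these are the same computation.
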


\begin{proof}
    The result for the survival function setting is proved in \textcite{westling} and \textcite{heleneFrank}. We include the computations for completeness and extend it to the RMST case.
    \\ \\
    \textit{Survival Case} \\ \\
Let $\tau_a(x) = S(t\mid A=a, X=x)$ such that 
$$
\varphi(\hat{\nu}) - \tau = \varphi_{1}(\hat{\nu}) - \tau_1 - (\varphi_{0}(\hat{\nu}) - \tau_0),
$$
where $\varphi_a(\hat{\nu})$ is given in \eqref{eq:varphi_surv_a}. Thus, to bound $P\{\varphi(\hat{\nu}) - \tau \}$, we only need to derive a bound for $P\{\varphi_a(\hat{\nu}) - \tau_a\}$. In the following we consider the nuisance estimates fixed.
\begin{align} \label{remainderSurv}
    &\E\{\varphi_a(\hat{\nu})(O) - \tau_a(X)\} \nonumber \\
    =& \E\{E(\varphi_a(\hat{\nu})(O) - \tau_a(X)\mid A, X)\} \nonumber \\
    =& \E\left\{\hat{S}(t\mid a, X) - S(t\mid a, X) - \frac{\mathbb{1}(A=a)\hat{S}(t\mid A, X)} {\hat{\pi}(a\mid X)} \right. \nonumber \\
    & \left. \times \left( \int_0^t \frac{\E(\dd N(s)\mid A=a, X)}{\hat{S}(s\mid A, X) \hat{S}_c(s\mid A,X)} - \int_0^t \frac{\E(\mathbb{1}(\tilde{T}\geq t)\mid A=a, X)\dd \hat{\Lambda}(s\mid A,X)}{\hat{S}(s\mid A, X) \hat{S}_c(s\mid A,X)}\right)  \right\} \nonumber \\
    =& \E\left\{\hat{S}(t\mid a, X) - S(t\mid a, X) \right. \nonumber \\
    &- \left. \frac{{\pi}(a\mid X)\hat{S}(t\mid a, X)}{\hat{\pi}(a\mid X)}  \int_0^t \frac{S(s\mid a, X) S_c(s\mid a,X)}{\hat{S}(s\mid a, X) \hat{S}_c(s\mid a,X)} \dd \left[\Lambda(s\mid a, X) - \hat{\Lambda}(s\mid a, X)\right]   \right\}.
\end{align}
Now, consider the survival function difference above. Using Duhamel's equation (\cite{gill}) we have
\begin{equation}
    \label{duhamel}
    \hat{S}(t\mid a, x) - S(t\mid a, x) = \int_0^t\frac{S(s\mid a,x)}{\hat{S}(s\mid a, x)}\dd \left[ \Lambda(s\mid a, x) - \hat{\Lambda}(s\mid a, x) \right] \hat{S}(t\mid a, x). \nonumber
\end{equation}
Plugging this into \eqref{remainderSurv} yields
\begin{align}
     &\E\{\hat{\varphi}_a(O) - \tau_a(X)\} \nonumber \\
     =& \E \left\{ \int_0^t \left(1 - \frac{\pi(a\mid X)S_c(s\mid a, X)}{\hat{\pi}(a\mid X)\hat{S}_c(s\mid a, X)} \right) \frac{S(s\mid a, X)}{ \hat{S}(s\mid a,X)} \hat{S}(t\mid a,X) \dd \left[\Lambda(s\mid a, X) - \hat{\Lambda}(s\mid a, X)\right]   \right\} \nonumber \\
     =& o_p(n^{-1/2}) \nonumber
\end{align}
by assumption \ref{ass4}. \\ \\
\textit{RMST case} \\ \\
Now consider the case where $\tau(x) = \int_0^{t} S(s\mid 1,x) \dd s-\int_0^{t} S(s\mid 0,x) \dd s$. As in the survival setting we define
$$
\tau_a(x) = \int_0^{t} S(s\mid a,x) \dd s
$$
and 
$$\varphi_a(O) = \tau_a(X) - \frac{\mathbb{1}(A=a)}{\pi(a\mid X)} \int_0^{t}\frac{H(u,t,A,X)}{S(s \mid A, X)S_C(s \mid A, X)} dM(s \mid A, X).
$$
By derivations analogous to \eqref{remainderSurv} we have
\begin{align*}
    &\E\{\hat{\varphi}_a(O) - \tau(X)\} \nonumber \\
    &= \E\{E(\hat{\varphi}_a(O) - \tau(X)\mid A, X)\} \nonumber \\
    &= \E\left\{ \int_0^{t} \hat{S}(s\mid a, X) - S(s\mid a, X)\dd s \right. \nonumber \\
    &- \left. \frac{\pi(a\mid X)}{\hat{\pi}(a\mid X)}  \int_0^{t} \frac{\hat{H}(s,t \ a, x) S(s\mid a, X) S_c(s\mid a,X)}{\hat{S}(s\mid a, X) \hat{S}_c(s\mid a,X)} \dd \left[\Lambda(s\mid a, X) - \hat{\Lambda}(s\mid a, X)\right]   \right\} \nonumber \\
    &= \E\left\{ \int_0^{t} \int_0^s \frac{S(u\mid a, X)}{\hat{S}(u\mid a, X)}\dd \left[\Lambda(u\mid a, X) - \hat{\Lambda}(u\mid a, X) \right] \hat{S}(s\mid a, X) \dd s \right. \nonumber \\
    &- \left. \frac{\pi(a\mid X)}{\hat{\pi}(a\mid X)}  \int_0^{t} \frac{\hat{H}(s,{t} \ a, x) S(s\mid a, X) S_c(s\mid a,X)}{\hat{S}(s\mid a, X) \hat{S}_c(s\mid a,X)} \dd \left[\Lambda(s\mid a, X) - \hat{\Lambda}(s\mid a, X)\right]   \right\} \nonumber \\
    &= \E\left\{ \int_0^{t}  \frac{\hat{H}(u,{t}\mid a, X) S(u\mid a, X)}{\hat{S}(u\mid a, X)}\dd \left[\Lambda(u\mid a, X) - \hat{\Lambda}(u\mid a, X) \right] \right. \nonumber \\
    &- \left. \frac{\pi(a\mid X)}{\hat{\pi}(a\mid X)}  \int_0^{t} \frac{\hat{H}(s,{t} \ a, x) S(s\mid a, X) S_c(s\mid a,X)}{\hat{S}(s\mid a, X) \hat{S}_c(s\mid a,X)} \dd \left[\Lambda(s\mid a, X) - \hat{\Lambda}(s\mid a, X)\right]   \right\} \nonumber \\
    &= \E\left\{ \int_0^{t}  \frac{\hat{H}(s,{t}\mid a, X) S(s\mid a, X)}{\hat{S}(s\mid a, X)}\left(1 - \frac{S_c(s\mid a, X)\pi(a\mid X)}{\hat{S}_c(s\mid a, X)\hat{\pi}(a\mid X)} \right) \dd \left[\Lambda(u\mid a, X) - \hat{\Lambda}(u\mid a, X) \right] \right\} \\
    &= o_p(n^{-1/2})
\end{align*}
by assumption \ref{ass4}. The third equality follows from Duhamel's equation. 
\end{proof}

Next we have a lemma, which is essentially given in \textcite{westling} (Lemma 3 in their supplementary material) in the survival function setting, though our assumptions are stated slightly different. We include the proof for completeness and extend the result to the RMST setting. 
\begin{alemma}\label{emp:ate}
    Let $\varphi$ be given as in \eqref{eq:varphi_surv_a} for the survival function setting and as in \eqref{eq:varphi_RMST_a} for the RMST setting. Under assumption \ref{ass1}, \ref{ass2} and \ref{ass5} it holds that $\norm{\varphi(\hat{\nu}) - \varphi(\nu)} = o_p(1)$.
\end{alemma}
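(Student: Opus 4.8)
The plan is to reduce the claim to a pointwise (in $O$) bound on $\varphi_a(\hat\nu) - \varphi_a(\nu)$ by the sup-norm deviations of the individual nuisance estimators, and then pass to $L_2(P)$ norms. Since $\varphi = \varphi_1 - \varphi_0$, by the triangle inequality it suffices to show $\norm{\varphi_a(\hat\nu) - \varphi_a(\nu)} = o_p(1)$ for each $a \in \{0,1\}$. Writing $g(s\mid a,x) = \pi(a\mid x)S_c(s\mid a,x)$ as in \ref{assA}, I would first absorb the propensity factor into the integrand, so that in the survival function setting $\varphi_a(\hat\nu) = \hat S(t\mid a, X) - \mathbb{1}(A=a)\int_0^t \hat h(u)\,\dd\hat M(u\mid A,X)$ with $\hat h(u) = \hat S(t\mid A,X)/\{\hat S(u-\mid A,X)\hat g(u-\mid A,X)\}$, and analogously $h$ for the truth. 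The difference then splits into a leading survival term $\hat S(t\mid a,X) - S(t\mid a,X)$ and a martingale-integral term, and the latter splits further through $\dd M = \dd N - \mathbb{1}(\tilde T\ge u)\dd\Lambda$ into a jump ($\dd N$) part and a compensator ($\dd\Lambda$) part.

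The first two pieces are routine. The leading survival term is handled by the Lipschitz property $|e^{-\hat\Lambda(t)} - e^{-\Lambda(t)}| \le |\hat\Lambda(t) - \Lambda(t)|$ together with \ref{ass2} and \ref{ass5}. For the jump part, since $N$ has a single jump at $\tilde T$, its contribution is $\Delta\mathbb{1}(\tilde T \le t)\{\hat h(\tilde T) - h(\tilde T)\}$; using the positivity bound \ref{ass1} (so all four denominators are bounded below by $\eta$, the truth being bounded below on the compact horizon $[0,t]$) together with the elementary algebra of ratios of quantities that are bounded and bounded away from zero, this is dominated by $\sup_{u\le t}|\hat h(u) - h(u)| \le C(\sup_s|\hat S - S| + \sup_s|\hat g - g|)$, which is $o_p(1)$ in $L_2(P)$ by \ref{ass2} and \ref{ass5}.

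The genuinely delicate piece — and the main obstacle — is the compensator term, where the integrators differ. I would write $\int_0^{\tilde T\wedge t}\hat h\,\dd\hat\Lambda - \int_0^{\tilde T\wedge t} h\,\dd\Lambda = \int_0^{\tilde T\wedge t}(\hat h - h)\,\dd\hat\Lambda + \int_0^{\tilde T\wedge t} h\,\dd(\hat\Lambda - \Lambda)$. The first integral is bounded by $\sup_u|\hat h - h|\cdot\hat\Lambda(t)$, and $\hat\Lambda(t)\le\log(1/\eta)$ is finite by \ref{ass1}, so it is again controlled by the sup-deviations of $\hat S,\hat g$. The hard term is the second integral, where the integrand is common but $\dd\hat\Lambda$ and $\dd\Lambda$ differ: here I would integrate by parts to obtain $[h(\hat\Lambda - \Lambda)]_0^{\tilde T\wedge t} - \int_0^{\tilde T\wedge t}(\hat\Lambda - \Lambda)\,\dd h$, which is bounded by $\sup_s|\hat\Lambda - \Lambda|\cdot(\norm{h}_{\infty} + TV_{[0,t]}(h))$. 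The key fact to verify is that $h$ has sup-norm and total variation on $[0,t]$ bounded uniformly in $x$, which follows from $S, S_c$ being monotone and bounded away from zero there; \ref{ass5} then makes this $o_p(1)$ in $L_2(P)$.

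Collecting the bounds gives a pointwise estimate $|\varphi_a(\hat\nu)(O) - \varphi_a(\nu)(O)| \le C\{\sup_s|\hat\Lambda(s\mid A,X) - \Lambda(s\mid A,X)| + \sup_s|\hat g(s\mid A,X) - g(s\mid A,X)| + |\hat S(t\mid A,X) - S(t\mid A,X)|\}$, with $C$ depending only on $\eta$ and the regularity bounds on the truth. Taking $L_2(P)$ norms, conditioning on $A$, and invoking \ref{ass1}, \ref{ass2} and \ref{ass5} yields $\norm{\varphi_a(\hat\nu) - \varphi_a(\nu)} = o_p(1)$. The RMST setting is identical after replacing the numerator $S(t\mid a,x)$ by $H(u,t,a,x) = \int_u^t S(s\mid a,x)\dd s$ and the leading term by $\int_0^t\{\hat S(s\mid a,X) - S(s\mid a,X)\}\dd s$; since $H$ is bounded by $t$ and Lipschitz in $\Lambda$ with uniformly bounded total variation in $u$, every bound above carries over verbatim.
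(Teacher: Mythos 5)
Your proposal is correct, and it departs from the paper exactly where the proof is delicate. Both arguments make the same reduction to $\varphi_a$, dispose of the leading plug-in term and the $\dd N$ term by the same ratio algebra under \ref{ass1} and \ref{ass5}, and locate the real difficulty in the compensator term, where the integrators $\dd\hat\Lambda$ and $\dd\Lambda$ differ. The paper (following Westling's Lemma 3) handles this by first invoking Gill's backward equation to rewrite $\frac{S(t\mid a,x)}{S(s\mid a,x)}\Lambda(\dd s\mid a,x)$ as $K_1(\dd s,t\mid a,x)$ (and, in the RMST case, constructing $K_2(s,t\mid a,x)=H(s,t\mid a,x)/S(s\mid a,x)-\int_s^t\dd u$ so that $K_2(\dd s)=\frac{H}{S}\Lambda(\dd s)$); the compensator difference then splits as $\int(\frac{1}{\hat g}-\frac{1}{g})\hat K(\dd s)+\int\frac{1}{g}\,[\hat K-K](\dd s)$, with the second integral treated by parts against $\dd(1/g)$ --- a monotone integrand with variation at most $\eta^{-1}$ --- and with $\sup_s|\hat K-K|\lesssim\sup_s|\hat\Lambda-\Lambda|$ supplied by the mean value theorem. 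You instead split $\int(\hat h-h)\,\dd\hat\Lambda+\int h\,\dd(\hat\Lambda-\Lambda)$ and integrate the second term by parts directly, so the deviation sits in $\sup_s|\hat\Lambda-\Lambda|$ and the burden falls on the regularity of the \emph{true} integrand $h$; your key verification --- that $\norm{h}_\infty$ and the total variation of $h$ on $[0,t]$ are bounded uniformly in $x$ --- does go through, since $u\mapsto S(t)/S(u-)$ is monotone with values in $[S(t),1]$, $u\mapsto 1/g(u-)$ is monotone and bounded under positivity, and products of bounded monotone functions have variation controlled by the product rule (the RMST case carries over with $H(u,t)\le t$ and $H$ decreasing in $u$). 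The two routes are of equal strength: yours is more elementary in that it avoids the backward-equation identities entirely (in particular the somewhat delicate construction of $K_2$), at the price of the uniform bounded-variation check; the paper's version trades that check for the monotone integrand $1/g$ plus the mean-value-theorem transfer, and has the advantage of reusing an existing lemma. Two small points, neither a gap: both proofs (the paper's included, via its $\eta^{-4}$ bounds) use lower bounds on the true $S$ and $g$, which \ref{ass1} literally asserts only for the estimates --- this is harmless since, by \ref{ass5}, the truth inherits a positive lower bound on an event of probability tending to one; and your boundary constant $\norm{h}_\infty+\mathrm{TV}(h)$ (rather than $2\norm{h}_\infty+\mathrm{TV}(h)$) is legitimate only because $(\hat\Lambda-\Lambda)(0)=0$, which you are implicitly using and would do well to say.
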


\begin{proof}
    Observe that 
$$
\norm{\hat{\varphi} - \varphi} \leq \norm{\hat{\varphi}_1 - \varphi_1} + \norm{\hat{\varphi}_0 - \varphi_0}
$$
so that we only need to focus on $\norm{\hat{\varphi}_a - \varphi_a}$. We start deriving a bound in the survival setting and then proceed to the RMST setting. \\ \\
\textit{Survival function setting} \\ \\
Consider the decomposition
\begin{align} \label{eq:empSurvDecomp}
    &\hat{\varphi}_a(O) - \varphi_a(O) \nonumber \\
    =& (\hat{\tau}_a(X) - \tau_a(X)) - \left(\frac{\mathbb{1}(A=a)}{\hat{\pi}(a\mid X)}\int_0^t\frac{\hat{S}(t\mid A, X)}{\hat{S}(s\mid A, X)\hat{S}_c(s\mid A, X)} \dd \hat{M}(s\mid A, X) \right. \nonumber \\
    &\quad \quad \quad - \left. \frac{\mathbb{1}(A=a)}{\pi(a\mid X)}\int_0^t \frac{S(t\mid A, X)}{S(s\mid A, X)S_c(s\mid A, X)} \dd M(s\mid A, X) \right) \nonumber \\
    =& (\hat{\tau}_a(X) - \tau_a(X)) \nonumber \\
    &- \mathbb{1}(A=a)\left(\int_0^t \frac{\hat{S}(t\mid a, X)}{\hat{S}(s\mid a, X)\hat{g}(s\mid a, X)} - \frac{S(t\mid a, X)}{S(s\mid a, X)g(s\mid a, X)} \dd N(s)\right) \nonumber \\
    &- \mathbb{1}(A=a)\left(\int_0^t \frac{\hat{S}(t\mid a, X)\mathbb{1}(\tilde{T}\geq s)}{\hat{S}(s\mid a, X)\hat{g}(s\mid a, X)}\hat{\Lambda}(\dd s \mid a, X) - \int_0^t \frac{S(t\mid a, X)\mathbb{1}(\tilde{T}\geq s)}{S(s \mid a, X)g(s\mid a, X)} \Lambda(\dd s \mid a, X)\right).
\end{align}
We need to bound each term in the above expression (separated by parentheses) individually. For the first term, \ref{ass2} gives that $\norm{\hat{\tau} - \tau} = o_p(1)$. For the second term we have for $\text{almost all} \ x$
\begin{align*}
     &\mathbb{1}(A=a)\left(\int_0^t \frac{\hat{S}(t\mid a, x)}{\hat{S}(s\mid a, x)}\left( \frac{1}{\hat{g}(s\mid a, x)} - \frac{1}{g(s\mid a, x)} \right) \right. \\
     & \left. \phantom{- \frac{1}{g(s\mid a, x)}} - \frac{1}{g(s\mid a, x)}\left( \frac{\hat{S}(t\mid a, x)}{\hat{S}(s\mid a, x)} - \frac{S(t\mid a, x)}{S(s\mid a, x)}  \right) \dd N(s)\right)^2\\
    \leq & \ 2\left(\int_0^t  \frac{1}{\hat{g}(s\mid a, x)} - \frac{1}{g(s\mid a, x)} \dd N(s)\right)^2 +  2\eta^{-2}\left( \int_0^t  \frac{\hat{S}(t\mid a, x)}{\hat{S}(s\mid a, x)} - \frac{S(t\mid a, x)}{S(s\mid a, x)}  \dd N(s)\right)^2 \\
    \leq & \ 2 \eta^{-4} \left\{\sup_{s\leq t}\left| \hat{g}(s\mid a, x) - g(s\mid a, x)\right|\right\}^2 + 4\eta^{-2}\left(\int_0^t \frac{1}{\hat{S}(s\mid a,x)}\left(\hat{S}(t\mid a, x) - S(t\mid a, x) \right)\dd N(s) \right)^2 \\
    & \quad \quad \quad \quad \quad \quad \quad \quad + 4\eta^{-2} \left(\int_0^t S(t\mid a, x)\left(\frac{1}{\hat{S}(s\mid a, x)} - \frac{1}{S(s\mid a,x)} \right)\dd N(s) \right)^2 \\
    \leq & \ 2 \eta^{-4} \left\{\sup_{s\leq t}\left| \hat{g}(s\mid a, x) - g(s\mid a, x)\right|\right\}^2 + 4 \eta^{-4} \left(\hat{S}(t\mid a, x) - S(t\mid a, x) \right)^2 \\
    &+ 4\eta^{-4} \left\{\sup_{s\leq t}\left| \hat{S}(s\mid a, x) - S(s\mid a, x)\right|\right\}^2 \\
    \leq & \ 2\eta^{-4} \left\{\sup_{s\leq t}\left| \hat{g}(s\mid a, x) - g(s\mid a, x)\right|\right\}^2 + 8\eta^{-4} \left\{\sup_{s\leq t}\left| \hat{S}(s\mid a, x) - S(s\mid a, x)\right|\right\}^2,
\end{align*}
which, together with \ref{ass5}, shows that the $L_2(P)$ norm of the second term converges in probability to zero. For the third term in \eqref{eq:empSurvDecomp}, we use the same technique as described in the proof of Lemma 3 in \textcite{westling}. It is included here for completeness, and extended to the RMST setting in the following part of the proof. let $K_1(s,t\mid a,x) = \frac{S(t\mid a,x)}{S(s\mid a,x)}$ and let $\hat{K}_1$ be defined accordingly with $\hat{S}$ in place of $S$. The backwards equation (\cite{gill}, Theorem 5) gives that for almost all $x$
$$K_1(s,t\mid a, x) = 1 - \int_s^t \frac{S(t\mid a,x)}{S(w\mid a, x)}\Lambda(\dd w).$$
Hence, $K_1(\dd s, t \mid a, x) = \frac{S(t\mid a,x)}{S(s\mid a, x)}\Lambda(\dd s)$ and similarly, $\hat{K}_1(\dd s, t\mid a, x) = \frac{\hat{S}(t\mid a,x)}{\hat{S}(s\mid a, x)}\hat{\Lambda}(\dd s)$. This result allows us to write the third term in \eqref{eq:empSurvDecomp} for almost all $x$ as (dropping the indicator $\mathbb{1}(A=a)$ since it disappears in the bound anyway)
\begin{align}\label{eq:empSurvDecompThirdTerm}
    &\left(\int_0^t \frac{\hat{S}(t\mid a, X)\mathbb{1}(\tilde{T}\geq s)}{\hat{S}(s\mid a, X)\hat{g}(s\mid a, X)}\hat{\Lambda}(\dd s \mid a, X) - \int_0^t \frac{S(t\mid a, X)\mathbb{1}(\tilde{T}\geq s)}{S(s \mid a, X)g(s\mid a, X)} \Lambda(\dd s \mid a, X)\right)^2 \nonumber \\
    = & \left(\int_0^{t\wedge \tilde{T}} \frac{1}{\hat{g}(s\mid a, X)}\hat{K}_1(\dd s \mid a, X) - \int_0^{t\wedge \tilde{T}} \frac{1}{g(s\mid a, X)} K_1(\dd s \mid a, X)\right)^2 \nonumber \\
    = & \left(\int_0^{t\wedge \tilde{T}} \left( \frac{1}{\hat{g}(s\mid a, X)} - \frac{1}{g(s\mid a, X)} \right) \hat{K}_1(\dd s \mid a, X) \right. \nonumber \\
    & \left. \phantom{\int_0^{t\wedge \tilde{T}} \int_0^{t\wedge \tilde{T}} \int_0^{t\wedge \tilde{T}} } + \int_0^{t\wedge \tilde{T}} \frac{1}{g(s\mid a, X)} \left[ \hat{K}_1(\dd s \mid a, X) - K_1(\dd s \mid a, X) \right]\right)^2.
\end{align}
Thus, if we can show that the two integrals in the above display are consistent in $L_2(P)$-norm, it follows that \eqref{eq:empSurvDecomp} is is $o_p(1)$, which completes the proof. For the first integral in \eqref{eq:empSurvDecompThirdTerm}, we have for almost all $x$
\begin{align*}
    & \int_0^{t\wedge \tilde{T}} \left( \frac{1}{\hat{g}(s\mid a, X)} - \frac{1}{g(s\mid a, X)} \right) \hat{K}_1(\dd s \mid a, X) \\
    =& \int_0^{t\wedge \tilde{T}} \left( \frac{1}{\hat{g}(s\mid a, X)} - \frac{1}{g(s\mid a, X)} \right) \frac{\hat{S}(t\mid a, X)}{\hat{S}(s\mid a, X)} \hat{\Lambda}(\dd s \mid a, X) \\
   \leq & \sup_{s\leq t} \left| \frac{\hat{S}(t\mid a,x)}{\hat{S}(s\mid a, x)} \right| \sup_{s\leq t}\left| \frac{1}{\hat{g}(s\mid a, X)} - \frac{1}{g(s\mid a, X)} \right| \hat{\Lambda}(t) \\
   \leq & \left| \log \eta \right| \eta^{-2}  \sup_{s\leq t} \left| \hat{g}(s\mid a,x) - g(s\mid a,x) \right|,
\end{align*}
where we have used $\frac{\hat{S}(t\mid a,x)}{\hat{S}(s\mid a,x)}\leq 1$ together with assumption \ref{ass1}. Assumption \ref{ass5} then shows that the first integral in \eqref{eq:empSurvDecompThirdTerm} is $o_p(1)$ in $L_2(P)$-norm. Using integration by parts, we can bound the second integral in \eqref{eq:empSurvDecompThirdTerm}. For almost all $x$ we have
\begin{align*}
    &\int_0^{t\wedge \tilde{T}} \frac{1}{g(s\mid a, x)} \left[ \hat{K}_1(\dd s \mid a, x) - K_1(\dd s \mid a, x) \right] \\
    =& \frac{1}{g(t\mid a,x)}\left[\hat{K}_1(t\wedge \tilde{T}, t\mid a,x) -  K_1(t\wedge \tilde{T}, t\mid a,x) \right] - \frac{1}{g(0\mid a,x)}\left[\hat{K}_1(0\mid a,x) -  K_1(0, t\mid a,x) \right] \\
    & \phantom{\frac{1}{g(t\mid a,x)}} - \int_0^{t\wedge \tilde{T}} \left[\hat{K}_1(s\mid a,x) -  K_1(s, t\mid a,x) \right] \left(\frac{1}{g}\right)(\dd s \mid a,x) \\
    \leq & 3 \eta^{-1} \sup_{s\leq t} \left| \hat{K}_1(s\mid a,x) - K_1(s\mid a,x)\right| \\
    \leq & C \sup_{s\leq t} \left| \hat{\Lambda}(s\mid a,x) - \Lambda(s\mid a,x) \right| 
\end{align*}
for some $C>0$, where the last inequality follows from the mean value theorem. By \ref{ass5}, the above expression is $o_p(1)$ in $L_2(P)$-norm and it follows that \eqref{eq:empSurvDecompThirdTerm} is $o_p(1)$ in $L_2(P)$-norm, which completes the proof for the survival function setting.  \\ \\ 
\textit{RMST setting} \\ \\
Consider the decomposition
\begin{align}\label{eq:empRMSTDecomp}
    &\hat{\varphi}_a(O) - \varphi_a(O) \nonumber \\
    =& (\hat{\tau}_a(X) - \tau_a(X)) \nonumber \\
    &- \left(\frac{\mathbb{1}(A=a)}{\hat{\pi}(a\mid X)}\int_0^t\frac{\hat{H}(s, t\mid A, X)}{\hat{S}(s\mid A, X)\hat{S}_c(s\mid A, X)} \dd \hat{M}(s\mid A, X) \right. \nonumber \\
    &\quad \quad \quad - \left. \frac{\mathbb{1}(A=a)}{\pi(a\mid X)}\int_0^t \frac{H(s, t\mid A, x)}{S(s\mid A, X)S_c(s\mid A, X)} \dd M(s\mid A, X) \right) \nonumber \\
    =& \left( \int_0^t \hat{S}(s\mid a, X) - S(s\mid a, X) \dd u \right)  \nonumber \\
    &- \mathbb{1}(A=a)\left(\int_0^t \frac{\hat{H}(s, t\mid a, X)}{\hat{S}(s\mid a, X)\hat{g}(s\mid a, X)} - \frac{H(s,t\mid a, X)}{S(s\mid a, X)g(s\mid a, X)} \dd N(s)\right) \nonumber \\
    &- \mathbb{1}(A=a)\left(\int_0^t \frac{\hat{H}(s,t\mid a, X)\mathbb{1}(\tilde{T}\geq s)}{\hat{S}(s\mid a, X)\hat{g}(s\mid a, X)}\hat{\Lambda}(\dd s \mid a, X) - \int_0^t \frac{H(s,t\mid a, X)\mathbb{1}(\tilde{T}\geq s)}{S(s \mid a, X)g(s\mid a, X)} \Lambda(\dd s \mid a, X)\right).
\end{align}
Since the structure is similar to the survival function setting, the arguments will be similar too. We will again consider each term in turn. For first term we have 
\begin{align*}
    \int_0^t \hat{S}(s\mid a, X) - S(s\mid a, X) \dd u \ \leq \ t\sup_{s<t}\left| \hat{S}(s\mid a, X) - S(s\mid a, X) \right|,
\end{align*}
which is $o_p(1)$ in $L_2(P)$ by assumption \ref{ass5}. For the second term, note that $H(s,t\mid a, X) \leq t$, and
\begin{align*}
    \left| \hat{H}(s,t\mid a, X) - H(s,t\mid a, X) \right|  =&  \int_s^t \hat{S}(s\mid a, X) - S(s\mid a, X) \dd u \\
    \leq &  t\sup_{s<t}\left| \hat{S}(s\mid a, X) - S(s\mid a, X) \right|,
\end{align*}
and
\begin{align*}
    \frac{H(s,t\mid a, X)}{S(s\mid a, X)} = \int_s^t \frac{S(u\mid a, X)}{S(s\mid a, X)} \dd u \ \leq \ \int_s^t \dd u \ \leq \ t. 
\end{align*}
Hence, replacing $S(t\mid a,X)$ and $\hat{S}(t\mid a,X)$ in the derivations from the survival function setting with $H(s, t\mid a,X)$ and $\hat{H}(s, t\mid a,X)$, gives that the second term in \eqref{eq:empRMSTDecomp} is $o_p(1)$ in $L_2(P)$ by assumption \ref{ass1} and \ref{ass5} by similar arguments as in the survival function setting. For the third term in \eqref{eq:empRMSTDecomp}, we use the same strategy as in the survival function setting as described in \textcite{westling}, but now extended to the RMST setting. Define 
$$
K_2(s,t\mid a,x) = \frac{H(s,t\mid, a,x)}{S(s\mid a,x)} - \int_s^t\dd u,
$$
and let $\hat{K}_2$ be defined accordingly, with $\hat{S}$ in place of $S$ and $\hat{H} = \int \hat{S}\dd s$. Then, by the backward equation (\cite{gill}, Theorem 5)
\begin{align*}
    K_2(s,t\mid a,x) =& \int_s^t \frac{S(u\mid a,x)}{S(s\mid a,x)} \dd u - \int_s^t \dd u \\
    =& \int_s^t \left(1 - \int_s^u \frac{S(u\mid a,x)}{S(w\mid a,x)} \Lambda(\dd w\mid a,x) \right) - \int_s^t \dd u \\
    =& -\int_s^t \int_s^u \frac{S(u\mid a,x)}{S(w\mid a,x)} \Lambda(\dd w\mid a,x) \dd u \\
    =& -\int_s^t \int_w^t \frac{S(u\mid a,x)}{S(w\mid a,x)} \dd u \Lambda(\dd w\mid a,x) \\
    =& - \int_s^t \frac{H(w,t\mid a,x)}{S(w\mid a,x)} \Lambda(\dd w\mid a,x).
\end{align*}
Hence $K_2(\dd s, t\mid a,x) = \frac{H(s,t\mid a,x)}{S(s\mid a,x)} \Lambda(\dd s\mid a,x)$, and similarly for $\hat{K}_2$. Now, we can write the third term in \eqref{eq:empRMSTDecomp} as (again, dropping $\mathbb{1}(A=a)$)
\begin{align*}
    &\left(\int_0^t \frac{\hat{H}(s,t\mid a, X)\mathbb{1}(\tilde{T}\geq s)}{\hat{S}(s\mid a, X)\hat{g}(s\mid a, X)}\hat{\Lambda}(\dd s \mid a, X) - \int_0^t \frac{H(s,t\mid a, X)\mathbb{1}(\tilde{T}\geq s)}{S(s \mid a, X)g(s\mid a, X)} \Lambda(\dd s \mid a, X)\right)^2 \nonumber \\
    = & \left(\int_0^{t\wedge \tilde{T}} \frac{1}{\hat{g}(s\mid a, X)}\hat{K}_2(\dd s \mid a, X) - \int_0^{t\wedge \tilde{T}} \frac{1}{g(s\mid a, X)} K_2(\dd s \mid a, X)\right)^2. \\
    = & \left(\int_0^{t\wedge \tilde{T}} \left( \frac{1}{\hat{g}(s\mid a, X)} - \frac{1}{g(s\mid a, X)} \right) \hat{K}_2(\dd s \mid a, X) \right. \nonumber \\
    & \left. \phantom{\int_0^{t\wedge \tilde{T}} \int_0^{t\wedge \tilde{T}} \int_0^{t\wedge \tilde{T}} } + \int_0^{t\wedge \tilde{T}} \frac{1}{g(s\mid a, X)} \left[ \hat{K}_2(\dd s \mid a, X) - K_2(\dd s \mid a, X) \right]\right)^2.
\end{align*}
Hence, by the same arguments used for in the survival function setting, it follows the third term in \eqref{eq:empRMSTDecomp} is $o_p(1)$ in $L_2(P)$-norm, which concludes the proof.
\end{proof}

\subsection{Proof of Theorem \ref{asTheta}}
\subsubsection{Remainder term}
Let $\phi_{\Theta_l}(\nu)$ be the uncentered version of the EIF $\tilde{\psi}_{\Theta_l}$ at nuisance parameter $\nu$. We will use a decomposition of the remainder term $P\phi_{\Theta_l}(\hat{\nu}) - \Theta_l$ from \textcite{hines}. Observe that $\Theta_l = P\{\tau - \tau_l\}^2$. Then
\begin{align} 
    P\phi_{\Theta_l}(\hat{\nu}) - \Theta_l &= P\{ (\hat{\tau}_l - \varphi(\hat{\nu}))^2 - (\hat{\tau} - \varphi(\hat{\nu}))^2 - (\tau - \tau_l)^2\} \nonumber \\
    &= P\{(\hat{\tau}_l - \varphi(\hat{\nu}))^2 - (\hat{\tau} - \varphi(\hat{\nu}))^2 - \tau^2 - \tau_l^2 + 2\tau_l^2 \} \nonumber \\
    &= P\{ (\hat{\tau}_l - \tau_l)^2 - (\hat{\tau} - \tau)^2 + 2\hat{\tau}_l\tau_l - 2\hat{\tau}\tau + 2(\hat{\tau} - \hat{\tau}_l)\hat{\varphi} \} \nonumber \\
    &= \norm{\hat{\tau}_l - \tau_l}^2 - \norm{\hat{\tau} - \tau}^2 + P\{ 2\hat{\tau}_l\tau_l - 2\hat{\tau}_l\tau + 2(\hat{\tau} - \hat{\tau}_l)(\hat{\varphi} - \tau) \} \nonumber \\
    &= \norm{\hat{\tau}_l - \tau_l}^2 - \norm{\hat{\tau} - \tau}^2 + 2P\{ (\hat{\tau} - \hat{\tau}_l)(\hat{\varphi} - \tau) \} \nonumber \\
    &\leq o_p(n^{-1/2}) + 2KP(\varphi(\hat{\nu}) - \tau) \nonumber \\
    &\leq o_p(n^{-1/2}) \nonumber
\end{align}
for some $K\geq 0$, where the second and fifth equality is due to iterated expectation, the first inequality follows from assumption \ref{ass2} and \ref{ass3} the fact that $\hat{\tau}(X) - \hat{\tau}_l(X)$ is bounded almost surely and the second inequality follows from Lemma \ref{rem:ate}.

\subsubsection{Empirical process term}
We need to show that $\norm{\phi_{\Theta_l}(\hat{\nu}) - \phi_{\Theta_l}(\nu)} = o_p(1)$. Consider the decomposition given in \textcite{hines}
\begin{align*}
    \phi_{\Theta_l}(\hat{\nu}) - \phi_{\Theta_l}(\nu) &= (\hat{\tau}_l - \tau_l)^2 - (\hat{\tau} - \tau)^2 + 2(\varphi - \tau_l)(\tau_l - \hat{\tau}_l) - 2(\varphi - \tau)(\tau - \hat{\tau}) + 2(\hat{\varphi}- \varphi)(\hat{\tau} - \hat{\tau}_l) \\
    &= \sum_{i=1}^5 a_i 
\end{align*}
so that $\norm{\phi(\hat{\nu}) - \phi(\nu)} \leq \sum_{i=1}^5\norm{a_i}$. We will treat each term separately. \\ \\
$(a_1)$: From \ref{ass2} we have that $\norm{(\hat{\tau}_l - \tau_l)^2} = o_p(1)$ since $x \mapsto x^2$ is continuous. \\ \\
$(a_2)$:
Same argument as in $(a_1)$. \\ \\
$(a_3)$: Consider the survival case. Then the following bound holds almost surely:
\begin{align*}
    &(\varphi(O) - \tau_l(x))^2 \\
    =& \left(\tau(x) - \tau_l(x) - \left(\frac{\mathbb{1}(A=1)}{\pi(1\mid x)} - \frac{\mathbb{1}(A=0)}{\pi(0\mid x)} \right)\int_0^t \frac{S(t\mid A, x)}{S(s\mid A, x)S_c(s\mid A, x)} \dd M(s\mid A, x) \right)^2 \\ 
    \leq & 2(\tau(x) - \tau_l(x))^2 + 2\left( \eta^{-1} \int_0^t \frac{S(t\mid A, x)}{S(s\mid A, x)} \dd N(s) - \eta^{-1}\int_0^{t} \frac{S(t\mid A, x)}{S(s\mid A, x)}\mathbb{1}(\tilde{T} \geq s) \dd \Lambda(s\mid A, x) \right)^2 \\
     \leq & 2K^2 + 4(\eta^{-1})^2 + 4\left(\eta^{-1}\int_0^{t} \frac{S(t\mid A, x)}{S(s\mid A, x)} \dd \Lambda(s\mid A, x) \right)^2 \\ 
    =& 2K^2 + 4(\eta^{-1})^2 + 4(\eta^{-1})^2\left(1 - S(t\mid a, x) \right)^2 \\
    \leq & 2K^2 + 8\eta^{-2}
\end{align*}
where the first inequality follows \ref{ass1}, the second inequality from $\tau(x) - \tau_l(x)$ being bounded almost surely and that $\frac{S(t\mid a,x)}{S(s\mid a,x)} \leq 1$, $s \leq t$. The second equality is due the backward equation (theorem 5, \cite{gill}), realising that $\frac{S(t\mid a,x)}{S(s\mid a,x)} = \prodi_{]s, t]}\left(1 - d\Lambda(u\mid a, x)\right)$. 

Now consider the RMST setting. Start by observing that 
$$\frac{H(s,t\mid a, x)}{S(s\mid a, x)} = \int_s^t\frac{S(u \mid a, x)}{S(s\mid a, x)} \dd u \leq t - s \leq t$$ 
and 
\begin{align*}
    \int_0^t \frac{H(s,t\mid a, x)}{S(s\mid a, x)} \dd \Lambda(s\mid, a, x) = \int_0^t \int_0^u \frac{S(u \mid a, x)}{S(s\mid a, x)} \dd \Lambda(s\mid, a, x) \dd u = \int_0^t S(u\mid a,x) - 1 \dd u
\end{align*}
by the backward equation. Then by calculations similar to the once from the survival setting we have 
\begin{align*}
    &(\varphi(O) - \tau_l(x))^2 \\
    \leq & 2(\tau(x) - \tau_l(x))^2 + 2\left( \eta^{-1} \int_0^t \frac{H(s, t\mid A, x)}{S(s\mid A, x)} \dd N(s) - \eta^{-1}\int_0^{t} \frac{H(s,t\mid A, x)}{S(s\mid A, x)}\mathbb{1}(\tilde{T} \geq s) \dd \Lambda(s\mid A, x) \right)^2 \\
     \leq & 2K^2 + 4(\eta^{-1}t)^2 + 4\left(\eta^{-1}\int_0^{t} S(u\mid a, x) - 1 \dd u \right)^2 \\ 
    =& 2K^2 + 4\eta^{-2}t^2 + 4\eta^{-2}(t^2  + t^2) \\
    =& 2K^2 + 12\eta^{-2}t^2.
\end{align*}
Letting $C(\eta, t) = \max\{2K^2 + 8\eta^{-2}, 2K^2 + 12\eta^{-2}t^2\}$, \ref{ass2} gives that 
\begin{align*}
    \norm{a_3} \leq  \sqrt{C(\eta, t)}\norm{\tau_l - \hat{\tau}_l} = o_p(1).
\end{align*} \\ \\
$(a_4):$ Noticing that 
$$
(\varphi(O) - \tau(x))^2 = \left( - \left(\frac{\mathbb{1}(A=1)}{\pi(1\mid x)} - \frac{\mathbb{1}(A=0)}{\pi(0\mid x)} \right)\int_0^t \frac{S(t\mid A, x)}{S(s\mid A, x)S_c(s\mid A, x)} \dd M(s\mid A, x) \right)^2
$$
in the survival setting and 
$$
(\varphi(O) - \tau(x))^2 = \left( - \left(\frac{\mathbb{1}(A=1)}{\pi(1\mid x)} - \frac{\mathbb{1}(A=0)}{\pi(0\mid x)} \right)\int_0^t \frac{H(s,t\mid A, x)}{S(s\mid A, x)S_c(s\mid A, x)} \dd M(s\mid A, x) \right)^2
$$
in the RMST setting, the calculations from $(a_3)$ gives that
$$
    \norm{a_4} \leq  \sqrt{C(\eta, t)}\norm{\tau - \hat{\tau}} = o_p(1)
$$
with $C(\eta, t) = \max\{4\eta^{-2}, 8\eta^{-2}t^2\}$. \\\\
$(a_5):$ By Lemma \ref{emp:ate} and assumption \ref{ass6}, $\norm{a_5} = o_p(1)$.

\subsection{Proof of Theorem \ref{asGammaAndChi}}
\subsubsection{Remainder term related to $\hat{\Gamma}_j^{CF}$}
Let $\phi_{\Gamma_j}(\nu)$ be the uncentered version of the EIF $\tilde{\psi}_{\Gamma_j}$ at nuisance parameter $\nu$. Consider the decomposition
\begin{align*}
    & P\{ \phi_{\Gamma_j}(\hat{\nu}) - \phi_{\Gamma_j}(\nu) \} \\
    =& \E\left\{ [\varphi(\hat{\nu})(O) - \hat{\tau}_j(X)][X_j - \hat{E}_n^j(X_{-j})] - [\varphi(\nu)(O) - \tau_j(X)][X_j - E(X_j \mid X_{-j})] \right\} \\
    =& \E\left\{ [\varphi(\hat{\nu})(O) - \varphi(\nu)(O)]X_j + [\tau_j(X) - \hat{\tau}_j(X)]X_j \right. \\
    &- \left. \varphi(\hat{\nu})(O)\hat{E}_n^j(X_{-j}) + \hat{\tau}_j(X)\hat{E}_n^j(X_{-j})  + \varphi(\nu)(O)E(X_j \mid X_{-j}) - \tau_j(X)E(X_j \mid X_{-j}) \right\} \\
    =& E\left\{ [\hat{\varphi}(\nu)(O) - \varphi(\nu)(O)][X_j - \hat{E}_n^j(X_{-j})] \right. \\
    &- [\hat{\tau}_j(X) - \tau_j(X)][X_j - \hat{E}_n^j(X_{-j})] \\
    &-\left. [\hat{E}_n^j(X_{-j}) - E(X_j\mid X_{-j})][\varphi(\nu)(O) - \tau_j(X)]  \right\}.
\end{align*}
We note that 
\begin{align*}
\E\{\varphi(\nu)(O) - \tau_j(X)\} = \E\{ E(\tau(X)\mid X_{-j}) + E(M\mid A, X) - \tau_j(X) \} = 0
\end{align*}
by iterated expectation, where $M$ is the martingale integral in the expression of $\varphi$, which is itself a martingale conditional on $A$ and $X$. Thus, by iterated expectation and assumption \ref{assB}, the remainder term is given by
\begin{align*}
    & E\left\{ [\hat{\varphi}(\nu)(O) - \varphi(\nu)(O)][X_j - \hat{E}_n^j(X_{-j})] - [\hat{\tau}_j(X) - \tau_j(X)][X_j - \hat{E}_n^j(X_{-j})] \right\} \\
    \leq & \  \sqrt{\delta} | P \{  \varphi(\hat{\nu}) - \tau \} | + \norm{\hat{\tau}_j - \tau_j}\norm{\hat{E}_n^j - E(\cdot \mid X_{-j})} \\
    = & \ o_p(n^{-1/2}) 
\end{align*}
Here, the inequality is given by Cauchy-Schwarz, the triangle inequality and assumption \ref{assb2}. The equality is given by assumption \ref{ass3}, \ref{assb1} and Lemma \ref{rem:ate}.
\subsubsection{Empirical process term related to $\hat{\Gamma}_j^{CF}$}
Consider again the decomposition from the remainder term:
\begin{align*}
    \phi_{\Gamma_j}(\hat{\nu}) - \phi_{\Gamma_j}(\nu) =& [\hat{\varphi}(\nu)(O) - \varphi(\nu)(O)][X_j - \hat{E}_n^j(X_{-j})] \\
    &- [\hat{\tau}_j(X) - \tau_j(X)][X_j - \hat{E}_n^j(X_{-j})] \\
    &- [\hat{E}_n^j(X_{-j}) - E(X_j\mid X_{-j})][\varphi(\nu)(O) - \tau_j(X)].
\end{align*}
Thus, we need to bound each term in $L_2(P)$. For the first term we have
\begin{align*}
    \E\left\{[\hat{\varphi}(\nu)(O) - \varphi(\nu)(O)]^2[X_j - \hat{E}_n^j(X_{-j})]^2\right\} \leq \delta^2 \norm{\varphi(\hat{\nu}) - \varphi(\nu)}^2 = o_p(1) 
\end{align*}
by Lemma \ref{emp:ate} and assumption \ref{assb2}. Consistency of the second term follows from the consistency of $\hat{\tau}$ together with assumption \ref{assb2}, and consistency of the third term follows from consistency of $\hat{E}_n^j$ together with the bound of $P\{\varphi - \tau_j\}^2$ calculated in $(a3)$ in the empirical process section of the proof of Theorem \ref{asTheta}.

\subsubsection{Remainder term related to $\hat{\chi}_j^{CF}$}
We note that 
\begin{align*}
    &\E\left\{\left[ X_j - E(X_j\mid X_{-j}) \right]^2 \right\} \\
    =& \ \E\left\{ X_j^2 + E(X_j\mid X_{-j})^2 \right\} - 2\E\left\{X_jE(X_j\mid X_{-j}) \right\} \\
    =& \ \E\left\{ X_j^2 + E(X_j\mid X_{-j})^2 \right\} - 2\E\left\{E(X_j\mid X_{-j})^2 \right\} \\
    =& \ \E\left\{ X_j^2 - E(X_j\mid X_{-j})^2 \right\} 
\end{align*}
by iterated expectation. Hence
\begin{align*}
    & P\left\{ \phi_{\chi_j}(\hat{\nu}) - \phi_{\chi_j}(\nu) \right\} \\
    =&  \E\left\{ \phi_{\chi_j}(\hat{\nu}) - X_j^2 + E(X_j\mid X_{-j})^2 \right\} \\
    =&  \E\left\{ X_j^2 + \hat{E}_n^{j2}(X_{-j}) - 2X_j\hat{E}_n^{j2}(X_{-j}) - X_j^2 + E(X_j\mid X_{-j})^2 \right\} \\
    =&  \E\left\{ \left[ \hat{E}_n^{j}(X_{-j}) - E(X_j\mid X_{-j}) \right]^2 + 2\hat{E}_n^{j}(X_{-j})E(X_j\mid X_{-j}) - 2X_j\hat{E}_n^{j2}(X_{-j}) \right\} \\
    =& \norm{ \hat{E}_n^{j} - E(\cdot \mid X_{-j}) }^2 \\
    =& o_p(n^{-1/2}),
\end{align*}
where the fourth equality is due to iterated expectation and the last equality is given by assumption \ref{assb1}.

\subsubsection{Empirical process term related to $\hat{\chi}_j^{CF}$}
Consider 
\begin{align*}
    &\phi_{\chi_j}(\hat{\nu}) - \phi_{\chi_j}(\nu) \\
    =& \left[ \hat{E}_n^j(X_{-j}) - E(X_j\mid X_{-j}) \right]^2 + 2\left[ E(X_j\mid X_{-j}) - X_j \right]\left[ \hat{E}_n^j(X_{-j}) - E(X_j\mid X_{-j}) \right],
\end{align*}
where the consistency of $\hat{E}_n^j$ gives consistency in $L_2(P)$ of the first term. For the second term, note that 
$$E( [ E(X_j\mid X_{-j}) - X_j ]^2 \mid X_{-j} ) = \var(X_j\mid X_{-j})$$
and hence 
\begin{align*}
    & E\left\{ 4\left[ E(X_j\mid X_{-j}) - X_j \right]^2\left[ \hat{E}_n^j(X_{-j}) - E(X_j\mid X_{-j}) \right]^2 \right\} \\
    = & 4E\left\{ \var(X_j\mid X_{-j}) \left[ \hat{E}_n^j(X_{-j}) - E(X_j\mid X_{-j}) \right]^2 \right\} \\
    \leq & 4 K \norm{\hat{E}_n^j- E(\cdot\mid X_{-j})}^2 \\
    = & o_p(1)
\end{align*}
for some $K>0$, by assumption \ref{assb1} and boundedness of $\var(X_j\mid X_{-j})$, which gives the result.

\subsection{Consistency of cross-fitted variance estimators}\label{app:consVar}
\begin{proof}[\textbf{Proof of Lemma \ref{lem:SECF}}]
    The first claim, \eqref{lem:SECF1}, is given by the functional delta method (\cite{vaart}, ch. 25.7) and hence $n^{1/2}(\hat{\Psi}^{CF}- \Psi) \overset{D}{\rightarrow} \mathcal{N}(0, P\tilde{\psi}^2)$. For the claim \eqref{lem:SECF2} we note that it is suffices to show that
    $$
    \mathbb{P}_n^k \tilde{\psi}(\hat{\psi}_i^{CF}, \hat{\nu}_{i, -k})^2 \overset{P}{\longrightarrow} P\tilde{\psi}(\psi_i, \nu_i)^2
    $$
    for each $k$, since $K$ is assumed finite and not depending on $n$. We note that 
    \begin{align}\label{decomp:varCF1}
        \mathbb{P}_n^k \tilde{\psi}(\hat{\psi}_i^{CF}, \hat{\nu}_{i, -k})^2 
        = \hat{\psi}_i^{2, CF} + \mathbb{P}_n^k \varphi_i(\hat{\nu}_{i, -k})^2 - 2 \hat{\psi}_i^{CF} \mathbb{P}_n^k \varphi_i(\hat{\nu}_{i, -k})
    \end{align}
    and that $\hat{\psi}_i^{2, CF} \overset{P}{\rightarrow} \psi_i^2$ by the continuous mapping theorem, since $\hat{\psi}_i^{CF} \overset{P}{\rightarrow} \psi_i$ by the assumption of asymptotic linearity. Hence, by the continuous mapping theorem, the result follows if each of the $\mathbb{P}_n^k$-sums in the above display are consistent. Observe that
    \begin{align*}
        \mathbb{P}_n^k\varphi_i(\hat{\nu}_{i, -k}) - P\varphi(\nu) = (\mathbb{P}_n^k - P)(\varphi_i(\hat{\nu}_{i, -k}) - \varphi(\nu)) + P(\varphi_i(\hat{\nu}_{i, -k} - \varphi(\nu)) + (\mathbb{P}_n^k - P)\varphi_i(\nu).
    \end{align*}
    The first term above is $o_p(n^{-1/2})$ by Lemma 2 in the supplementary material of \textcite{kennedy2020}, since $\norm{\varphi(\hat{\nu}) - \varphi(\nu)} = o_p(1)$ by assumption. The second term is $o_p(1)$ since $P(\varphi_i(\hat{\nu}_{i, -k}) - \varphi(\nu)) \leq \norm{\varphi_i(\hat{\nu}_{i, -k}) - \varphi(\nu)} = o_p(1)$ by assumption and the third term is $o_p(1)$ by the law of large numbers, since $\E\{\varphi_i(\nu_i)(O_i)\} = \psi_i < \infty$. Thus $ \mathbb{P}_n^k \varphi_i(\hat{\nu}_{i, -k})$ converges to $P\varphi_i(\nu_i)$ in probability. For the sum $ \mathbb{P}_n^k \varphi_i(\hat{\nu}_{i, -k})^2$ in \eqref{decomp:varCF1}, we note that since $\norm{\varphi_i(\hat{\nu}_{i, -k}) - \varphi_i(\nu_i)} = o_p(1)$, by assumption, the continuous mapping theorem (for metric spaces, see e.g. \cite{vaart}, Theorem 18.11) gives that $\norm{\varphi_i(\hat{\nu}_{i, -k})^2 - \varphi_i(\nu_i)^2} = o_p(1)$. Combined with the fact that $\E\{\varphi_i(\nu_i)(O_i)^2\} < \infty$, we can use the same arguments given for the consistency of $\mathbb{P}_n^k \varphi_i(\hat{\nu}_{i, -k})$, to show that $\mathbb{P}_n^k \varphi_i(\hat{\nu}_{i, -k})^2$ converges in probability to $P\varphi_i(\nu_i)^2$. Collecting the results, the continuous mapping theorem now gives the following convergence for \eqref{decomp:varCF1}: 
    \begin{align*}
        \hat{\psi}_i^{2, CF} + \mathbb{P}_n^k \varphi_i(\hat{\nu}_{i, -k})^2 - 2 \hat{\psi}_i^{CF} \mathbb{P}_n^k \varphi_i(\hat{\nu}_{i, -k}) \overset{P}{\longrightarrow} \psi_i^2 + P \varphi_i(\nu_{i})^2 - 2 \psi_i P \varphi_i(\nu_{i}) = P\tilde{\psi}_i(\psi_i, \nu_i)^2,
    \end{align*}
    and hence, the claim, \eqref{lem:SECF2}, follows. 

    As for the second claim, the last claim, \eqref{lem:SECF3}, follows if      
    $$
    \mathbb{P}_n^k \tilde{\psi}(\hat{\psi}_1^{CF}, \hat{\psi}_2^{CF}, \hat{\nu}_{1, -k}, \hat{\nu}_{2, -k})^2 \overset{P}{\longrightarrow} P\tilde{\psi}(\psi_1, \psi_2, \nu_1, \nu_2)^2
    $$
    for each $k$, since $K$ is assumed finite and not depending on $n$. Observe that
    \begin{align*}
       &\mathbb{P}_n^k \tilde{\psi}(\hat{\psi}_1^{CF}, \hat{\psi}_2^{CF}, \hat{\nu}_{1, -k}, \hat{\nu}_{2, -k})^2 \\
       =& \mathbb{P}_n^k \frac{1}{\left(\hat{\psi}_2^{CF}\right)^2}\left(\varphi_1(\hat{\nu}_{1, -k}) - \hat{\psi}_1^{CF} - \frac{\hat{\psi}_1^{CF}}{\hat{\psi}_2^{CF}} \left(\varphi_2(\hat{\nu}_{2, -k}) - \hat{\psi}_2^{CF} \right) \right)^2 \\
       =& \frac{1}{\left(\hat{\psi}_2^{CF}\right)^2} \mathbb{P}_n^k \left(\varphi_1(\hat{\nu}_{1, -k}) - \hat{\psi}_1^{CF} \right)^2 + \left(\hat{\Psi}^{CF}\right)^2\mathbb{P}_n^k \left(\varphi_2(\hat{\nu}_{2, -k}) - \hat{\psi}_2^{CF} \right)^2 \\
       &\phantom{\frac{1}{\left(\hat{\psi}_2^{CF}\right)^2}} - 2\hat{\Psi}^{CF}\mathbb{P}_n^k \left(\varphi_1(\hat{\nu}_{1, -k}) - \hat{\psi}_1^{CF} \right) \left(\varphi_2(\hat{\nu}_{2, -k}) - \hat{\psi}_2^{CF} \right).    
    \end{align*}
    We will consider each term in the above display separately. In the proof of \eqref{lem:SECF2}, we showed consistency of the $\mathbb{P}_n^k$-sums in the first two terms, and the continuous mapping theorem gives that $\frac{1}{\left(\hat{\psi}_2^{CF}\right)^2} \overset{P}{\rightarrow} \frac{1}{\psi_2^2}$ and $\left(\hat{\Psi}^{CF}\right)^2 \overset{P}{\rightarrow} \Psi^2$. The continuous mapping theorem then gives consistency of the first two terms. For the last term, we use the decomposition
    \begin{align*}
    &2\hat{\Psi}^{CF}\mathbb{P}_n^k \left(\varphi_1(\hat{\nu}_{1, -k}) - \hat{\psi}_1^{CF} \right) \left(\varphi_2(\hat{\nu}_{2, -k}) - \hat{\psi}_2^{CF} \right) \\
    =&  2\hat{\Psi}^{CF}\left(\hat{\psi}_1^{CF} \hat{\psi}_2^{CF} - \hat{\psi}_1^{CF}\mathbb{P}_n^k \varphi_2(\hat{\nu}_{2,-k}) - \hat{\psi}_2^{CF}\mathbb{P}_n^k \varphi_1(\hat{\nu}_{1,-k}) + \mathbb{P}_n^k \varphi_1(\hat{\nu}_{1,-k})\varphi_2(\hat{\nu}_{2,-k}) \right).   
    \end{align*}
    Consistency of the three first terms inside the parenthesis are shown in the in the proof of \eqref{lem:SECF2} and hence, we only need to show consistency of the last term. Here, it suffices to show consistency of $\varphi_1(\hat{\nu}_{1,-k})\varphi_2(\hat{\nu}_{2,-k})$ together with $P\varphi_1(\nu_1)\varphi_2(\nu_2) < \infty$, from which the result follows by the same arguments used to show consistency of $\mathbb{P}_n^k\varphi_i(\hat{\nu}_{i,-k}).$ For the latter, Cauchy-Schwarz and the triangle inequality gives
    $$
    P\varphi_1(\nu_1)\varphi_2(\nu_2) \leq \norm{\tilde{\psi}_1 + \psi_1}\norm{\tilde{\psi}_2 + \psi_2} \leq \left(\norm{\tilde{\psi}_1} + \norm{\psi_1}\right)\left(\norm{\tilde{\psi}_2} + \norm{\psi_2} \right) < \infty.
    $$
    By Theorem 18.10 in \textcite{vaart} $(\varphi_1(\hat{\nu}_{1,-k}),\varphi_2(\hat{\nu}_{2,-k})) \overset{P}{\rightarrow} (\varphi_1(\nu_1),\varphi_2(\nu_2))$ since $\varphi_i(\hat{\nu}_{i,-k}) \overset{P}{\rightarrow} \varphi_i(\nu_i)$ by assumption. Then, the continuous mapping theorem (18.11 in \cite{vaart}) gives that $\varphi_1(\hat{\nu}_{1,-k})\varphi_2(\hat{\nu}_{2,-k}) \overset{P}{\rightarrow} \varphi_1(\nu_1)\varphi(\nu_2)$, and the result follows.
\end{proof}

\clearpage 

\section{Supplementary Material}

\section{Additional simulation results for $\Omega_1$}
\begin{figure}[h!] 
    \centering
    \begin{subfigure}[c]{0.9\linewidth}
    \centering
    \includegraphics[scale = 0.65]{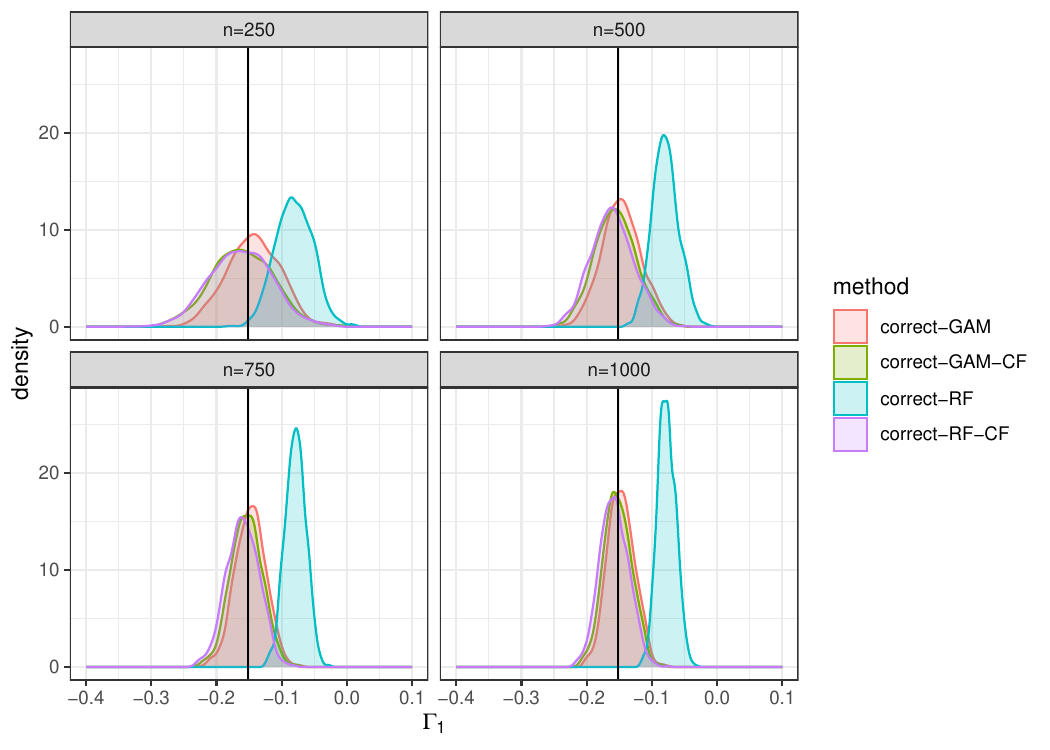}
    \caption{Correctly specified $\Lambda$, $\Lambda_c$ and $\pi$}
    \label{fig:simGammaCorrect}
    \end{subfigure}

    \begin{subfigure}[c]{0.9\linewidth}
    \centering
    \includegraphics[scale = 0.65]{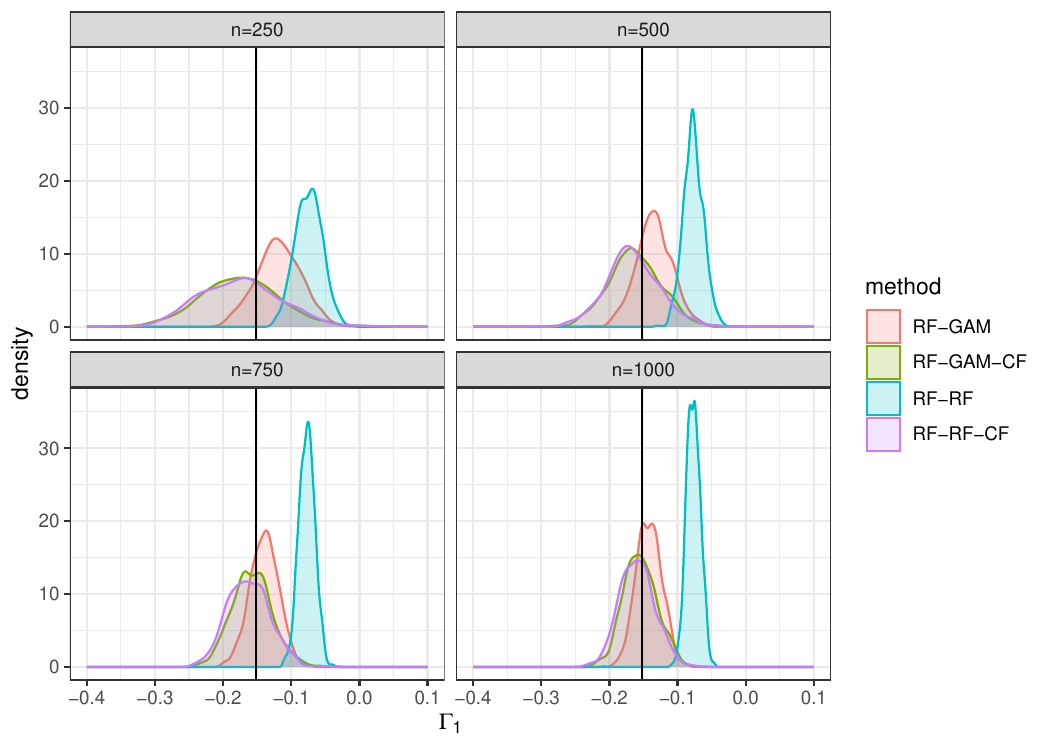}
    \caption{Flexible estimation of $\Lambda$, $\Lambda_c$ and $\pi$}
    \label{fig:simGammaRF}
    \end{subfigure}
    \caption{Sampling distribution of estimators of $\Gamma_1$ in the survival function setting with varying nuisance estimators, with and without cross-fitting, across sample sizes $n=250,500,750,1000$. The abbreviations of the methods are read as follows: A-B-C, where A corresponds to the nuisance estimators $\Lambda$, $\Lambda_c$ and $\pi$, B corresponds to the nuisance estimators $\hat{E}_n$ and $\hat{E}^j_n$, and C corresponds to whether or not cross-fitting was used. Here, \textit{correct} corresponds to correctly specified Cox and logistic regression, RF corresponds to Random Forest, and GAM corresponds to a generalized additive model.}
\end{figure}

\begin{figure}[h!] 
    \centering
    \begin{subfigure}[c]{0.9\linewidth}
    \centering
    \includegraphics[scale = 0.65]{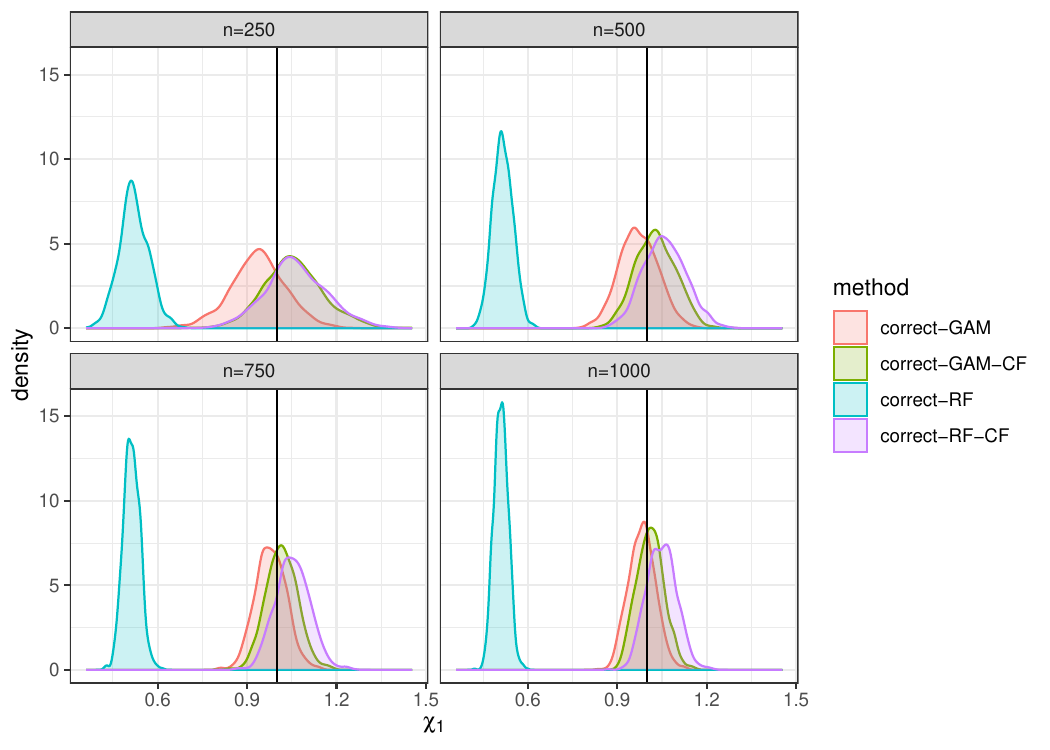}
    \caption{Correctly specified $\Lambda$, $\Lambda_c$ and $\pi$}
    \label{fig:simChiCorrect}
    \end{subfigure}

    \begin{subfigure}[c]{0.9\linewidth}
    \centering
    \includegraphics[scale = 0.65]{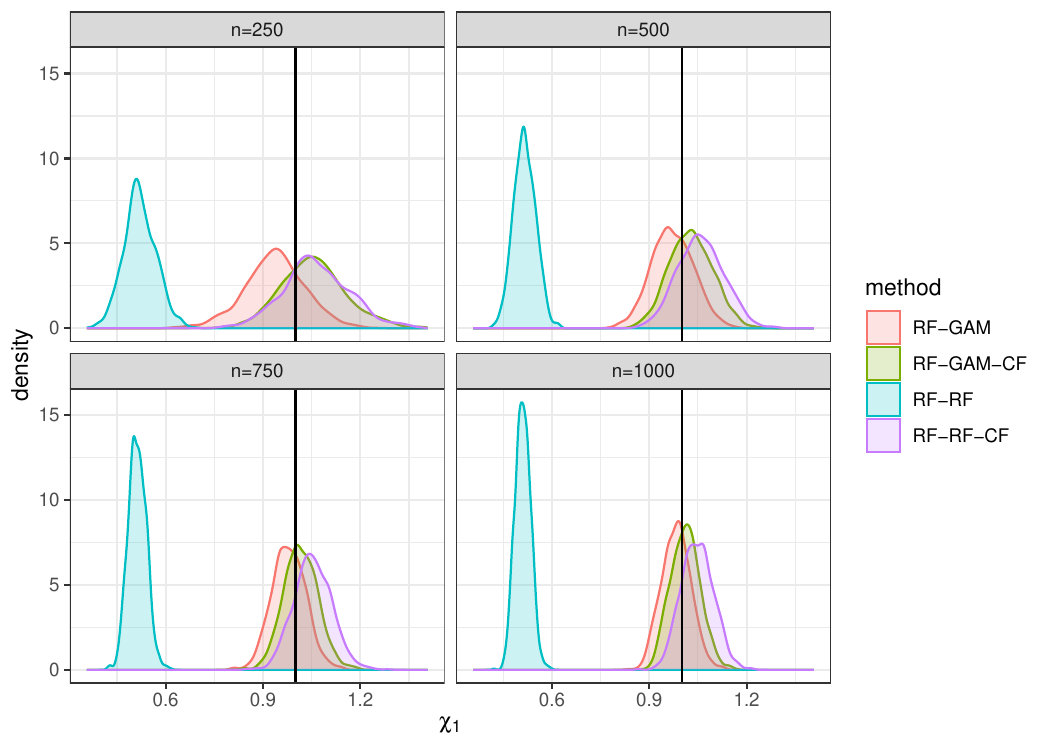}
    \caption{Flexible estimation of $\Lambda$, $\Lambda_c$ and $\pi$}
    \label{fig:simChiRF}
    \end{subfigure}
    \caption{Sampling distribution of estimators of $\chi_1$ in the survival function setting with varying nuisance estimators, with and without cross-fitting, across sample sizes $n=250,500,750,1000$. The abbreviations of the methods are read as follows: A-B-C, where A corresponds to the nuisance estimators $\Lambda$, $\Lambda_c$ and $\pi$, B corresponds to the nuisance estimators $\hat{E}_n$ and $\hat{E}^j_n$, and C corresponds to whether or not cross-fitting was used. Here, \textit{correct} corresponds to correctly specified Cox and logistic regression, RF corresponds to Random Forest, and GAM corresponds to a generalized additive model.}
\end{figure}

\end{document}